\DeclareMathAlphabet{\mathcal}{OMS}{cmsy}{m}{n}
\def\ps@pprintTitle{%
 \let\@oddhead\@empty
 \let\@evenhead\@empty
 \def\@oddfoot{\centerline{\thepage}}%
 \let\@evenfoot\@oddfoot}
\newcommand{\bbC}{\mathbb{C}}
\newcommand{\bbF}{\mathbb{F}}
\newcommand{\bbH}{\mathbb{H}}
\newcommand{\bbQ}{\mathbb{Q}}
\newcommand{\bbR}{\mathbb{R}}
\newcommand{\bbT}{\mathbb{T}}
\newcommand{\bbZ}{\mathbb{Z}}
\newcommand{\bfA}{\mathbf{A}}
\newcommand{\bfC}{\mathbf{C}}
\newcommand{\bfD}{\mathbf{D}}
\newcommand{\bfG}{\mathbf{G}}
\newcommand{\bfI}{\mathbf{I}}
\newcommand{\bfJ}{\mathbf{J}}
\newcommand{\bfM}{\mathbf{M}}
\newcommand{\bfP}{\mathbf{P}}
\newcommand{\bfS}{\mathbf{S}}
\newcommand{\bfu}{\mathbf{u}}
\newcommand{\bfU}{\mathbf{U}}
\newcommand{\bfv}{\mathbf{v}}
\newcommand{\bfV}{\mathbf{V}}
\newcommand{\bfx}{\mathbf{x}}
\newcommand{\bfy}{\mathbf{y}}
\newcommand{\bfZ}{\mathbf{Z}}
\newcommand{\bfone}{\boldsymbol{1}}
\newcommand{\bfzero}{\boldsymbol{0}}
\newcommand{\bfdelta}{\boldsymbol{\delta}}
\newcommand{\bfDelta}{\boldsymbol{\Delta}}
\newcommand{\bfGamma}{\boldsymbol{\Gamma}}
\newcommand{\bfphi}{\boldsymbol{\varphi}}
\newcommand{\bfPhi}{\boldsymbol{\Phi}}
\newcommand{\bfPsi}{\boldsymbol{\Psi}}
\newcommand{\calD}{\mathcal{D}}
\newcommand{\calE}{\mathcal{E}}
\newcommand{\calG}{\mathcal{G}}
\newcommand{\calJ}{\mathcal{J}}
\newcommand{\calK}{\mathcal{K}}
\newcommand{\calM}{\mathcal{M}}
\newcommand{\calN}{\mathcal{N}}
\newcommand{\calR}{\mathcal{R}}
\newcommand{\calS}{\mathcal{S}}
\newcommand{\calU}{\mathcal{U}}
\newcommand{\calV}{\mathcal{V}}
\newcommand{\calW}{\mathcal{W}}
\newcommand{\rmc}{\mathrm{c}}
\newcommand{\rmH}{\mathrm{H}}
\newcommand{\rmi}{\mathrm{i}}
\newcommand{\rms}{\mathrm{s}}
\newcommand{\rmT}{\mathrm{T}}
\newcommand{\Tr}{\operatorname{Tr}}
\newcommand{\Fro}{\mathrm{Fro}}
\newcommand{\dist}{\operatorname{dist}}
\newcommand{\rank}{\operatorname{rank}}
\newcommand{\Span}{\operatorname{span}}
\newcommand{\real}{\operatorname{Re}}
\newcommand{\imag}{\operatorname{Im}}
\newcommand{\TFF}{{\operatorname{TFF}}}
\newcommand{\ETF}{{\operatorname{ETF}}}
\newcommand{\ECTFF}{{\operatorname{ECTFF}}}
\newcommand{\EITFF}{{\operatorname{EITFF}}}
\newcommand{\abs}[1]{|{#1}|}
\newcommand{\bigparen}[1]{\bigl({#1}\bigr)}
\newcommand{\Bigparen}[1]{\Bigl({#1}\Bigr)}
\newcommand{\biggparen}[1]{\biggl({#1}\biggr)}
\newcommand{\bigbracket}[1]{\bigl[{#1}\bigr]}
\newcommand{\Bigbracket}[1]{\Bigl[{#1}\Bigr]}
\newcommand{\biggbracket}[1]{\biggl[{#1}\biggr]}
\newcommand{\set}[1]{\{{#1}\}}
\newcommand{\Bigset}[1]{\Bigl\{{#1}\Bigr\}}
\newcommand{\norm}[1]{\|{#1}\|}
\newcommand{\ip}[2]{\langle{#1},{#2}\rangle}
\newtheorem{theorem}{Theorem}[section]
\newtheorem{corollary}[theorem]{Corollary}
\theoremstyle{definition}
\newtheorem{definition}[theorem]{Definition}
\newtheorem{example}[theorem]{Example}
\newtheorem{remark}[theorem]{Remark}
\begin{document}
\begin{frontmatter}
\title{Harmonic Grassmannian codes}

\author[AFIT]{Matthew Fickus}
\ead{Matthew.Fickus@gmail.com}
\author[IA]{Joseph W.\ Iverson}
\author[SDSU]{John Jasper}
\author[OSU]{Dustin G.\ Mixon}

\address[AFIT]{Department of Mathematics and Statistics, Air Force Institute of Technology, Wright-Patterson AFB, OH 45433}
\address[IA]{Department of Mathematics, Iowa State University, Ames, IA 50011}
\address[SDSU]{Department of Mathematics and Statistics, South Dakota State University, Brookings, SD 57007}
\address[OSU]{Department of Mathematics, The Ohio State University, Columbus, OH 43210}

\begin{abstract}
An equi-isoclinic tight fusion frame (EITFF) is a type of Grassmannian code,
being a sequence of subspaces of a finite-dimensional Hilbert space of a given dimension with the property that the smallest spectral distance between any pair of them is as large as possible.
EITFFs arise in compressed sensing, yielding dictionaries with minimal block coherence.
Their existence remains poorly characterized.
Most known EITFFs have parameters that match those of one that arose from an equiangular tight frame (ETF) in a rudimentary, direct-sum-based way.
In this paper, we construct new infinite families of non-``tensor-sized" EITFFs in a way that generalizes the one previously known infinite family of them as well as the celebrated equivalence between harmonic ETFs and difference sets for finite abelian groups.
In particular, we construct EITFFs consisting of $Q$ planes in $\mathbb{C}^Q$ for each prime power $Q\geq 4$, of $Q-1$ planes in $\mathbb{C}^Q$ for each odd prime power $Q$, and of $11$ three-dimensional subspaces in $\mathbb{R}^{11}$.
The key idea is that every harmonic EITFF---one that is the orbit of a single subspace under the action of a unitary representation of a finite abelian group---arises from a smaller tight fusion frame with a nicely behaved ``Fourier transform."
Our particular constructions of harmonic EITFFs exploit the properties of Gauss sums over finite fields.
\end{abstract}

\begin{keyword}
equi-isoclinic \sep difference sets \sep Gauss sums \MSC[2020] 42C15
\end{keyword}
\end{frontmatter}

\section{Introduction}

Let $D$, $N$ and $R$ be integers with $1\leq R\leq D\leq NR$ and $N>1$.
For each $n=1,\dotsc,N$,
let $\bfPhi_n$ be any $D\times R$ complex isometry,
that is, any $D\times R$ complex matrix with $\bfPhi_n^*\bfPhi_n^{}=\bfI$,
meaning its columns form an orthonormal basis for some $R$-dimensional subspace $\calU_n$ of $\bbC^D$.
Any such sequence \smash{$\set{\bfPhi_n}_{n=1}^{N}$} satisfies a generalized \textit{Welch bound}~\cite{Welch74,ConwayHS96,StrohmerH03,DhillonHST08,CalderbankTX15}:
\begin{equation}
\setlength{\abovedisplayskip}{5pt}
\label{eq.Welch bound}
\smash{\max_{n_1\neq n_2}\norm{\bfPhi_{n_1}^*\bfPhi_{n_2}^{}}_2
\geq\tfrac1{\sqrt{R}}\max_{n_1\neq n_2}\norm{\bfPhi_{n_1}^*\bfPhi_{n_2}^{}}_\Fro
\geq\bigbracket{\tfrac{NR-D}{D(N-1)}}^{\frac12}.}
\end{equation}
Equality in the right-hand inequality of~\eqref{eq.Welch bound} holds if and only if $\set{\calU_n}_{n=1}^N$ is an \textit{equichordal tight fusion frame} (ECTFF) for $\bbC^D$,
namely a \textit{tight fusion frame} (TFF) for $\bbC^D$, meaning $\sum_{n=1}^N\bfPhi_n^{}\bfPhi_n^*=A\bfI$ for some $A>0$, that is \textit{equichordal} (EC) in the sense that $\norm{\bfPhi_{n_1}^*\bfPhi_{n_2}^{}}_\Fro$ is constant over all $n_1\neq n_2$.
Such $\set{\calU_n}_{n=1}^N$ are precisely those that achieve equality in Conway, Hardin and Sloane's \textit{simplex bound}~\cite{ConwayHS96},
and are \textit{chordal Grassmannian codes},
that is, sequences of $N$ members of the \textit{Grassmannian} that consists of all $R$-dimensional subspaces of $\bbC^D$ whose smallest pairwise \textit{chordal distance}
$\dist_\rmc(\calU_{n_1},\calU_{n_2})
:=(R-\norm{\bfPhi_{n_1}^*\bfPhi_{n_2}^{}}_\Fro^2)^{\frac12}$ is as large as possible.
Meanwhile, equality holds throughout~\eqref{eq.Welch bound} if and only if $\set{\calU_n}_{n=1}^{N}$ is an \textit{equi-isoclinic tight fusion frame} (EITFF) for $\bbC^D$,
namely a TFF for $\bbC^D$ that is \textit{equi-isoclinic} (EI),
meaning that the singular values $\set{\cos(\theta_{n_1,n_2,r})}_{r=1}^{R}$ of  $\bfPhi_{n_1}^*\bfPhi_{n_2}^{}$ are constant over all $n_1\neq n_2$ and all $r$.
Every EITFF is an ECTFF, and is moreover a Grassmannian code with respect to the \textit{spectral distance}
\smash{$\dist_\rms(\calU_{n_1},\calU_{n_2})
:=(1-\norm{\bfPhi_{n_1}^*\bfPhi_{n_2}^{}}_2^2)^{\frac12}$}~\cite{DhillonHST08}.

EITFFs are well-suited for \textit{block compressed sensing},
yielding isometries $\set{\bfPhi_n}_{n=1}^{N}$ with minimal \textit{block coherence},
and so providing the strongest guarantee via~\cite{EldarKB10} on the recovery of a sparse linear combination of members of $\set{\calU_n}_{n=1}^{N}$ using \textit{block orthogonal matching pursuit}, for example.
Because of this, a lot of recent work on EITFFs has focused on the \textit{existence problem}: for what $(D,N,R)$ does there exist an $\EITFF(D,N,R)$, namely an EITFF for $\bbC^D$ that consists of $N$ subspaces of dimension $R$?
Some basic facts here are as follows: if an $\EITFF(D,N,R)$ exists, then $R\leq D\leq NR$;
conversely, if $R=D$ or $D=NR$ then a trivial $\EITFF(D,N,R)$ exists;
every $\EITFF(D,N,R)$ with $D<NR$ has a \textit{Naimark complement} which is an $\EITFF(NR-D,N,R)$;
directly summing an $\EITFF(D_1,N,R_1)$ and an $\EITFF(D_2,N,R_2)$ with \smash{$\frac{D_1}{R_1}=\frac{D_2}{R_2}$} yields an $\EITFF(D_1+D_2,N,R_1+R_2)$.
Due to this last fact, $\EITFF(D,N,R)$ in which $D$ and $R$ are coprime are particularly noteworthy.
In particular, every $\EITFF(D,N,R)$ with $R=1$ equates to an $N$-vector \textit{equiangular tight frame} (ETF) for $\bbC^D$,
and these have been the subject of much attention;
see~\cite{FickusM16} for a survey.
To date, remarkably few constructions of $\EITFF(D,N,R)$ with $\gcd(D,R)=1$ and $R>1$ are known:
a real $\EITFF(N,N,2)$ exists whenever a complex symmetric conference matrix of size $N$ does, including whenever $N$ is a prime power with $N\equiv1\bmod 4$~\cite{EtTaoui18,BlokhuisBE18},
and a real $\EITFF(8,6,3)$ exists~\cite{EtTaouiR09,IversonKM21}.
In fact, most known $\EITFF(D,N,R)$ are ``tensor-sized,"
having parameters that match those of one obtained by directly summing $R$ copies of an $N$-vector ETF for a space of dimension $\frac{D}{R}$~\cite{FickusMW21}.

In this paper, we generalize the celebrated equivalence~\cite{Konig99,StrohmerH03,XiaZG05,DingF07} between \textit{harmonic} ETFs and \textit{difference sets} for finite abelian groups so as to produce an infinite number of new EITFFs, including an infinite number of new $\EITFF(N,N,2)$ with odd $N$,
and an $\EITFF(11,11,3)$.
A finite frame is \textit{harmonic} with respect to the Pontryagin dual $\widehat{\calG}$ of some finite abelian group $\calG$ if it has any one of three equivalent properties~\cite{Waldron18}:
(i) it arises by extracting rows from the $\calG\times\widehat{\calG}$ character table of $\calG$;
(ii) it has a $\widehat{\calG}$-circulant Gram matrix~\cite{ValeW04};
(iii) it is the orbit of a single vector under the action of a unitary representation of $\widehat{\calG}$~\cite{ValeW04}.
As detailed in Section~3,
there is a straightforward generalization of (ii) and (iii) to the fusion-frame context, where any TFF with a $\widehat{\calG}$-block-circulant Gram matrix arises as the orbit of $R$ orthonormal vectors under such an action.
The corresponding generalization of (i) is less intuitive,
with one's choice of rows generalizing to some $\calG$-indexed TFF for $\bbC^R$ whose subspaces are of perhaps mixed dimension; see Definition~\ref{def.harmonic matrix ensemble} and Theorems~\ref{thm.small TFF yields harmonic} and~\ref{thm.harmonic TFF characterization}.
In essence, we find that every \textit{harmonic TFF} (Definition~\ref{def.harmonic}) is generated by some other TFF with smaller parameters.
In Theorem~\ref{thm.EITFF},
we moreover characterize when a harmonic TFF is either equichordal or equi-isoclinic.
In particular, harmonic EITFFs arise from generating TFFs whose projections have nicely behaved operator-valued Fourier transforms (Definition~\ref{def.difference projections}).
In Section~4, we then exploit this characterization to construct new infinite families of harmonic $\EITFF(N-1,N,2)$ (Theorem~\ref{thm.EITFF(Q-1,Q,2)}),
$\EITFF(N,N,2)$ (Theorem~\ref{thm.EITFF(Q,Q,2)}),
as well as a harmonic $\EITFF(11,11,3)$ (Theorem~\ref{thm.EITFF(11,11,3)}).
In the fifth section we obtain ``combinatorial" (conference-matrix-based) generalizations of some of these harmonic EITFFs; see Theorem~\ref{thm.combinatorial}.
We find that some of the EITFFs of~\cite{EtTaoui18} are harmonic,
and help motivate the ongoing search for symmetric and skew-symmetric complex conference matrices~\cite{EtTaouiM21}; see Theorem~\ref{thm.conference}.
We begin in the next section by introducing notation and explaining some known facts that we use later on.

\section{Preliminaries}

For the sake of simplicity, we sometimes consider vectors, matrices, etc., whose entries are indexed by members of arbitrary finite sets.
In particular, for any finite $N$-element set $\calN$,
we equip $\bbC^\calN:=\set{\bfx:\calN\rightarrow\bbC}$ with the inner product
$\ip{\bfx_1}{\bfx_2}:=\sum_{n\in\calN}\overline{\bfx_1(n)}\bfx_2(n)$,
and denote its standard basis as $\set{\bfdelta_n}_{n\in\calN}$ where $\bfdelta_n(n):=1$ and $\bfdelta_n(n'):=0$ whenever $n'\neq n$.
(All of our complex inner products are conjugate-linear in their first arguments.)
For any finite sets $\calM$ and $\calN$,
we identify a linear operator $\bfA:\bbC^\calN\rightarrow\bbC^\calM$ with a member of $\bbC^{\calM\times\calN}$ in the usual way,
being the ``$\calM\times\calN$ matrix" $\bfA$ with entries $\bfA(m,n)=\ip{\bfdelta_m}{\bfA\bfdelta_n}$ for all $m\in\calM$ and $n\in\calN$.
The adjoint $\bfA^*$ of such a matrix $\bfA$ is its $\calN\times\calM$ conjugate transpose.

The \textit{synthesis operator} of vectors $\set{\bfphi_n}_{n\in\calN}$ in a $D$-dimensional complex Hilbert space $\calV$ is $\bfPhi:\bbC^\calN\rightarrow\calV$,
$\bfPhi\bfx:=\sum_{n\in\calN}\bfx(n)\bfphi_n$.
The image of such an operator $\bfPhi$ is $\bfPhi(\bbC^\calD)=\Span\set{\bfphi_n}_{n\in\calN}$,
and its adjoint is the corresponding \textit{analysis operator} $\bfPhi^*:\calV\rightarrow\bbC^\calN$,
$(\bfPhi^*\bfy)(n)=\ip{\bfphi_n}{\bfy}$.
We sometimes identify a single vector $\bfphi\in\calV$ with the synthesis operator $\bfphi:\bbC\rightarrow\calV$, $\bfphi(z):=z\bfphi$ whose adjoint is the linear functional $\bfphi^*:\calV\rightarrow\bbC$, $\bfphi^*\bfy=\ip{\bfphi}{\bfy}$.
Composing the synthesis and analysis operators of $\set{\bfphi_n}_{n\in\calN}$ gives the \textit{frame operator} $\bfPhi\bfPhi^*:\calV\rightarrow\calV$,
$\bfPhi\bfPhi^*=\sum_{n\in\calN}\bfphi_n^{}\bfphi_n^*$ and the $\calN\times\calN$ \textit{Gram matrix} $\bfPhi^*\bfPhi$ whose $(n_1,n_2)$th entry is $(\bfPhi^*\bfPhi)(n_1,n_2)=\ip{\bfphi_{n_1}}{\bfphi_{n_2}}$.
In particular, $\bfPhi$ is an isometry (i.e., satisfies $\bfPhi^*\bfPhi=\bfI$) if and only if $\set{\bfphi_n}_{n\in\calN}$ is orthonormal,
and in this case $\bfPhi\bfPhi^*$ is the orthogonal projection operator onto the image of $\bfPhi$; in the sequel, we use \textit{projection} synonymously with \textit{orthogonal projection operator}.
As a degenerate case,
$\bbC^\emptyset$ is the singleton vector space $\set{\bfzero}$;
there is thus a unique linear operator $\bfPhi:\bbC^{\emptyset}\rightarrow\calV$,
and it is an isometry: $\bfPhi^*\bfPhi$ is the identity operator on $\set{\bfzero}$ and $\bfPhi\bfPhi^*$ is the rank-$0$ projection (zero operator) on $\calV$.
When $\calV=\bbC^\calD$ for some finite set $\calD$,
$\bfPhi$ is the $\calD\times\calN$ matrix that has $\bfphi_n$ as its $n$th column,
$\bfPhi^*$ is its $\calN\times\calD$ conjugate transpose,
and $\bfPhi\bfPhi^*$ and $\bfPhi^*\bfPhi$ are their $\calD\times\calD$ and $\calN\times\calN$ products, respectively.

Vectors $\set{\bfphi_n}_{n\in\calN}$ in $\calV$ are \textit{unit norm} if $\norm{\bfphi_n}=1$ for all $n$,
and form a \textit{tight frame} for $\calV$ if
$\bfPhi\bfPhi^*
=A\bfI$ for some $A>0$.
If both of these properties hold then they form a \textit{unit norm tight frame} (UNTF) for $\calV$, and taking the rank and trace of $\sum_{n\in\calN}\bfphi_n^{}\bfphi_n^*
=A\bfI$ gives that necessarily $A=\frac ND\geq1$.
An $\calN\times\calN$ matrix $\bfG$ is the Gram matrix $\bfPhi^*\bfPhi$ of some UNTF $\set{\bfphi_n}_{n\in\calN}$ for some complex Hilbert space $\calV$ if and only if each of its diagonal entries is $1$ and $\frac1A\bfG$ is a projection for some $A>0$; in this case, $D:=\dim(\calV)=\frac{N}{A}$.
If, in this case, we further have that $D\neq N$ then $\frac{A}{A-1}(\bfI-\tfrac1A\bfPhi^*\bfPhi)$ has these same two properties,
meaning that it is the Gram matrix \smash{$(\bfPhi^\flat)^*\bfPhi^\flat$} of some $N$-vector UNTF \smash{$\set{\bfphi_n^\flat}_{n\in\calN}$} for some $(N-D)$-dimensional complex Hilbert space.
When this occurs, $\set{\bfphi_n}_{n\in\calN}$ and $\set{\bfphi_n^\flat}_{n\in\calN}$ are \textit{Naimark complements} of each other.

\subsection{Tight fusion frames (TFFs)}

Now let $\set{\calR_n}_{n\in\calN}$ be a sequence of finite sets with $R_n:=\#(\calR_n)\geq0$ for all $n$.
For each $n\in\calN$, let $\bfPhi_n:\bbC^{\calR_n}\rightarrow\calV$ be an isometry,
namely the synthesis operator of some orthonormal basis \smash{$\set{\bfphi_{n,r}}_{r\in\calR_n}$} of some $R_n$-dimensional subspace \smash{$\calU_n:=\bfPhi_n(\bbC^{\calR_n})=\Span\set{\bfphi_{n,r}}_{r\in\calR_n}$} of $\calV$.
The \textit{fusion} synthesis $\bfPhi$, analysis $\bfPhi^*$, frame $\bfPhi\bfPhi^*$ and Gram $\bfPhi^*\bfPhi$ operators of $\set{\bfPhi_n}_{n\in\calN}$ are the (traditional) synthesis, analysis, frame and Gram operators of the concatenation $\set{\bfphi_{n,r}}_{(n,r)\in\calN\times\set{\calR_n}}$ of these bases, respectively;
here, ``$\calN\times\set{\calR_n}$" is shorthand notation for the set
$\set{(n,r): n\in\calN, r\in\calR_n}=\cup_{n\in\calN}(\set{n}\times\calR_n)$ of all pairs of such indices.
In particular,
$\bfPhi\bfPhi^*
=\sum_{(n,r)\in\calN\times\set{\calR_n}}\bfphi_{n,r}^{}\bfphi_{n,r}^*
=\sum_{n\in\calN}\bfPhi_n^{}\bfPhi_n^*$.
Meanwhile,
$\bfPhi^*\bfPhi$
has an $((n_1,r_1),(n_2,r_2))$th entry of
\begin{equation*}
(\bfPhi^*\bfPhi)((n_1,r_1),(n_2,r_2))
=\ip{\bfphi_{n_1,r_1}}{\bfphi_{n_2,r_2}}
=\ip{\bfPhi_{n_1}\bfdelta_{r_1}}{\bfPhi_{n_2}\bfdelta_{r_2}}
=(\bfPhi_{n_1}^*\bfPhi_{n_2}^{})(r_1,r_2),
\end{equation*}
and so is naturally regarded as an $\calN\times\calN$ array whose $(n_1,n_2)$th submatrix is the $\calR_{n_1}\times\calR_{n_2}$ \textit{cross-Gram} matrix
$\bfPhi_{n_1}^*\bfPhi_{n_2}^{}$.
We say isometries are \textit{equivalent} to $\set{\bfPhi_n}_{n\in\calN}$ if they are of the form $\set{\bfU\bfPhi_n\bfV_n}_{n\in\calN}$ for some unitary operator $\bfU$ on $\calV$ and some $\calR_n\times\calR_n$ unitaries $\set{\bfV_n}_{n\in\calN}$;
their fusion frame operator is $\bfU\bfPhi\bfPhi^*\bfU^*$,
while their fusion Gram matrix is obtained by conjugating $\bfPhi^*\bfPhi$ by a block-diagonal matrix whose $n$th diagonal block is $\bfV_n$.
We say isometries $\set{\bfPhi_n}_{n\in\calN}$ are \textit{real} if they are equivalent to isometries whose fusion Gram matrix has all real entries;
in this case we can without loss of generality regard $\set{\calU_n}_{n\in\calN}$ to be subspaces of $\bbR^D$ where $D=\rank(\bfPhi)$.

We say that such isometries $\set{\bfPhi_n}_{n\in\calN}$ form a $\TFF(D,N,\set{R_n}_{n\in\calN})$ for $\calV$ if $\bfPhi\bfPhi^*=A\bfI$ for some $A>0$.
This property is preserved by equivalence,
and so is often regarded as a statement about $\set{\calU_n}_{n\in\calN}$ rather than $\set{\bfPhi_n}_{n\in\calN}$.
Regardless, it holds if and only if the concatenation $\set{\bfphi_{n,r}}_{(n,r)\in\calN\times\set{\calR_n}}$ of some orthonormal bases for $\set{\calU_n}_{n\in\calN}$ is a UNTF for $\calV$.
As such, we necessarily have $1\leq A=\tfrac1D\sum_{n\in\calN}R_n\leq N$ and moreover that a matrix $\bfG$ whose rows and columns are indexed by $\calN\times\set{\calR_n}$ is the fusion Gram matrix $\bfPhi^*\bfPhi$ of such a TFF for some complex Hilbert space $\calV$ if and only if $\frac1A\bfG$ is a projection for some $A>0$ and, for each $n$, the $n$th diagonal block of $\bfG$ is the $\calR_n\times\calR_n$ identity matrix; in this case, \smash{$D:=\dim(\calV)=\tfrac1A\sum_{n\in\calN}R_n$}.
From this, it follows that the isometries $\set{\bfPhi_n}_{n\in\calN}$ of any $\TFF(D,N,\set{R_n}_{n\in\calN})$ with $D<\sum_{n\in\calN}R_n$ have a Naimark complement whose isometries \smash{$\set{\bfPhi_n^\flat}_{n\in\calN}$} form a \smash{$\TFF(\sum_{n\in\calN}R_n-D,N,\set{R_n}_{n\in\calN})$}.
When $\set{\bfPhi_n}_{n\in\calN}$ are alternatively the isometries of a $\TFF(D,N,\set{R_n}_{n\in\calN})$ for $\calV$ with $R_n<D$ for at least one $n$,
then $A<N$ and any isometries $\set{\bfPhi_n^\sharp}_{n\in\calN}$ for $\set{\calU_n^\perp}_{n\in\calN}$ satisfy
$\sum_{n\in\calN}\bfPhi_n^\sharp(\bfPhi_n^\sharp)^*
=\sum_{n\in\calN}(\bfI-\bfPhi_n^{}\bfPhi_n^*)
=(N-A)\bfI$ and so form a \textit{spatial complementary} $\TFF(D,N,\set{D-R_n}_{n\in\calN})$ for $\calV$.

One can also directly sum some TFFs to form others provided their parameters are consistent in a certain way.
To explain,
let \smash{$\set{\calD^{(j)}}_{j\in\calJ}$} be a partition of some $D$-element set $\calD$ into nonempty subsets,
and, for each $n\in\calN$, let
\smash{$\set{\calR_n^{(j)}}_{j\in\calJ}$} be a partition of $\calR_n$.
For each $j\in\calJ$,
also let
\smash{$\set{\bfPhi_n^{(j)}}_{n\in\calN}$} be  \smash{$\set{\calD^{(j)}\times\set{\calR_n^{(j)}}}_{n\in\calN}$} isometries, respectively, of some \smash{$\TFF(D^{(j)},N,\set{R_n^{(j)}}_{n\in\calN})$} for \smash{$\bbC^{\calD^{(j)}}$}.
For each $n\in\calN$, let $\bfPhi_n$ be a $\calD\times\calR_n$ block-diagonal matrix whose $j$th diagonal block is \smash{$\bfPhi_n^{(j)}$}.
Then each cross-Gram matrix $\bfPhi_{n_1}^*\bfPhi_{n_2}^{}$ is a block-diagonal matrix whose $j$th diagonal block is \smash{$[\bfPhi_{n_1}^{(j)}]^*\bfPhi_{n_2}^{(j)}$}.
In particular, each $\bfPhi_n$ is an isometry.
Moreover, the fusion frame operator $\sum_{n\in\calN}\bfPhi_n^{}\bfPhi_n^*$ of these isometries is also a block diagonal matrix whose $j$th diagonal block is \smash{$\sum_{n\in\calN}\bfPhi_n^{(j)}[\bfPhi_n^{(j)}]^*=A^{(j)}\bfI$} where
\smash{$A^{(j)}=\frac1{D^{(j)}}\sum_{n\in\calN}R_n^{(j)}$}.
In particular, $\set{\bfPhi_n}_{n\in\calN}$ are the isometries of a $\TFF(D,N,\set{R_n}_{n\in\calN})$ if and only if $A^{(j)}=A$ for all $j\in\calJ$,
that is, if and only if
\begin{equation}
\label{eq.direct sum consistency}
\tfrac1{D^{(j)}}\sum_{n\in\calN}R_n^{(j)}
=\tfrac1{D}\sum_{n\in\calN}R_n,
\quad\forall\, j\in\calJ.
\end{equation}

\subsection{Equichordal (EC) and equi-isoclinic (EI) tight fusion frames (TFFs)}

We say isometries $\set{\bfPhi_n}_{n\in\calN}$, $\bfPhi_n:\bbC^{\calR_n}\rightarrow\calV$ are \textit{uniform} if there is some set $\calR$ such that $\calR_n=\calR$ for all $n$.
In this case, $\calN\times\set{\calR_n}=\calN\times\calR$.
If uniform isometries form a $\TFF(D,N,\set{R_n}_{n\in\calN})$ for $\calV$ we often call it simply a $\TFF(D,N,R)$ for $\calV$ where $R=\#(\calR)$.
In this paper, we only consider nonuniform TFFs as a ``means to an end" for constructing uniform ones.
For any uniform isometries $\set{\bfPhi_n}_{n\in\calN}$ onto $R$-dimensional subspaces $\set{\calU_n}_{n\in\calN}$ of the $D$-dimensional complex Hilbert space $\calV$, cycling traces gives
\begin{equation*}
0\leq\norm{\bfPhi\bfPhi^*-\tfrac{NR}{D}\bfI}_{\Fro}^2
=\Tr[(\bfPhi^*\bfPhi)^2]-\tfrac{N^2R^2}{D}
=\sum_{n_1\in\calN}\sum_{n_2\neq n_1}\norm{\bfPhi_{n_1}^*\bfPhi_{n_2}^{}}_\Fro^2
-\tfrac{NR(NR-D)}{D}.
\end{equation*}
In particular, any such $\set{\bfPhi_n}_{n\in\calN}$ satisfy the following inequality,
and moreover achieve equality in it if and only if they form a $\TFF(D,N,R)$ for $\calV$:
\begin{equation}
\label{eq.proto Welch}
\tfrac{NR-D}{D(N-1)}
\leq\tfrac{1}{N(N-1)}\sum_{n_1\in\calN}
\sum_{n_2\neq n_1}\tfrac1R\norm{\bfPhi_{n_1}^*\bfPhi_{n_2}^{}}_\Fro^2.
\end{equation}
Here, for any $n_1\neq n_2$,
$\norm{\bfPhi_{n_1}^*\bfPhi_{n_2}^{}}_2\leq\norm{\bfPhi_{n_1}}_2\norm{\bfPhi_{n_2}^{}}_2=1$,
and so the singular values of $\bfPhi_{n_1}^*\bfPhi_{n_2}^{}$ can be expressed as $\set{\cos(\theta_{n_1,n_2,r})}_{r=1}^R$ for some nondecreasing sequence \smash{$\set{\theta_{n_1,n_2,r}}_{r=1}^R\subset[0,\frac\pi2]$} of \textit{principal angles} between $\calU_{n_1}$ and $\calU_{n_2}$.
In particular,
$\norm{\bfPhi_{n_1}^*\bfPhi_{n_2}^{}}_\Fro^2
=\sum_{r=1}^R\cos^2(\theta_{n_1,n_2,r})$,
and the chordal distance~\cite{ConwayHS96} between $\calU_{n_1}$ and $\calU_{n_2}$ is the (scaled) Frobenius distance between their projections:
\begin{equation}
\label{eq.chordal distance}
\dist_\rmc(\calU_{n_1},\calU_{n_2})
:=\tfrac1{\sqrt{2}}\norm{\bfPhi_{n_1}^{}\bfPhi_{n_1}^{*}-\bfPhi_{n_2}^{}\bfPhi_{n_2}^{*}}_\Fro
=(R-\norm{\bfPhi_{n_1}^*\bfPhi_{n_2}^{}}_\Fro^2)^{\frac12}
=\biggparen{\,\sum_{r=1}^R\sin^2(\theta_{n_1,n_2,r})}^{\frac12}.
\end{equation}
Bounding the summands in~\eqref{eq.proto Welch} by their largest member gives the right-hand inequality of~\eqref{eq.Welch bound}:
\begin{equation}
\label{eq.chordal Welch}
\tfrac{NR-D}{D(N-1)}
\leq\tfrac{1}{N(N-1)}\sum_{n_1\in\calN}
\sum_{n_2\neq n_1}\tfrac1R\norm{\bfPhi_{n_1}^*\bfPhi_{n_2}^{}}_\Fro^2
\leq\tfrac1R\max_{n_1\neq n_2}\norm{\bfPhi_{n_1}^*\bfPhi_{n_2}^{}}_\Fro^2.
\end{equation}
Isometries $\set{\bfPhi_n}_{n\in\calN}$ achieve equality throughout~\eqref{eq.chordal Welch} if and only if they form a $\TFF(D,N,R)$ for $\calV$ in which~\eqref{eq.chordal distance} is constant over all $n_1\neq n_2$,
namely an $\ECTFF(D,N,R)$ for $\calV$.
That is, a $\TFF(D,N,R)$ is an $\ECTFF(D,N,R)$ if and only if all of the off-diagonal blocks of its fusion Gram matrix have the same Frobenius norm.
Meanwhile,
the left-hand inequality of~\eqref{eq.Welch bound} arises from the fact that $\frac1R\norm{\bfPhi_{n_1}^*\bfPhi_{n_2}^{}}_\Fro^2
\leq\norm{\bfPhi_{n_1}^*\bfPhi_{n_2}^{}}_2^2$
for any $n_1,n_2\in\calN$,
with equality holding if and only if $\calU_{n_1}$ and $\calU_{n_2}$ are \textit{isoclinic} in the sense that $\theta_{n_1,n_2,r}$ is independent of $r$.
This relates to the spectral distance~\cite{DhillonHST08} between $\calU_{n_1}$ and $\calU_{n_2}$:
\begin{equation}
\label{eq.spectral distance}
\dist_\rms(\calU_{n_1},\calU_{n_2})
:=(1-\norm{\bfPhi_{n_1}^*\bfPhi_{n_2}^{}}_2^2)^{\frac12}
=[1-\max_{r}\cos^2(\theta_{n_1,n_2,r})]^{\frac12}
=\min_{r}\sin(\theta_{n_1,n_2,r}).
\end{equation}
In particular, one may continue~\eqref{eq.proto Welch} in the following way,
which is alternative to~\eqref{eq.chordal Welch}:
\begin{equation*}
\tfrac{NR-D}{D(N-1)}
\leq\tfrac{1}{N(N-1)}\sum_{n_1\in\calN}
\sum_{n_2\neq n_1}\tfrac1R\norm{\bfPhi_{n_1}^*\bfPhi_{n_2}^{}}_\Fro^2
\leq\tfrac{1}{N(N-1)}\sum_{n_1\in\calN}
\sum_{n_2\neq n_1}\norm{\bfPhi_{n_1}^*\bfPhi_{n_2}^{}}_2^2
\leq\max_{n_1\neq n_2}\norm{\bfPhi_{n_1}^*\bfPhi_{n_2}^{}}_2^2.
\end{equation*}
Equality holds throughout~\eqref{eq.Welch bound} if and only if it holds throughout here,
and this occurs if and only if $\set{\bfPhi_n}_{n\in\calN}$ forms a $\TFF(D,N,R)$ for $\calV$ with the property that $\theta_{n_1,n_2,r}$ is independent of both $n_1\neq n_2$ and $r$, namely an $\EITFF(D,N,R)$ for $\calV$.
That is, a $\TFF(D,N,R)$ is an $\EITFF(D,N,R)$ if and only if all of the off-diagonal blocks of its fusion Gram matrix are a common scalar multiple of some unitaries.
When rewritten in terms of~\eqref{eq.chordal distance} and~\eqref{eq.spectral distance},
\eqref{eq.Welch bound} equivalently becomes the following extension of Conway, Hardin and Sloane's \textit{simplex bound}:
\begin{equation*}
\min_{n_1\neq n_2}\dist_\rms(\calU_{n_1},\calU_{n_2})
\leq\tfrac1{\sqrt{R}}\min_{n_1\neq n_2}\dist_\rmc(\calU_{n_1},\calU_{n_2})
\leq\sqrt{\tfrac{N}{N-1}\tfrac{D-R}{D}}.
\end{equation*}
In particular, every $\ECTFF(D,N,R)$ is a chordal Grassmannian code,
while every $\EITFF(D,N,R)$ is a spectral Grassmannian code,
is necessarily an $\ECTFF(D,N,R)$,
and has minimal \textit{block coherence}
$\min_{n_1\neq n_2}\tfrac1R\norm{\bfPhi_{n_1}^*\bfPhi_{n_2}^{}}_2$
(and so is well-suited for sensing block-sparse signals~\cite{EldarKB10}).

Since EITFFs are more applicable---and rarer---than ECTFFs,
we tend to prioritize their consideration throughout this paper.
There are two trivial types of $\EITFF(D,N,R)$:
those with $D=R$ and those with $D=NR$:
the former consist of $N$ copies of the entire space,
while the isometries of the latter are formed by partitioning an orthonormal basis for the entire space into $N$ subcollections of $R$ vectors apiece.
There are also at least two trivial ways to construct some EITFFs from other ones,
namely via Naimark complements and via direct sums.
In fact,
by exploiting the aforementioned fusion-Gram-matrix-based characterizations of such objects,
it is straightforward to show that a $\TFF(D,N,R)$ with $D<NR$ is real and/or equichordal and/or equi-isoclinic if and only if its Naimark complement is as well.
Similarly, if a $\TFF(D,N,R)$ is the direct sum of $\TFF(D^{(j)},N,R^{(j)})$ over all $j\in\calJ$---which by~\eqref{eq.direct sum consistency} can only happen if \smash{$\frac{R^{(j)}}{D^{(j)}}=\frac{R}{D}$} for all $j\in\calJ$---then this $\TFF(D,N,R)$ is real and/or equichordal and/or equi-isoclinic if each $\TFF(D^{(j)},N,R^{(j)})$ has that same property for all $j\in\calJ$.
We caution however that the spatial complement of an $\EITFF(D,N,R)$ with $R<D$ is not an $\EITFF(D,N,R)$ unless $R=\frac{D}{2}$.
In fact, if $R<D$ then an $\EITFF(D,N,R)$ can only exist if $R\leq\frac D2$,
since otherwise at least one but not all of the principal angles between any two subspaces of an $\ECTFF(D,N,R)$ are necessarily $0$.
That said, for any $\TFF(D,N,R)$, \eqref{eq.chordal distance} implies that $\dist_\rmc(\calU_{n_1}^\perp,\calU_{n_2}^\perp)
=\dist_\rmc(\calU_{n_1},\calU_{n_2})$ for all $n_1\neq n_2$,
and so this TFF is real and/or equichordal if and only if its spatial complement is as well; see~\cite{FickusMW21} for a more thorough discussion of these well-known phenomena.

Beyond this,
it is known that entrywise applying the mapping $x+\rmi y\mapsto[\begin{smallmatrix}x&-y\\y&\phantom{-}x\end{smallmatrix}]$ to a synthesis matrix of a complex $\operatorname{(EC/EI)TFF}(D,N,R)$ converts it into a real $\operatorname{(EC/EI)TFF}(2D,N,2R)$~\cite{Hoggar77}.
Moreover, applying a similar mapping to a quaternionic $\operatorname{(EC/EI)TFF}(D,N,R)$ converts it into a complex $\operatorname{(EC/EI)TFF}(2D,N,2R)$~\cite{Hoggar77,Waldron20}.
The community is particularly interested in EITFFs whose parameters do not match those of one that can be produced from other known EITFFs via any combination of these \textit{Hoggar-type} methods with Naimark complements and direct sums.
Along these lines,
$\EITFF(D,N,R)$ in which $D$ and $R$ are coprime are particularly sought after.
These include all $\EITFF(D,N,R)$ in which $R=1$:
in this case, a rank-$R$ isometry $\bfPhi_n$ equates to a unit vector $\bfphi_n$,
\eqref{eq.Welch bound} reduces to Welch's classical lower bound~\cite{Welch74} on the coherence of unit vectors $\set{\bfphi_n}_{n\in\calN}$, namely
\smash{$\max_{n_1\neq n_2}\abs{\ip{\bfphi_{n_1}}{\bfphi_{n_2}}}
\geq\bigbracket{\tfrac{N-D}{D(N-1)}}^{\frac12}$},
and they achieve equality in it precisely when they form an $\ETF(D,N)$ for $\bbC^\calD$,
namely a UNTF with the additional property that $\abs{\ip{\bfphi_{n_1}}{\bfphi_{n_2}}}$ is constant over all $n_1\neq n_2$.
To date, the only apparently known $\EITFF(D,N,R)$ with $R>1$ whose parameters are not reproducible by applying some combination of Naimark, direct sum and Hoggar-type methods to known (possibly quaternionic~\cite{CohnKM16,EtTaoui20,Waldron20}) ETFs or trivial EITFFs are those of~\cite{LemmensS73b}, \cite{EtTaoui18,BlokhuisBE18}, \cite{EtTaouiR09} and~\cite{IversonKM21}.
Specifically, \cite{LemmensS73b} gives a real $\EITFF(2R,\rmH(R)+2,R)$ whenever $R=2^{4J+K}$ for some nonnegative integers $J$ and $K\in\set{0,1,2,3}$ and $\rmH(R)=8J+2^K$ is its corresponding \textit{Hurwitz number},
\cite{EtTaoui18,BlokhuisBE18} constructs a real $\EITFF(N,N,2)$ from any complex symmetric conference matrix of size $N$, including whenever $N$ is a prime power such that $N\equiv1\bmod 4$,
\cite{EtTaouiR09} constructs a real $\EITFF(8,6,3)$,
and~\cite{IversonKM21} constructs a real $\EITFF(21,15,3)$ (and also recovers an $\EITFF(8,6,3)$).

\subsection{The signature matrix of an equi-isoclinic tight fusion frame}

For any isometries $\set{\bfPhi_n}_{n\in\calN}$ of an $\EITFF(D,N,R)$ with $D<NR$,
the corresponding \textit{signature matrix} $\bfS$ is $\bfS=\frac1{B}(\bfPhi^*\bfPhi-\bfI)$ where \smash{$B=\bigbracket{\tfrac{NR-D}{D(N-1)}}^{\frac12}>0$}.
This is a self-adjoint block matrix whose diagonal blocks are zero and whose off-diagonal blocks are unitaries.
Moreover, since $\bfPhi^*\bfPhi=\bfI+B\bfS$ where $(\bfPhi^*\bfPhi)^2=A\bfPhi^*\bfPhi$ with $A=\frac{NR}{D}$, we have
\begin{equation}
\label{eq.signature matrix tightness}
A(\bfI+B\bfS)
=(\bfI+B\bfS)^2
=\bfI+2B\bfS+B^2\bfS^2,
\quad\text{i.e.,}\quad
\bfS^2
=\tfrac{A-2}{B}\bfS+\tfrac{A-1}{B^2}\bfI.
\end{equation}
That is, $\bfS$ is an $\calN\times\calN$ array of $\calR\times\calR$ matrices $\set{\bfS_{n_1,n_2}}_{n_1,n_2\in\calN}$ that satisfies
\begin{enumerate}
\renewcommand{\labelenumi}{(\roman{enumi})}
\item $\bfS_{n,n}=\bfzero$ for all $n\in\calN$,
\item $\bfS_{n_1,n_2}^*=\bfS_{n_1,n_2}^{-1}=\bfS_{n_2,n_1}$ for $n_1,n_2\in\calN$, $n_1\neq n_2$,
\item $\bfS^2$ is a linear combination of $\bfS$ and $\bfI$.
\end{enumerate}
Conversely, any such $\bfS$ arises from an EITFF.
Indeed, having $\bfS_{n_2,n_1}=\bfS_{n_1,n_2}^*=(\bfS^*)_{n_2,n_1}$ for all $n_1,n_2\in\calN$ implies that $\bfS^*=\bfS$,
and so $\bfS$ is unitarily diagonalizable and has real eigenvalues.
Moreover, since $\Tr(\bfS)=0$ and $\bfS^2=C\bfS+E\bfI$ for some scalars $C$ and $E$,
$\bfS$ has exactly two distinct eigenvalues, one positive and the other negative,
which are roots of $\lambda^2-C\lambda-E$.
(In particular, $C$ and $E$ are necessarily real.)
Denoting its negative eigenvalue as $-F$ for some $F>0$,
we thus have that $\bfI+\frac1{F}\bfS$ is the Gram matrix $\bfPhi^*\bfPhi$ of some UNTF $\set{\bfphi_{n,r}}_{(n,r)\in\calN\times\calR}$ for some complex Hilbert space $\calV$ whose dimension $D$ is the multiplicity of the positive eigenvalue of $\bfS$.
For each $n\in\calN$,
letting $\bfPhi_n$ be the $\calN\times\calR$ matrix that has $\set{\bfphi_{n,r}}_{r\in\calR}$ as its columns,
we thus have that $\bfPhi_n^*\bfPhi_n^{}$ is the $n$th diagonal block of $\bfPhi^*\bfPhi=\bfI+\frac1{F}\bfS$, namely an $\calR\times\calR$ identity matrix.
As such, $\bfPhi^*\bfPhi$ is in fact the fusion Gram matrix of the isometries $\set{\bfPhi_n}_{n\in\calN}$ of an $\EITFF(D,N,R)$ for $\calV$.

In practice, this theory is often used as follows.
Let $\bfS$ be any $\calN\times\calN$ array of $\calR\times\calR$ matrices where $\bfS_{n,n}=\bfzero$ for all $n\in\calN$ and $\bfS_{n_1,n_2}$ is unitary for all $n_1\neq n_2$ (and is typically chosen from some explicitly parameterized subset of $\operatorname{O}(\calR)$ or $\operatorname{U}(\calR)$).
Then $\bfS$ is the signature matrix of some $\EITFF(D,N,R)$ if and only if both $\bfS_{n_1,n_2}=\bfS_{n_2,n_1}^*$ for all $n_1\neq n_2$ and $\bfS$ satisfies~\eqref{eq.signature matrix tightness} where \smash{$A=\frac{NR}{D}$} and \smash{$B=\bigbracket{\tfrac{NR-D}{D(N-1)}}^{\frac12}$}.
When written block-wise, this latter condition becomes
\begin{equation}
\label{eq.signature matrix tightness (blockwise)}
\sum_{n_3\in\calN}\bfS_{n_1,n_3}\bfS_{n_3,n_2}
=(\bfS^2)_{n_1,n_2}
=(\tfrac{A-2}{B}\bfS+\tfrac{A-1}{B^2}\bfI)_{n_1,n_2}
=\left\{\begin{array}{cl}
\tfrac{A-1}{B^2}\bfI_{\calR\times\calR},\smallskip&\ n_1=n_2,\\
\tfrac{A-2}{B}\bfS_{n_1,n_2},&\ n_1\neq n_2.
\end{array}\right.
\end{equation}
Here, when $\bfS_{n_1,n_2}=\bfS_{n_2,n_1}^*$ for all $n_1\neq n_2$,
the fact that \smash{$\frac{A-1}{B^2}=(\tfrac{NR}{D}-1)\tfrac{D(N-1)}{NR-D}=(N-1)$} implies that~\eqref{eq.signature matrix tightness (blockwise)} is automatically satisfied whenever $n_1=n_2$:
\begin{equation*}
\sum_{n_3\in\calN}\bfS_{n_1,n_3}\bfS_{n_3,n_1}
=\sum_{n_3\in\calN}\bfS_{n_1,n_3}\bfS_{n_1,n_3}^*
=\bfzero+\sum_{\substack{n_3\in\calN\\n_3\neq n_1}}\bfI_{\calR\times\calR}
=(N-1)\bfI_{\calR\times\calR}
=\tfrac{A-1}{B^2}\bfI_{\calR\times\calR}.
\end{equation*}
Since
$\frac{A-2}{B}
=(\tfrac{NR}{D}-2)\bigbracket{\tfrac{D(N-1)}{NR-D}}^{\frac12}
=(NR-2D)\bigbracket{\tfrac{N-1}{D(NR-D)}}^{\frac12}$,
we thus have that such a matrix $\bfS$ is the signature matrix of an $\EITFF(D,N,R)$ if and only if
\begin{equation}
\label{eq.signature matrix conditions}
\bfS_{n_1,n_2}=\bfS_{n_2,n_1}^*,
\qquad
\sum_{n_3\in\calN}\bfS_{n_1,n_3}\bfS_{n_3,n_2}
=(NR-2D)\bigbracket{\tfrac{N-1}{D(NR-D)}}^{\frac12}\bfS_{n_1,n_2},
\qquad\forall\,n_1\neq n_2.
\end{equation}
Clearly, the resulting EITFF is moreover real if $\bfS$ is real.
In the special case where $\bfS$ is real and $R=1$, \eqref{eq.signature matrix conditions} reduces to a classical method for constructing real ETFs that is intimately related to \textit{two-graphs}.
In the final section of this paper, we generalize that approach.
There, inspired by some discoveries made in Section~4, we make educated guesses for choices of $\bfS$ that might yield $\EITFF(N,N,2)$ and $\EITFF(N-1,N,2)$, and use~\eqref{eq.signature matrix conditions} to characterize when those guesses are valid.

\section{Harmonic equichordal and equi-isoclinic tight fusion frames}

We begin this section with a method for \textit{modulating} a possibly nonuniform TFF into a uniform one with larger parameters.
This generalizes the method of Theorem~14 in~\cite{CasazzaFMWZ11} to the case of nonuniform TFFs and noncyclic abelian groups.
Later on in this section, we show that a TFF is produced by this method if and only if it is ``harmonic" in a certain sense that itself is the natural fusion-frame-based generalization of the concept of a harmonic frame.
We then characterize when such harmonic TFFs are equichordal and/or equi-isoclinic.
In the next section, we exploit these characterizations to construct new (harmonic) ECTFFs and EITFFs.

Here and throughout, let $\calG$ be a finite abelian group whose operation is denoted as addition.
A \textit{character} of $\calG$ is a homomorphism $\gamma:\calG\rightarrow\bbT:=\set{z\in\bbC: \abs{z}=1}$.
The \textit{Pontryagin dual} of $\calG$ is the set $\widehat{\calG}$ of all such characters,
which is itself a group under pointwise multiplication.
In fact, it is well known that since $\calG$ is finite,
$\widehat{\calG}$ is isomorphic to $\calG$ and moreover that the characters \smash{$\set{\gamma}_{\gamma\in\widehat{\calG}}$} form an equal-norm orthogonal basis for $\bbC^\calG$.
As such, the synthesis operator $\bfGamma$ of this basis satisfies $\bfGamma^{-1}=\frac1G\bfGamma^*$.
This $\calG\times\widehat{\calG}$ matrix $\bfGamma$ is the \textit{character table} of $\bfG$, having $(g,\gamma)$th entry $\bfGamma(g,\gamma)=\gamma(g)$.
Its adjoint $\bfGamma^*$ is the \textit{discrete Fourier transform} (DFT) over $\calG$.
For a fixed character $\gamma$,
the function $\bfM_\gamma:\bbC^\calG\rightarrow\bbC^\calG$,
$(\bfM_\gamma\bfy)(g):=\gamma(g)\bfy(g)$ is known as \textit{modulation} by $\gamma$.
We use this same term to more generally refer to other types of pointwise multiplication by characters.
In particular, in what follows,
we construct the isometries $\set{\bfPhi_\gamma}_{\gamma\in\widehat{\calG}}$ of a uniform TFF by modulating the adjoint $\bfPsi^*$ of the fusion synthesis operator $\bfPsi$ of the isometries $\set{\bfPsi_g}_{g\in\calG}$ of some $\calG$-indexed (possibly nonuniform) TFF for $\bbC^\calR$.

\begin{definition}
\label{def.harmonic matrix ensemble}
Let $\calG$ be a finite abelian group of order $G$, and let $\calR$ be a finite set of order $R>0$.
Consider any ensemble $\set{\bfPsi_g}_{g\in\calG}$ of $\calR\times\calD_g$ matrices, where each finite set $\calD_g$ has order $D_g$ and $D:=\sum_{g\in\cal G}D_g>0$.
Put $\calD:=\calG\times\set{\calD_g}=\cup_{g\in\calG}(\set{g}\times\calD_g)$.
The \textit{harmonic matrix ensemble} generated by \smash{$\set{\bfPsi_g}_{g\in\calG}$} is the sequence \smash{$\set{\bfPhi_\gamma}_{\gamma\in\widehat{\calG}}$} of $\calD\times\calR$ matrices defined by
\begin{equation}
\label{eq.def of big frame}
\bfPhi_\gamma((g,d),r):=(\tfrac RD)^{\frac12}\gamma(g)\overline{\bfPsi_{g}(r,d)},
\quad\forall\,(g,d)\in\calD,\, r\in\calR.
\end{equation}
\end{definition}

We will soon see in Theorem~\ref{thm.small TFF yields harmonic} that the isometries of a TFF generate a harmonic matrix ensemble that forms another TFF.
To distinguish between these two TFFs, we will say the \textit{generating TFF} generates the \textit{harmonic TFF}; see Definition~\ref{def.harmonic} for a slightly more general notion following the characterization in Theorem~\ref{thm.harmonic TFF characterization}.
The following example uses the technology in Definition~\ref{def.harmonic matrix ensemble} to obtain the second-known construction of an $\EITFF(4,5,2)$.

\begin{example}
\label{ex.EITFF(4,5,2)}
Let $\calG$ be the cyclic group $\bbZ_5$ of integers modulo $5$.
Let $\calR=\set{1,2}$, which has cardinality $R=2$.
Let $\calD_0$ be empty, and let $\calD_1=\calD_2=\calD_3=\calD_4=\set{1}$ so that
$\calD=\cup_{n\in\bbZ_5}(\set{n}\times\calD_n)$ is naturally identified with the subset $\set{1,2,3,4}$ of $\bbZ_5$,
which has cardinality $D=4$.
Now consider five isometries $\set{\bfPsi_n}_{n\in\bbZ_5}$ with
$\bfPsi_n:\bbC^{\calD_n}\rightarrow\bbC^\calR\cong\bbC^2$ for all $n$,
as well as each one's corresponding $2\times 2$ projection  $\bfP_n:=\bfPsi_n^{}\bfPsi_n^*$.
When $n=0$ we have $\calD_0=\emptyset$, and so $\bfPsi_0$ is informally a ``matrix with two rows and zero columns," which by formal convention means that $\bfP_0$ is a $2\times 2$ zero matrix (a rank-$0$ projection).
When instead $n\in\set{1,2,3,4}$, $D_n=\#(\calD_n)=1$,
meaning $\bfPsi_n$ is a $2\times 1$ matrix whose sole column is unit norm.
For example, we may choose
\begin{equation*}
\label{eq.EITFF(4,5,2) 1}
\bfPsi
=\left[\begin{array}{c|c|c|c|c}
\bfPsi_0&\bfPsi_1&\bfPsi_2&\bfPsi_3&\bfPsi_4
\end{array}\right]
=\tfrac1{\sqrt{2}}\left[\begin{array}{c|c|c|c|c}
\,\cdot\ &\ 1\ &\    1\ &\ \phantom{-}1\ &\ \phantom{-}1\ \\
\,\cdot\ &\ 1\ &\ \rmi\ &\ -\rmi\ &\ -1\
\end{array}\right],
\end{equation*}
where ``$\cdot$" is a placeholder for an entry whose column index belongs to the empty set.  Here,
\begin{equation}
\label{eq.EITFF(4,5,2) 2}
\bfP_0=\left[\begin{array}{cc}0&0\\0&0\end{array}\right],\quad
\bfP_1=\tfrac12\left[\begin{array}{cc}1&1\\1&1\end{array}\right],\quad
\bfP_2=\tfrac12\left[\begin{array}{cr}1&-\rmi\\\rmi&1\end{array}\right],\quad
\bfP_3=\tfrac12\left[\begin{array}{rc}\phantom{-}1&\rmi\\-\rmi&1\end{array}\right],\quad
\bfP_4=\tfrac12\left[\begin{array}{rr}1&-1\\-1&1\end{array}\right].
\end{equation}
Since these five projections sum to $2\bfI$,
$\set{\bfPsi_n}_{n\in\bbZ_5}$ are the isometries of a $\TFF(2,5,\set{0,1,1,1,1})$ for $\bbC^2$.
As we shall see, they generate the isometries
\smash{$\set{\bfPhi_\gamma}_{\gamma\in\widehat{\bbZ}_5}$} of a harmonic $\TFF(4,5,2)$ for $\bbC^\calD\cong\bbC^4$.
Here, we elect to identify $\bbZ_5$ with its Pontryagin dual by identifying any $m\in\bbZ_5$ with the character $\gamma_m:\bbZ_5\rightarrow\bbT$,
$\gamma_m(n)=\omega^{mn}$ where $\omega=\exp(\frac{2\pi\rmi}5)$,
and denote $\bfPhi_{\gamma_m}$ as simply $\bfPhi_m$.
In particular, by~\eqref{eq.def of big frame},
each $\bfPhi_m$ is obtained by multiplying $\bfPsi^*$ of the left by a diagonal (modulation) matrix and then scaling by a factor of
\smash{$(\frac{R}{D})^{\frac12}=(\frac{2}{4})^{\frac12}=\frac1{\sqrt{2}}$}:
\begin{equation}
\label{eq.EITFF(4,5,2) 3}
\bfPhi_m
=\tfrac1{\sqrt{2}}\left[\begin{array}{cccc}
\omega^m&0&0&0\\
0&\omega^{2m}&0&0\\
0&0&\omega^{3m}&0\\
0&0&0&\omega^{4m}
\end{array}\right]
\tfrac1{\sqrt{2}}\left[\begin{array}{cc}
1&\phantom{-}1\\
1&-\rmi\\
1&\phantom{-}\rmi\\
1&-1\end{array}\right].
\end{equation}
Essentially, since $\set{\bfPsi_n}_{n\in\bbZ_5}$ are the isometries of a TFF,
the rows of $\bfPsi$ are equal-norm orthogonal,
and so when suitably scaled,
the columns of $\bfPsi^*$ form an orthonormal basis for some two-dimensional subspace of $\bbC^4$;
we then act on this subspace by a diagonal unitary matrix of order $5$ to obtain orthonormal bases for five distinct two-dimensional subspaces of $\bbC^4$.
Remarkably, these five isometries $\set{\bfPhi_m}_{m\in\bbZ_5}$
themselves form a TFF for $\bbC^4$;
as seen in the proof of Theorem~\ref{thm.small TFF yields harmonic},
this is basically due to the fact that every off-diagonal entry of
$\bfPhi\bfPhi^*=\sum_{m\in\bbZ_5}\bfPhi_m^{}\bfPhi_m^*$ is a vanishing geometric sum.
The group-based structure of~\eqref{eq.EITFF(4,5,2) 3} further implies that the fusion Gram matrix $\bfPhi^*\bfPhi$ of this $\TFF(4,5,2)$ is block-circulant,
having its $(m_1,m_2)$th cross-Gram matrix $\bfPhi_{m_1}^*\bfPhi_{m_2}^{}$ depend only on the difference of $m_1$ and $m_2$,
and moreover does so in a way that is essentially entrywise DFT of the generating TFF's projections~\eqref{eq.EITFF(4,5,2) 2}:
\begin{equation*}
\tfrac12\biggparen{\omega^{m_2-m_1}\tfrac12\left[\begin{array}{cc}1&1\\1&1\end{array}\right]
+\omega^{2(m_2-m_1)}\tfrac12\left[\begin{array}{cr}1&-\rmi\\\rmi&1\end{array}\right]
+\omega^{3(m_2-m_1)}\tfrac12\left[\begin{array}{rc}1&\rmi\\-\rmi&1\end{array}\right]
+\omega^{4(m_2-m_1)}\tfrac12\left[\begin{array}{rr}1&-1\\-1&1\end{array}\right]}.
\end{equation*}
Because of this, this $\TFF(4,5,2)$ is equi-isoclinic precisely when $\bfPhi_0^*\bfPhi_m^{}$ is
\smash{$[\frac{NR-D}{D(N-1)}]^{\frac12}
=\sqrt{\frac{3}{8}}$}
times some unitary for each $m\in\set{1,2,3,4}$.
As explained in the next section, the theory of Gauss sums over finite fields implies that this is indeed the case here.
This is the second known construction of a complex $\EITFF(4,5,2)$;
as shown in~\cite{Waldron20},
an $\EITFF(4,5,2)$ can also be obtained by applying Hoggar's $\bbH$-to-$\bbC$ method~\cite{Hoggar77} to a quaternionic $\ETF(2,5)$~\cite{CohnKM16,EtTaoui20,Waldron20}.

\end{example}

The next example illustrates how harmonic matrix ensembles generalize harmonic frames.

\begin{example}
\label{ex.EITFF(3,7,1)}
When $\calR$ is a singleton set, \eqref{eq.def of big frame} reduces to the traditional construction of harmonic (unit norm tight) frames.
To explain,
if $\bfPsi_g:\bbC^{\calD_g}\rightarrow\bbC$ is an isometry then $\calD_g$ is either empty or a singleton set, and in the latter case we assume $\calD_g=\set{1}$ without loss of generality, implying that $\bfPhi_g$ is the $1\times1$ matrix whose sole entry is $1$.
In this setting, $\calD=\cup_{g\in\calG}(\set{g}\times\calD_g)$ is naturally identified with the subset \smash{$\widetilde{\calD}:=\set{g\in\calG: \calD_g\neq\emptyset}$} of $\calG$ that essentially has $\set{\bfPsi_g}_{g\in\calG}$ as its characteristic function.
Such isometries $\set{\bfPsi_g}_{g\in\calG}$ automatically form a TFF for $\bbC$,
satisfying
$\sum_{g\in\calG}\bfPsi_g^{}\bfPsi_g^*
=\sum_{\set{g\in\calG: \calD_g\neq\emptyset}}1
=D$.
For example, when $\calG=\bbZ_7$ and we let $\calD_g=\set{1}$ for $g\in\set{1,2,4}$ and
$\calD_g=\emptyset$ for $g\in\set{0,3,5,6}$,
the set $\calD=\set{(1,1),(2,1),(4,1)}$ is naturally identified with the subset $\widetilde{\calD}=\set{1,2,4}$ of $\bbZ_7$ whose characteristic function $\bfone_{\widetilde{\calD}}$ is essentially given by the sequence of isometries $\set{\bfPsi_n}_{n\in\bbZ_7}$ of the corresponding $\TFF(1,7,\set{0,1,1,0,1,0,0})$, namely
\begin{equation}
\label{eq.EITFF(3,7,1) 1}
\bfPsi
=\left[\begin{array}{c|c|c|c|c|c|c}
\ \bfPsi_0\ &\ \bfPsi_1\ &\ \bfPsi_2\ &\ \bfPsi_3\ &\ \bfPsi_4\ &\ \bfPsi_5\ &\ \bfPsi_6\
\end{array}\right]
=\left[\begin{array}{c|c|c|c|c|c|c}
\ \cdot\ &\ 1\ &\ 1\ &\ \cdot\ &\ 1\ &\ \cdot\ &\ \cdot\
\end{array}\right].
\end{equation}
The harmonic matrix ensemble \smash{$\set{\bfPhi_\gamma}_{\gamma\in\widehat{\calG}}$} generated by $\set{\bfPsi_g}_{g\in\calG}$ forms a $\TFF(D,G,1)$ for $\bbC^\calD$:
for any \smash{$\gamma\in\widehat{\calG}$}, $\bfPhi_\gamma$ is a $\calD\times\set{1}$ matrix which, for any $g\in\calG$ such that $\calD_g\neq\emptyset$, has $((g,1),1)$th entry
\begin{equation*}
\bfPhi_\gamma((g,1),1)
=\tfrac1{\sqrt{D}}\gamma(g)\overline{\bfPsi_{g}(1,1)}
=\tfrac1{\sqrt{D}}\gamma(g).
\end{equation*}
Thus, \smash{$\set{\bfPhi_\gamma}_{\gamma\in\widehat{\calG}}$} is naturally identified with the traditional harmonic frame arising from $\widetilde{\calD}$,
namely the UNTF \smash{$\set{\bfphi_\gamma}_{\gamma\in\widehat{\calG}}$} for \smash{$\bbC^{\widetilde{\calD}}$} whose synthesis operator is obtained by extracting the rows of the $\calG\times\widehat{\calG}$ character table of $\calG$ that are indexed by $\widetilde{\calD}$~\cite{GoyalKK01},
and then scaling them by~\smash{$\frac1{\sqrt{D}}$}.
Here, \eqref{eq.cross Gram of big frame} reduces to a well-known expression for the entries of such a frame's $\widehat{\calG}$-circulant Gram matrix in terms of the DFT of $\bfone_{\widetilde{\calD}}$:
\begin{equation}
\label{eq.EITFF(3,7,1) 1.5}
(\bfPhi_{\gamma_1}^*\bfPhi_{\gamma_2}^{})(1,1)
=\tfrac1D\sum_{g\in\calG}(\gamma_1^{-1}\gamma_2^{})(g)\bfone_{\widetilde{\calD}}(g)
=\tfrac1D\ip{\gamma_1^{}\gamma_2^{-1}}{\bfone_{\widetilde{\calD}}}
=\tfrac1D(\bfGamma^*\bfone_{\widetilde{\calD}})(\gamma_1^{}\gamma_2^{-1}).
\end{equation}
It is straightforward to show (and has long been known~\cite{Turyn65}) that \smash{$\abs{(\bfGamma^*\bfone_{\widetilde{\calD}})(\gamma)}$} is constant over all $\gamma\neq1$ if and only if $\widetilde{\calD}$ is a \textit{difference set} for $\calG$,
having \smash{$\#\set{(d_1,d_2)\in\widetilde{\calD}\times\widetilde{\calD}: d_1-d_2=g}$} be constant over all $g\neq0$.
Combining this with~\eqref{eq.EITFF(3,7,1) 1.5} gives the well-known fact~\cite{Konig99,StrohmerH03,XiaZG05,DingF07} that this occurs if and only if \smash{$\set{\bfphi_\gamma}_{\gamma\in\widehat{\calG}}$} is an ETF for \smash{$\bbC^{\widetilde{\calD}}$}.
In Theorem~\ref{thm.EITFF}, we show more generally that a harmonic matrix ensemble forms an EITFF if and only if the generating TFF's isometries $\set{\bfPsi_g}_{g\in\calG}$ have a difference-set-like property.

For example,
\smash{$\widetilde{\calD}=\set{1,2,4}$} is a difference set for $\calG=\bbZ_7$ since every $n\neq0$ appears in its \textit{difference table} the same number of times, namely exactly once:
\begin{equation*}
\begin{array}{r|rrr}
-&1&2&4\\
\hline
1&0&6&4\\
2&1&0&5\\
4&3&1&0
\end{array}.
\end{equation*}
As such,
identifying $\calG=\bbZ_7$ with its Pontryagin dual by identifying $m\in\bbZ_7$ with $\gamma_m:\bbZ_7\rightarrow\bbT$, $\gamma_m(n)=\omega^{mn}$ where $\omega=\exp(\frac{2\pi\rmi}{7})$,
we have that the harmonic matrix ensemble $\set{\bfPhi_m}_{m\in\bbZ_7}$ generated by $\set{\bfPsi_n}_{n\in\bbZ_7}$ of~\eqref{eq.EITFF(3,7,1) 1} forms an $\EITFF(3,7,1)$ for $\bbC^\calD$,
namely
\begin{equation*}
\bfPhi_m=\tfrac1{\sqrt{3}}\left[\begin{array}{ccc}
\omega^m&0&0\\
0&\omega^{2m}&0\\
0&0&\omega^{4m}
\end{array}\right]
\left[\begin{array}{c}1\\1\\1\end{array}\right],
\ \forall m\in\bbZ_7,\quad\text{i.e.,}\quad
\bfPhi
=\tfrac1{\sqrt{3}}\left[\begin{array}{ccccccc}
1&\omega^{1}&\omega^{2}&\omega^{3}&\omega^{4}&\omega^{5}&\omega^{6}\\
1&\omega^{2}&\omega^{4}&\omega^{6}&\omega^{1}&\omega^{3}&\omega^{5}\\
1&\omega^{4}&\omega^{1}&\omega^{5}&\omega^{2}&\omega^{6}&\omega^{4}
\end{array}\right].
\end{equation*}
Here,
$\bfPhi_{m_1}^*\bfPhi_{m_2}^{}$ is a $1\times1$ matrix whose sole entry of
$\frac13[\omega^{m_2-m_1}+\omega^{2(m_2-m_1)}+\omega^{4(m_2-m_1)}]$ has modulus \smash{$\frac{\sqrt{2}}{3}$} whenever $m_1\neq m_2$.

\end{example}

We are now ready to state the main result of this section.

\begin{theorem}
\label{thm.small TFF yields harmonic}
In the setting of Definition~\ref{def.harmonic matrix ensemble},
\smash{$\set{\bfPhi_\gamma}_{\gamma\in\widehat{\calG}}$} are isometries that form a $\TFF(D,G,R)$ for $\bbC^\calD$ if and only if \smash{$\set{\bfPsi_g}_{g\in\calG}$}
are isometries that form a
\smash{$\TFF(R,G,\set{D_g}_{g\in\calG})$} for $\bbC^\calR$,
in which case each of the following holds:
\begin{enumerate}
\renewcommand{\labelenumi}{(\alph{enumi})}
\item
the harmonic matrix ensemble generated by any spatial (resp.\ Naimark) complement of $\set{\bfPsi_g}_{g\in\calG}$ is a Naimark (resp.\ spatial) complement of \smash{$\set{\bfPhi_\gamma}_{\gamma\in\widehat{\calG}}$};
\item
given any appropriately sized unitary matrices $\bfU$ and $\set{\bfV_g}_{g\in\calG}$, there exists a unitary matrix $\bfV$ such that the harmonic matrix ensemble generated by \smash{$\set{\bfU\bfPsi_g\bfV_g}_{g\in\calG}$} is \smash{$\set{\bfV^*\bfPhi_\gamma \bfU^*}_{\gamma\in\widehat{\calG}}$};
\item
ensembles of isometries that directly sum to \smash{$\set{\bfPsi_g}_{g\in\calG}$} generate harmonic matrix ensembles that directly sum to~\smash{$\set{\bfPhi_\gamma}_{\gamma\in\widehat{\calG}}$}.
\end{enumerate}
Furthermore, the $\calR\times\calR$ cross-Gram matrices of $\set{\bfPhi_\gamma}_{\gamma\in\widehat{\calG}}$ are given by
\begin{equation}
\label{eq.cross Gram of big frame}
\bfPhi_{\gamma_1}^*\bfPhi_{\gamma_2}^{}
=\tfrac{R}{D}\sum_{g\in\calG}(\gamma_1^{-1}\gamma_2^{})(g)\bfPsi_g^{}\bfPsi_g^*,
\quad\forall\, \gamma_1,\gamma_2\in\widehat{\calG}.
\end{equation}

\end{theorem}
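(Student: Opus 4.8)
The plan is to reduce the entire statement to the cross-Gram formula~\eqref{eq.cross Gram of big frame} together with a single computation of the fusion frame operator of $\set{\bfPhi_\gamma}_{\gamma\in\widehat{\calG}}$. The organizing observation is that, writing $\bfPsi$ for the fusion synthesis operator of $\set{\bfPsi_g}_{g\in\calG}$ and $\bfM_\gamma$ for the diagonal modulation on $\bbC^\calD$ that sends the $(g,d)$ coordinate to $\gamma(g)$, the definition~\eqref{eq.def of big frame} says precisely that $\bfPhi_\gamma=(\tfrac RD)^{\frac12}\bfM_\gamma\bfPsi^*$. Because $\bfM_\gamma^*\bfM_{\gamma'}^{}=\bfM_{\gamma^{-1}\gamma'}^{}$ and $\bfPsi\bfM_\chi^{}\bfPsi^*=\sum_{g\in\calG}\chi(g)\bfPsi_g^{}\bfPsi_g^*$ (obtained by grouping the columns of $\bfPsi$ according to $g$), I would first prove~\eqref{eq.cross Gram of big frame} unconditionally via $\bfPhi_{\gamma_1}^*\bfPhi_{\gamma_2}^{}=\tfrac RD\bfPsi\bfM_{\gamma_1}^*\bfM_{\gamma_2}^{}\bfPsi^*=\tfrac RD\bfPsi\bfM_{\gamma_1^{-1}\gamma_2}^{}\bfPsi^*$. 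Everything else descends from this identity.

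For the main equivalence I would treat the two defining properties of a TFF separately. Taking $\gamma_1=\gamma_2$ in~\eqref{eq.cross Gram of big frame} gives $\bfPhi_\gamma^*\bfPhi_\gamma^{}=\tfrac RD\sum_g\bfPsi_g^{}\bfPsi_g^*$, which is independent of $\gamma$; hence ``$\set{\bfPhi_\gamma}$ are isometries'' is equivalent to $\sum_g\bfPsi_g^{}\bfPsi_g^*=\tfrac DR\bfI$, that is, to tightness of $\set{\bfPsi_g}$ with the forced constant $\tfrac1R\sum_g D_g=\tfrac DR$. For the other property, I would compute $\bfPhi_\gamma^{}\bfPhi_\gamma^*=\tfrac RD\bfM_\gamma^{}\bfPsi^*\bfPsi\bfM_\gamma^*$ and sum over $\gamma$, invoking the character orthogonality relation $\sum_{\gamma\in\widehat{\calG}}\gamma(g_1-g_2)$, which equals $G$ when $g_1=g_2$ and $0$ otherwise. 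This annihilates every off-diagonal block of $\bfPsi^*\bfPsi$ and scales each diagonal block $\bfPsi_g^*\bfPsi_g$ by $G$, so that $\sum_\gamma\bfPhi_\gamma^{}\bfPhi_\gamma^*=\tfrac{RG}{D}\bigoplus_g\bfPsi_g^*\bfPsi_g$. Since a $\TFF(D,G,R)$ has the forced frame constant $\tfrac{GR}{D}$, ``$\set{\bfPhi_\gamma}$ is tight'' is then equivalent to every $\bfPsi_g^*\bfPsi_g=\bfI$, i.e.\ to ``$\set{\bfPsi_g}$ are isometries.'' Combining the two equivalences in each direction yields the biconditional and pins down all parameters.

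For the three consequences I would again lean on the factorization $\bfPhi_\gamma=(\tfrac RD)^{\frac12}\bfM_\gamma\bfPsi^*$. Part (b) is immediate: replacing $\bfPsi_g$ by $\bfU\bfPsi_g\bfV_g$ turns $\bfPsi^*$ into $\bfV^*\bfPsi^*\bfU^*$ for $\bfV:=\bigoplus_g\bfV_g$, and since $\bfM_\gamma$ acts as the scalar $\gamma(g)$ on the whole $g$th block it commutes with $\bfV^*$, producing $\bfV^*\bfPhi_\gamma\bfU^*$. Part (c) holds because a direct-sum decomposition of $\set{\bfPsi_g}$ forces $\tfrac{R^{(j)}}{D^{(j)}}=\tfrac RD$ through~\eqref{eq.direct sum consistency}, so all summands share the common scaling $(\tfrac RD)^{\frac12}$ and~\eqref{eq.def of big frame} manifestly respects the block structure. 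For part (a) I would compare the two complements through~\eqref{eq.cross Gram of big frame}: feeding the spatial complement's projections $\bfI-\bfPsi_g^{}\bfPsi_g^*$ into the formula and using $\sum_g(\gamma_1^{-1}\gamma_2)(g)=0$ for $\gamma_1\neq\gamma_2$ shows the off-diagonal fusion-Gram blocks of its harmonic ensemble are a fixed negative multiple of those of $\bfPhi^*\bfPhi$---exactly the Naimark relation from the preliminaries---while substituting the small frame's Naimark relation into the analogous expression for the Naimark complement's harmonic ensemble collapses its $\gamma$th projection to $\bfI-\bfPhi_\gamma^{}\bfPhi_\gamma^*$, the spatial-complement relation.

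The computations are individually short, and the one genuinely load-bearing step is the fusion frame operator evaluation, where character orthogonality converts the sum over $\widehat{\calG}$ into projection onto the block-diagonal part of $\bfPsi^*\bfPsi$. The only real care I anticipate is bookkeeping: accommodating possibly-empty index sets $\calD_g$ (so that $\bfPsi_g^{}\bfPsi_g^*=\bfzero$ while $\bfPsi_g^*\bfPsi_g$ is vacuously the identity on $\bbC^\emptyset$), tracking the mixed block sizes of $\bfPsi^*\bfPsi$, and keeping the roles of $R$, $D$, and $G$ straight when reading~\eqref{eq.def of big frame} as $\bfPhi_\gamma=(\tfrac RD)^{\frac12}\bfM_\gamma\bfPsi^*$.
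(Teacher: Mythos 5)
Your proposal is correct and follows essentially the same route as the paper's proof: the unconditional cross-Gram formula, the two separate equivalences (isometry of the big ensemble $\leftrightarrow$ tightness of the small one, tightness of the big $\leftrightarrow$ isometry of the small) via character orthogonality, and the Gram-relation arguments for the complements, block-unitaries, and direct sums in (a)--(c) all match the paper's steps. The one difference is presentational rather than logical: you package the paper's entrywise computations into the factorization $\bfPhi_\gamma=(\tfrac RD)^{1/2}\bfM_\gamma\bfPsi^*$, which makes several steps (notably (b), where the modulation commutes with the block-diagonal unitary $\bfV^*$, and the frame-operator computation) read as one-line operator identities instead of index manipulations.
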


We alert the reader to a subtlety in Theorem~\ref{thm.small TFF yields harmonic}(a): a \textit{spatial} complement of $\set{\bfPsi_g}_{g\in\calG}$ corresponds with a \textit{Naimark} complement of \smash{$\set{\bfPhi_\gamma}_{\gamma\in\widehat{\calG}}$}, and vice versa.

\begin{proof}[Proof of Theorem~\ref{thm.small TFF yields harmonic}]
For any $\gamma_1,\gamma_2\in\widehat{\calG}$, and any $r_1,r_2\in\calR$,
\eqref{eq.def of big frame} implies
\begin{align*}
(\bfPhi_{\gamma_1}^*\bfPhi_{\gamma_2}^{})(r_1,r_2)
&=\sum_{(g,d)\in\calD}\overline{\bfPhi_{\gamma_1}((g,d),r_1)}\bfPhi_{\gamma_1}((g,d),r_2)\\
&=\tfrac{R}{D}\sum_{g\in\calG}\sum_{d\in\calD_g}
\overline{\gamma_1(g)}\bfPsi_g(r_1,d)
\gamma_2(g)\overline{\bfPsi_g(r_2,d)}\\
&=\tfrac{R}{D}\sum_{g\in\calG}(\gamma_1^{-1}\gamma_2^{})(g)
\sum_{d\in\calD_g}\bfPsi_g^{}(r_1,d)\bfPsi_g^*(d,r_2)\\
&=\biggparen{\tfrac{R}{D}\sum_{g\in\calG}(\gamma_1^{-1}\gamma_2^{})(g)\bfPsi_g^{}\bfPsi_g^*}(r_1,r_2),
\end{align*}
namely~\eqref{eq.cross Gram of big frame}.
When $\gamma_1=\gamma_2$, \eqref{eq.cross Gram of big frame} reduces to
$\bfPhi_{\gamma}^*\bfPhi_{\gamma}^{}
=\tfrac{R}{D}\sum_{g\in\calG}\bfPsi_g^{}\bfPsi_g^*$ for all $\gamma\in\widehat{\calG}$.  In particular,
\begin{equation}
\label{eq.pf of small TFF yields harmonic 1}
\bfPhi_{\gamma}^*\bfPhi_{\gamma}^{}=\bfI,\ \forall\,\gamma\in\widehat{\calG}
\quad\Longleftrightarrow\quad
\sum_{g\in\calG}\bfPsi_g^{}\bfPsi_g^*=\tfrac{D}{R}\bfI.
\end{equation}
Similarly, for any $(g_1,d_1),(g_2,d_2)\in\calD$, \eqref{eq.def of big frame} implies
\begin{align*}
\biggparen{\sum_{\gamma\in\widehat{\calG}}\bfPhi_\gamma^{}\bfPhi_\gamma^*}((g_1,d_1),(g_2,d_2))
&=\sum_{\gamma\in\widehat{\calG}}\sum_{r\in\calR}
\bfPhi_\gamma((g_1,d_1),r)\overline{\bfPhi_\gamma((g_2,d_2),r)}\\
&=\tfrac{R}{D}\sum_{\gamma\in\widehat{\calG}}\sum_{r\in\calR}
\gamma(g_1)\overline{\bfPsi_{g_1}(r,d_1)}
\overline{\gamma(g_2)}\bfPsi_{g_2}(r,d_2)\\
&=\tfrac{R}{D}\sum_{r\in\calR}\overline{\bfPsi_{g_1}(r,d_1)}\bfPsi_{g_2}(r,d_2)
\sum_{\gamma\in\widehat{\calG}}\gamma(g_1-g_2).
\end{align*}
Since the character table of $\calG$ is a complex Hadamard matrix, $\sum_{\gamma\in\widehat{\calG}}\gamma(g_{1}-g_{2})=G$ if $g_1=g_2$,
and otherwise,
it vanishes.
As such, the above expression simplifies to
\begin{align*}
\biggparen{\sum_{\gamma\in\widehat{\calG}}\bfPhi_\gamma^{}\bfPhi_\gamma^*}((g_1,d_1),(g_2,d_2))
&=\left\{\begin{array}{cl}
\displaystyle\tfrac{GR}{D}\sum_{r\in\calR}\overline{\bfPsi_{g_1}(r,d_1)}\bfPsi_{g_2}(r,d_2),&
\ g_1=g_2,\\
0,&\ g_1\neq g_2,\end{array}\right.\\
&=\left\{\begin{array}{cl}
\displaystyle\tfrac{GR}{D}(\bfPsi_{g_1}^*\bfPsi_{g_2}^{})(d_1,d_2),&
\ g_1=g_2,\\
0,&\ g_1\neq g_2.\end{array}\right.
\end{align*}
As such,
\begin{equation}
\label{eq.pf of small TFF yields harmonic 2}
\sum_{\gamma\in\widehat{\calG}}\bfPhi_\gamma^{}\bfPhi_\gamma^*=\tfrac{GR}{D}\bfI
\quad\Longleftrightarrow\quad
\bfPsi_g^*\bfPsi_g^{}=\bfI,\ \forall\,g\in\calG.
\end{equation}
Overall, \smash{$\set{\bfPhi_\gamma}_{\gamma\in\widehat{\calG}}$} are the isometries of some $\TFF(D,G,R)$ for $\bbC^\calD$ if and only if the left-hand conditions of both
\eqref{eq.pf of small TFF yields harmonic 1} and~\eqref{eq.pf of small TFF yields harmonic 2} hold,
which occurs if and only if the right-hand conditions of both~\eqref{eq.pf of small TFF yields harmonic 1} and~\eqref{eq.pf of small TFF yields harmonic 2} hold,
namely if and only if \smash{$\set{\bfPsi_g}_{g\in\calG}$}
are the isometries of some
\smash{$\TFF(R,G,\set{D_g}_{g\in\calD_g})$} for $\bbC^\calR$.
For the remainder of this proof, assume that this is indeed the case,
and so necessarily $D_g\leq R$ for all $g$ and moreover
$R\leq D=\sum_{g\in\calG}D_g\leq GR$.

Turning to the proof of (a), note that if $D<GR$ then $D_g<R$ for at least one choice of $g$.
In this case, let $\set{\bfPsi_g^\sharp}_{g\in\calG}$ be any $\set{\calR\times\calE_g}_{g\in\calG}$ isometries, respectively, that are spatial-complementary to $\set{\bfPsi_g}_{g\in\calG}$,
and so yield a \smash{$\TFF(R,G,\set{R-D_g}_{g\in\calD_g})$} for $\bbC^\calR$.
Now let \smash{$\set{\bfPhi_\gamma^\flat}_{\gamma\in\widehat{\calG}}$} be the harmonic matrix ensemble generated by \smash{$\set{\bfPsi_g^\sharp}_{g\in\calG}$}:
for any $\gamma\in\widehat{\calG}$, $(g,e)\in\calE:=\calG\times\set{\calE_g}$ and $r\in\calR$,
let
\begin{equation*}
\bfPhi_\gamma^\flat((g,e),r)
:=(\tfrac R{GR-D})^{\frac12}\gamma(g)\overline{\bfPsi_g^\sharp(r,e)}.
\end{equation*}
From what we have already shown, \smash{$\set{\bfPhi_\gamma^\flat}_{\gamma\in\widehat{\calG}}$} are the isometries of a $\TFF(GR-D,G,R)$ for $\bbC^\calE$.
Here, since $\set{\bfPsi_g}_{g\in\calG}$ and $\set{\bfPsi_g^\sharp}_{g\in\calG}$ are spatial-complementary, $\bfPsi_g^{}\bfPsi_g^*+\bfPsi_g^\sharp(\bfPsi_g^\sharp)^*=\bfI$ for all $g$.
Thus,
for any \smash{$\gamma_1,\gamma_2\in\widehat{\calG}$}, $\gamma_1\neq\gamma_2$,
\eqref{eq.cross Gram of big frame}
along with the fact that $\sum_{\gamma\in\widehat{\calG}}(\gamma_1^{-1}\gamma_2^{})(g)=0$ gives
\begin{multline*}
(\bfPhi_{\gamma_1}^\flat)^*\bfPhi_{\gamma_2}^{\flat}
=\tfrac{R}{GR-D}\sum_{\gamma\in\widehat{\calG}}(\gamma_1^{-1}\gamma_2^{})(g)\bfPsi_g^{\sharp}(\bfPsi_g^\sharp)^*
=\tfrac{R}{GR-D}\sum_{\gamma\in\widehat{\calG}}(\gamma_1^{-1}\gamma_2^{})(g)(\bfI-\bfPsi_g^{}\bfPsi_g^*)\\
=-\tfrac{D}{GR-D}\tfrac{R}{D}\sum_{\gamma\in\widehat{\calG}}(\gamma_1^{-1}\gamma_2^{})(g)\bfPsi_g^{}\bfPsi_g^*
=-\tfrac{D}{GR-D}\bfPhi_{\gamma_1}^*\bfPhi_{\gamma_2}^{},
\end{multline*}
and so these $\TFF(D,G,R)$ and $\TFF(GR-D,G,R)$ are in fact Naimark-complementary, as claimed.

Similarly, if $R<D=\sum_{g\in\calG}D_g$,
then let $\set{\bfPsi_g^\flat}_{g\in\calG}$ be any $\set{\calS\times\calD_g}_{g\in\calG}$ isometries, respectively, that are Naimark-complementary to $\set{\bfPsi_g}_{g\in\calG}$,
and so yield a $\TFF(D-R,G,\set{D_g}_{g\in\calG})$ for $\bbC^\calS$ where $\calS$ is some (arbitrary) set with $\#(\calS)=D-R>0$.
Now let \smash{$\set{\bfPhi_\gamma^\sharp}_{\gamma\in\widehat{\calG}}$} be the harmonic matrix ensemble generated by \smash{$\set{\bfPsi_g^\flat}_{g\in\calG}$}:
for any $\gamma\in\widehat{\calG}$, $(g,d)\in\calD=\calG\times\set{\calD_g}$ and $s\in\calS$,
let
\begin{equation*}
\bfPhi_\gamma^\sharp((g,d),s)
:=(\tfrac{D-R}D)^{\frac12}\gamma(g)\overline{\bfPsi_g^\flat(s,d)}.
\end{equation*}
From what we have already shown, \smash{$\set{\bfPhi_\gamma^\sharp}_{\gamma\in\widehat{\calG}}$} are the isometries of a $\TFF(D,G,D-R)$ for $\bbC^\calD$.
To show that they are spatial-complementary to~\smash{$\set{\bfPhi_\gamma}_{\gamma\in\widehat{\calG}}$} it suffices to show that
\smash{$\bfPhi_\gamma^{}\bfPhi_\gamma^*+\bfPhi_\gamma^\sharp(\bfPhi_\gamma^\sharp)^*=\bfI$} for all $\gamma\in\widehat{\calG}$.
To see this, note that since $\set{\bfPsi_g}_{g\in\calG}$ and $\set{\bfPsi_g^\flat}_{g\in\calG}$ are Naimark complementary,
\begin{equation*}
\tfrac RD\bfPsi_{g_1}^*\bfPsi_{g_2}^{}
+\tfrac{D-R}D(\bfPsi_{g_1}^\flat)^*\bfPsi_{g_2}^{\flat}
=\left\{\begin{array}{cl}
\bfI,&\ g_1=g_2,\\
\bfzero,&\ g_1\neq g_2,
\end{array}\right.
\end{equation*}
and so for any $(g_1,d_1),(g_2,d_2)\in\calD=\calG\times\set{\calD_g}$,
\begin{align*}
&[\bfPhi_\gamma^{}\bfPhi_\gamma^*+\bfPhi_\gamma^\sharp(\bfPhi_\gamma^\sharp)^*]
((g_1,d_1),(g_2,d_2))\\
&\qquad=\sum_{r\in\calR}\bfPhi_\gamma((g_1,d_1),r)\overline{\bfPhi_\gamma((g_2,d_2),r)}
+\sum_{s\in\calS}\bfPhi_\gamma^\sharp((g_1,d_1),s)\overline{\bfPhi_\gamma^\sharp((g_2,d_2),s)}\\
&\qquad=\gamma(g_1-g_2)
\biggbracket{
\tfrac RD\sum_{r\in\calR}\overline{\bfPsi_{g_1}(r,d_1)}\bfPsi_{g_2}(r,d_2)
+\tfrac{D-R}D\sum_{s\in\calS}\overline{\bfPsi_{g_1}^\flat(s,d_1)}\bfPsi_{g_2}^\flat(s,d_2)}\\
&\qquad=\gamma(g_1-g_2)
\Bigbracket{
\tfrac RD\bfPsi_{g_1}^*\bfPsi_{g_2}^{}
+\tfrac{D-R}D(\bfPsi_{g_1}^\flat)^*\bfPsi_{g_2}^{\flat}}(d_1,d_2)\\
&\qquad=\gamma(g_1-g_2)
\left\{\begin{array}{cl}
\bfI(d_1,d_2),&\ g_1=g_2,\\
0,&\ g_1\neq g_2,
\end{array}\right.\\
&\qquad=\left\{\begin{array}{cl}
1,&\ (g_1,d_1)=(g_2,d_2),\\
0,&\ (g_1,d_1)\neq(g_2,d_2).
\end{array}\right.
\end{align*}

For (b), note that if $\set{\bfPsi_g}_{g\in\calG}$ are the isometries of a  \smash{$\TFF(R,G,\set{D_g}_{g\in\calD_g})$} for $\bbC^\calR$,
then for any \smash{$\widetilde{\calR}\times\calR$} unitary matrix $\bfU$,
$\set{\bfU\bfPsi_g}_{g\in\calG}$ are the isometries of a  \smash{$\TFF(R,G,\set{D_g}_{g\in\calD_g})$} for \smash{$\bbC^{\widetilde{\calR}}$},
and the harmonic matrix ensemble they generate can be obtained by multiplying $\set{\bfPhi_\gamma}_{\gamma\in\widehat{\calG}}$ on the right by $\bfU^*$:
for any \smash{$\gamma\in\widehat{\calG}$}, $(g,d)\in\calD=\calG\times\set{D_g}$, \smash{$\tilde{r}\in\widetilde{\calR}$},
\begin{multline*}
(\tfrac RD)^{\frac12}\gamma(g)\overline{(\bfU\bfPsi_g)(\tilde{r},d)}
=(\tfrac RD)^{\frac12}
\gamma(g)\overline{\sum_{r\in\calR}\bfU(\tilde{r},r)\bfPsi_g(r,d)}\\
=\sum_{r'\in\calR}(\tfrac RD)^{\frac12}
\gamma(g)\overline{\bfPsi_g(r,d)}\overline{\bfU(\tilde{r},r)}
=\sum_{r'\in\calR}\bfPhi_\gamma((g,d),r)\bfU^*(r,\tilde{r})
=(\bfPhi_\gamma\bfU^*)((g,d),\tilde{r}).
\end{multline*}
This thus only affects the bases of the subspaces of our $\TFF(D,G,R)$,
not the subspaces themselves.

As a dual relation, note that if $\set{\bfV_g}_{g\in\calG}$ are any appropriately sized unitaries, then $\set{\bfPsi_g\bfV_g}_{g\in\calG}$ are alternative isometries for our \smash{$\TFF(R,G,\set{D_g}_{g\in\calD_g})$} for $\bbC^\calR$, which in turn generates a harmonic $\TFF(D,G,R)$ whose isometries are left-unitarily equivalent to the original $\TFF(D,G,R)$:
letting $\bfV$ be the $\calD\times\calD$ block-diagonal unitary whose $g$th diagonal block is $\bfV_g$, i.e.,
\begin{equation*}
\bfV((g_1,d_1),(g_2,d_2))=\left\{\begin{array}{cl}
\bfV_{g_1}(d_1,d_2),&\ g_1=g_2,\\
0,&\ g_1\neq g_2,\end{array}\right.
\end{equation*}
note that for any $\gamma\in\widehat{\calG}$, $(d,r)\in\calD=\calG\times\set{\calD_g}$, $r\in\calR$,
\begin{multline*}
(\tfrac RD)^{\frac12}\gamma(g)\overline{(\bfPsi_g\bfV_g)(r,d)}
=(\tfrac RD)^{\frac12}\gamma(g)
\overline{\sum_{d'\in\calD_g}\bfPsi_g(r,d')\bfV_g(d',d)}
=\sum_{d'\in\calD_g}\bfV_g^*(d,d')\bfPhi_\gamma((g,d'),r)\\
=\sum_{(g',d')\in\calD}\bfV^*((g,d),(g',d'))\bfPhi_\gamma((g',d'),r)
=(\bfV^*\bfPhi_\gamma)((g,d),r).
\end{multline*}

For (c), assume that $\calR$ partitions into nonempty sets $\set{\calR^{(j)}}_{j\in\calJ}$,
that each $\calD_g$ partitions into \smash{$\set{\calD_g^{(j)}}_{j\in\calJ}$} and that each $\calR\times\calD_g$ isometry $\bfPsi_g$ is a block-diagonal matrix whose $j$th diagonal block is some \smash{$\calR^{(j)}\times\calD_g^{(j)}$} matrix \smash{$\bfPsi_g^{(j)}$}.
As outlined in Section~2, it then follows that for each $j\in\calJ$,
\smash{$\set{\bfPsi_g^{(j)}}_{g\in\calG}$} are the isometries of a \smash{$\TFF(R^{(j)},G,\set{D_g^{(j)}})$} for \smash{$\bbC^{\calR^{(j)}}$},
and satisfies the analogue of \eqref{eq.direct sum consistency} in this context, namely that
\begin{equation*}
\tfrac{D^{(j)}}{R^{(j)}}
=\tfrac1{R^{(j)}}\sum_{g\in\calG}D_g^{(j)}
=\tfrac1{R}\sum_{g\in\calG}D_g
=\tfrac{D}{R},
\quad\forall\, j\in\calJ,
\end{equation*}
where, for each $j\in\calJ$, $D^{(j)}$ is defined to be the cardinality of
\smash{$\calD^{(j)}:=\bigcup_{g\in\calG}(\set{g}\times\calD_g^{(j)})$}.
For each $j\in\calJ$,
the harmonic matrix ensemble \smash{$\set{\bfPhi_\gamma^{(j)}}_{\gamma\in\widehat{\calG}}$} generated by $\set{\bfPsi_g^{(j)}}_{g\in\calG}$ forms a
$\TFF(D^{(j)},G,R^{(j)})$ for \smash{$\bbC^{\calD^{(j)}}$}:
for any $\gamma\in\widehat{\calG}$, $(g,d)\in\calD^{(j)}$ and $r\in\calR^{(j)}$,
\begin{equation*}
\bfPhi_\gamma^{(j)}((g,d),r)
=(\tfrac{R^{(j)}}{D^{(j)}})^{\frac12}\gamma(g)\overline{\bfPsi_g^{(j)}(r,d)}
=(\tfrac{R}{D})^{\frac12}\gamma(g)\overline{\bfPsi_g^{(j)}(r,d)}.
\end{equation*}
These sets $\set{\calD^{(j)}}$ form a partition of $\calD$:
they satisfy
\begin{equation*}
\bigcup_{j\in\calJ}\calD^{(j)}
=\bigcup_{g\in\calG}\biggbracket{\set{g}\times\Bigparen{\,\bigcup_{j\in\calJ}\calD_g^{(j)}}}
=\bigcup_{g\in\calG}(\set{g}\times\calD_g)
=\calD,
\end{equation*}
and are disjoint since, for each $g$, the sets $\set{\calD_g^{(j)}}_{j\in\calJ}$ are disjoint.
When combined, these facts give that $\set{\bfPhi_\gamma}_{\gamma\in\widehat{\calG}}$ is the direct sum of \smash{$\set{\bfPhi_\gamma^{(j)}}_{\gamma\in\widehat{\calG}}$} over all $j\in\calJ$:
for any $\gamma\in\widehat{\calG}$, $(g,d)\in\calD$ and $r\in\calR$,
we have that \smash{$(g,d)\in\calD^{(j)}$} for any particular $j$ if and only if \smash{$d\in\calD_g^{(j)}$} for that same $j$, and so
\begin{align*}
\bfPhi_\gamma((g,d),r)
&=(\tfrac RD)^{\frac12}\gamma(g)\overline{\bfPsi_g(r,d)}\\
&=\left\{\begin{array}{cl}
(\tfrac RD)^{\frac12}\gamma(g)\overline{\bfPsi_{g}^{(j)}(r,d)},
&\text{ if }\exists\,j\in\calJ\text{ such that }r\in\calR^{(j)}\text{ and }d\in\calD_g^{(j)},\\
0,&\ \text{else},\end{array}\right.\\
&=\left\{\begin{array}{cl}
\bfPhi_\gamma^{(j)}((g,d),r),&\text{ if }\exists\,j\in\calJ\text{ such that }(g,d)\in\calD^{(j)}\text{ and }r\in\calR^{(j)},\\
0,&\ \text{else},\end{array}\right.
\end{align*}
meaning $\bfPhi_\gamma$ is indeed the block-diagonal matrix whose $j$th diagonal block is $\bfPhi_\gamma^{(j)}$.
\end{proof}

A word of caution regarding (b):
though right-multiplying the isometries $\set{\bfPsi_g}_{g\in\calG}$ of the generating $\TFF(R,G,\set{D_g}_{g\in\calG})$ by unitaries $\set{\bfV_g}_{g\in\calG}$ does equate to left-multiplying the isometries $\set{\bfPhi_\gamma}_{\gamma\in\widehat{\calG}}$ of the harmonic $\TFF(D,G,R)$ by the block-diagonal unitary $\bfV^*$ whose $g$th diagonal block is $\bfV_g^*$,
not every $\calD\times\calD$ unitary is of this form.
In general, left-multiplying such isometries $\set{\bfPhi_\gamma}_{\gamma\in\widehat{\calG}}$ by a unitary will,
of course, yield the isometries of some $\TFF(D,G,R)$.
But,
in general,
there is no guarantee that these isometries will form a harmonic matrix ensemble.
That said,
such a transformation has no effect on the fusion Gram matrix $\bfPhi^*\bfPhi$ of such isometries.
In particular,
in light of~\eqref{eq.cross Gram of big frame},
their fusion Gram matrix is $\widehat{\calG}$-block circulant.
In the next result, we prove a converse to this observation, and use it to characterize TFFs that are ``harmonic" in some basis-independent sense.

\begin{theorem}
\label{thm.harmonic TFF characterization}
Letting $\widehat{\calG}$ be the Pontryagin dual of a finite abelian group $\calG$,
and letting the subspaces $\set{\calU_\gamma}_{\gamma\in\widehat{\calG}}$ form a $\TFF(D,G,R)$ for some complex Hilbert space $\calV$, the following are equivalent:
\begin{enumerate}
\renewcommand{\labelenumi}{(\roman{enumi})}
\item
$\set{\calU_\gamma}_{\gamma\in\widehat{\calG}}$ is unitarily equivalent to a $\TFF(D,G,R)$ that is generated by a $\TFF(R,G,\set{D_g}_{g\in\calG})$ as in Definition~\ref{def.harmonic matrix ensemble}.\smallskip
\item
There exist isometries $\set{\bfPhi_\gamma}_{\gamma\in\widehat{\calG}}$ onto $\set{\calU_\gamma}_{\gamma\in\widehat{\calG}}$ whose corresponding fusion Gram matrix is $\widehat{\calG}$-block-circulant,
i.e., $\bfPhi_{\gamma_1}^*\bfPhi_{\gamma_2}^{}=\bfPhi_{\gamma_3}^*\bfPhi_{\gamma_4}^{}$
whenever $\gamma_1^{-1}\gamma_2^{}=\gamma_3^{-1}\gamma_4^{}$.\smallskip
\item
There exists a subspace $\calU$ of $\calV$ and a $\calV$-unitary representation $\pi$ of $\widehat{\calG}$ (i.e., a homomorphism $\pi$ from $\widehat{\calG}$ into the group of unitary operators on $\calV$) such that $\calU_\gamma=\pi(\gamma)\calU$ for all $\gamma\in\widehat{\calG}$.
\end{enumerate}
\end{theorem}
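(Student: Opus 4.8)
The plan is to establish the cycle of implications (i) $\Rightarrow$ (ii) $\Rightarrow$ (iii) $\Rightarrow$ (i), of which the middle implication is the substantive one.

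The implication (i) $\Rightarrow$ (ii) is nearly immediate from Theorem~\ref{thm.small TFF yields harmonic}. If $\set{\calU_\gamma}_{\gamma\in\widehat{\calG}}$ is unitarily equivalent to a harmonic TFF, then there is a unitary $\bfU$ on $\calV$ and harmonic-ensemble isometries $\set{\bfPhi_\gamma'}_{\gamma\in\widehat{\calG}}$ with $\calU_\gamma=\bfU\bfPhi_\gamma'(\bbC^\calR)$; setting $\bfPhi_\gamma:=\bfU\bfPhi_\gamma'$ gives isometries onto $\set{\calU_\gamma}_{\gamma\in\widehat{\calG}}$ whose cross-Gram matrices $\bfPhi_{\gamma_1}^*\bfPhi_{\gamma_2}^{}=(\bfPhi_{\gamma_1}')^*\bfPhi_{\gamma_2}'$ are given by~\eqref{eq.cross Gram of big frame} and so depend only on $\gamma_1^{-1}\gamma_2^{}$, which is exactly $\widehat{\calG}$-block-circulance.

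The crux is (ii) $\Rightarrow$ (iii). Write $\bfPhi$ for the synthesis operator of the concatenated orthonormal bases $\set{\bfphi_{\gamma,r}}_{(\gamma,r)\in\widehat{\calG}\times\calR}$, so that $\bfG:=\bfPhi^*\bfPhi$ is the block-circulant fusion Gram matrix and $\bfPhi\bfPhi^*=A\bfI$ with $A=\tfrac{GR}{D}$ by tightness. For each $\gamma_0\in\widehat{\calG}$, let $\bfP_{\gamma_0}$ be the permutation operator on $\bbC^{\widehat{\calG}\times\calR}$ with $(\bfP_{\gamma_0}\bfx)(\gamma,r)=\bfx(\gamma_0^{-1}\gamma,r)$; these satisfy $\bfP_{\gamma_0}\bfP_{\gamma_0'}=\bfP_{\gamma_0\gamma_0'}$ and $\bfP_{\gamma_0}^*=\bfP_{\gamma_0^{-1}}$, and the block-circulant hypothesis is precisely that $\bfG$ commutes with every $\bfP_{\gamma_0}$. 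I would then define
\[
\pi(\gamma_0):=\tfrac1A\bfPhi\bfP_{\gamma_0}\bfPhi^*
\]
and verify, in order: the intertwining relation $\pi(\gamma_0)\bfPhi=\bfPhi\bfP_{\gamma_0}$ (using $\bfPhi\bfG=\bfPhi\bfPhi^*\bfPhi=A\bfPhi$ together with $\bfG\bfP_{\gamma_0}=\bfP_{\gamma_0}\bfG$); the homomorphism property $\pi(\gamma_0)\pi(\gamma_0')=\pi(\gamma_0\gamma_0')$ with $\pi(1)=\bfI$ (from the same two identities and multiplicativity of $\bfP$); and $\pi(\gamma_0)^*=\tfrac1A\bfPhi\bfP_{\gamma_0^{-1}}\bfPhi^*=\pi(\gamma_0^{-1})=\pi(\gamma_0)^{-1}$, so each $\pi(\gamma_0)$ is unitary. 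Applying the intertwining relation to $\bfdelta_{(\gamma,r)}$ gives $\pi(\gamma_0)\bfphi_{\gamma,r}=\bfphi_{\gamma_0\gamma,r}$, whence $\calU_{\gamma_0\gamma}=\pi(\gamma_0)\calU_\gamma$; taking $\calU:=\calU_1$ yields $\calU_\gamma=\pi(\gamma)\calU$, which is (iii).

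For (iii) $\Rightarrow$ (i), I would invoke the representation theory of the finite abelian group $\widehat{\calG}$. Since $\set{\pi(\gamma)}_{\gamma\in\widehat{\calG}}$ are commuting unitaries, $\calV$ is the orthogonal direct sum of their simultaneous eigenspaces; as the irreducible characters of $\widehat{\calG}$ are exactly the evaluations $\gamma\mapsto\gamma(g)$ indexed by $g\in\calG\cong\widehat{\widehat{\calG}}$, this reads $\calV=\bigoplus_{g\in\calG}\calV_g$ with $\pi(\gamma)|_{\calV_g}=\gamma(g)\bfI$. Setting $D_g:=\dim\calV_g$ and choosing an orthonormal basis adapted to this decomposition identifies $\calV\cong\bbC^\calD$ with $\calD=\calG\times\set{\calD_g}$ in such a way that each $\pi(\gamma)$ becomes the diagonal modulation operator carrying the block indexed by $g$ to the phase $\gamma(g)$. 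Writing $\bfPhi_1$ for the isometry onto $\calU=\calU_1$ and defining $\bfPsi_g(r,d):=(\tfrac DR)^{\frac12}\overline{\bfPhi_1((g,d),r)}$, the relation $\bfPhi_\gamma=\pi(\gamma)\bfPhi_1$ becomes exactly~\eqref{eq.def of big frame}; since $\set{\bfPhi_\gamma}_{\gamma\in\widehat{\calG}}$ are isometries forming a $\TFF(D,G,R)$ for $\bbC^\calD$, Theorem~\ref{thm.small TFF yields harmonic} guarantees that $\set{\bfPsi_g}_{g\in\calG}$ are isometries forming the generating $\TFF(R,G,\set{D_g}_{g\in\calG})$, giving (i). The main obstacle is the construction in (ii) $\Rightarrow$ (iii): a naive definition $\bfphi_{\gamma,r}\mapsto\bfphi_{\gamma_0\gamma,r}$ requires the linear dependencies among the frame vectors to be shift-invariant, and the clean way to package this is the tightness-based formula for $\pi(\gamma_0)$, whose good properties all flow from $\bfPhi\bfPhi^*=A\bfI$ and $\bfG\bfP_{\gamma_0}=\bfP_{\gamma_0}\bfG$; a secondary point needing care is the duality bookkeeping in (iii) $\Rightarrow$ (i), since it is the indexing of the eigenspaces of $\pi$ by $\calG\cong\widehat{\widehat{\calG}}$ that makes the modulation structure of~\eqref{eq.def of big frame} reappear.
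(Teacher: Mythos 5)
Your proposal is correct and follows essentially the same route as the paper's proof: your operator $\pi(\gamma_0)=\tfrac1A\bfPhi\bfP_{\gamma_0}\bfPhi^*$ is exactly the paper's $\tfrac{D}{GR}\sum_{\gamma}\bfPhi_{\gamma_0\gamma}^{}\bfPhi_\gamma^*$, and your spectral decomposition of the commuting unitaries in (iii)~$\Rightarrow$~(i), with eigenspaces indexed by $g\in\calG\cong\widehat{\widehat{\calG}}$ and $\bfPsi_g(r,d)$ read off from the adapted basis, matches the paper's construction step for step. The only difference is cosmetic: you package the verification through the permutation operators $\bfP_{\gamma_0}$ and the two identities $\bfPhi\bfG=A\bfPhi$ and $\bfG\bfP_{\gamma_0}=\bfP_{\gamma_0}\bfG$, whereas the paper carries out the same computations block-wise with explicit reindexed sums.
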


\begin{proof}
%
(i $\Rightarrow$ ii)
Assume that \smash{$\set{\calU_\gamma}_{\gamma\in\widehat{\calG}}$} is unitarily equivalent to a $\TFF(D,G,R)$ for $\bbC^\calD$ that arises from Definition~\ref{def.harmonic matrix ensemble}.
Specifically, assume there exist isometries $\set{\bfPsi_g}_{g\in\calG}$ of a $\TFF(R,G,\set{D_g}_{g\in\calG})$ for $\bbC^\calR$ that generate a harmonic matrix ensemble \smash{$\set{\bfPhi_\gamma}_{\gamma\in\widehat{\calG}}$}, which in turn form a $\TFF(D,G,R)$ for $\bbC^\calD$, and that there exists a unitary operator $\bfV:\bbC^\calD\rightarrow\calV$ such that for each $\gamma\in\widehat{\calG}$, $\calU_\gamma$ is the image of the isometry \smash{$\widetilde{\bfPhi}_\gamma:=\bfV\bfPhi_\gamma$}.
Here, \eqref{eq.cross Gram of big frame} gives that the cross-Gram matrix
\begin{equation*}
\widetilde{\bfPhi}_{\gamma_1}^*\widetilde{\bfPhi}_{\gamma_2}^{}
=\bfPhi_{\gamma_1}^*\bfV^*\bfV\bfPhi_{\gamma_2}^{}
=\bfPhi_{\gamma_1}^*\bfPhi_{\gamma_2}^{}
=\tfrac{R}{D}\sum_{g\in\calG}(\gamma_1^{-1}\gamma_2^{})(g)\bfPsi_g^{}\bfPsi_g^*
\end{equation*}
depends only on the quotient $\gamma_1^{-1}\gamma_2$ of $\gamma_1$ and $\gamma_2$,
as opposed to their individual values.
Thus, the fusion Gram matrix of \smash{$\set{\widetilde{\bfPhi}_\gamma}_{\gamma\in\widehat{\calG}}$} is $\widehat{\calG}$-block-circulant, as needed.

(ii $\Rightarrow$ iii)
Assume there exist isometries \smash{$\set{\bfPhi_\gamma}_{\gamma\in\widehat{\calG}}$},
\smash{$\bfPhi_\gamma:\bbC^\calR\rightarrow\calV$}, such that \smash{$\bfPhi_\gamma(\bbC^{\calR})=\calU_\gamma$} for each \smash{$\gamma\in\widehat{\calG}$} and such that \smash{$\bfPhi_{\gamma_1}^*\bfPhi_{\gamma_2}^{}=\bfPhi_{\gamma_3}^*\bfPhi_{\gamma_4}^{}$}
whenever \smash{$\gamma_1^{-1}\gamma_2^{}=\gamma_3^{-1}\gamma_4^{}$}.
Defining $\calU$ to be $\calU_1=\bfPhi_1(\bbC^\calR)$,
it thus suffices to construct a $\calV$-unitary representation of $\widehat{\calG}$ with the property that
\begin{equation}
\label{eq.pf of harmonic TFF characterization 4}
\pi(\gamma_1)\bfPhi_{\gamma_2}=\bfPhi_{\gamma_1\gamma_2},
\quad\forall\,\gamma_1,\gamma_2\in\widehat{\calG},
\end{equation}
since when $\gamma_1=\gamma$ and $\gamma_2=1$ this implies
$\calU_\gamma
=\bfPhi_\gamma(\bbC^\calR)
=\pi(\gamma)\calU$,
as desired.
There is at most one choice of such a mapping $\pi$ from $\widehat{\calG}$ into the space of linear operators from $\calV$ into itself:
by~\eqref{eq.pf of harmonic TFF characterization 4},
we necessarily have
$\pi(\gamma_1)\bfPhi_{\gamma_2}^{}\bfPhi_{\gamma_2}^*
=\bfPhi_{\gamma_1\gamma_2}^{}\bfPhi_{\gamma_2}^*$,
and since $\sum_{\gamma}\bfPhi_\gamma^{}\bfPhi_\gamma^*=\tfrac{GR}{D}\bfI$,
summing this equation over all $\gamma_2$ implies that necessarily
\begin{equation}
\label{eq.pf of harmonic TFF characterization 5}
\pi(\gamma_1)
=\tfrac{D}{GR}\sum_{\gamma_2\in\widehat{\calG}}\bfPhi_{\gamma_1\gamma_2}^{}\bfPhi_{\gamma_2}^*,
\quad\forall\,\gamma_1\in\widehat{\calG}.
\end{equation}
Taking~\eqref{eq.pf of harmonic TFF characterization 5} as our definition of $\pi$,
note that for any $\gamma_1,\gamma_2\in\widehat{\calG}$,
our block-circulant assumption gives that
\smash{$\bfPhi_{\gamma_3}^*\bfPhi_{\gamma_2\gamma_4}^{}
=\bfPhi_{\smash{\gamma_2^{-1}\gamma_3^{}}}^*\bfPhi_{\gamma_4}^{}$}
for all $\gamma_3,\gamma_4\in\widehat{\calG}$ and so letting $\gamma_5=\gamma_2^{-1}\gamma_3^{}$ gives
\begin{multline*}
\pi(\gamma_1)\pi(\gamma_2)
=\biggparen{\tfrac{D}{GR}\sum_{\gamma_3\in\widehat{\calG}}\bfPhi_{\gamma_1\gamma_3}^{}\bfPhi_{\gamma_3}^*}
\biggparen{\tfrac{D}{GR}\sum_{\gamma_4\in\widehat{\calG}}\bfPhi_{\gamma_2\gamma_4}^{}\bfPhi_{\gamma_4}^*}
=\tfrac{D^2}{G^2R^2}\sum_{\gamma_3\in\widehat{\calG}}\sum_{\gamma_4\in\widehat{\calG}}
\bfPhi_{\gamma_1\gamma_3}^{}\bfPhi_{\gamma_3}^*\bfPhi_{\gamma_2\gamma_4}^{}\bfPhi_{\gamma_4}^*\\
=\tfrac{D^2}{G^2R^2}\sum_{\gamma_3\in\widehat{\calG}}
\bfPhi_{\gamma_1\gamma_3}^{}\bfPhi_{\smash{\gamma_2^{-1}\gamma_3^{}}}^*
\sum_{\gamma_4\in\widehat{\calG}}\bfPhi_{\gamma_4}^{}\bfPhi_{\gamma_4}^*
=\tfrac{D}{GR}\sum_{\gamma_5\in\widehat{\calG}}
\bfPhi_{\gamma_1\gamma_2\gamma_5}^{}\bfPhi_{\gamma_5}^*
=\pi(\gamma_1\gamma_2).
\end{multline*}
When combined with the fact that
$\pi(1)
=\tfrac{D}{GR}\sum_{\gamma\in\widehat{\calG}}\bfPhi_{\gamma}^{}\bfPhi_{\gamma}^*
=\bfI$ and moreover that
\begin{equation*}
[\pi(\gamma_1)]^*
=\tfrac{D}{GR}\sum_{\gamma_2\in\widehat{\calG}}\bfPhi_{\gamma_2}^{}\bfPhi_{\gamma_1\gamma_2}^*
=\tfrac{D}{GR}\sum_{\gamma_3\in\widehat{\calG}}\bfPhi_{\smash{\gamma_1^{-1}\gamma_3^{}}}^{}\bfPhi_{\gamma_3}^*
=\pi(\gamma_1^{-1}),
\end{equation*}
we have that $\pi$ is indeed a homomorphism for $\widehat{\calG}$ into the group of unitary operators on $\calV$.
To show that it satisfies~\eqref{eq.pf of harmonic TFF characterization 4},
note that for any $\gamma_1,\gamma_2\in\widehat{\calG}$,
our block-circulant assumption gives that
$\bfPhi_{\gamma_3}^*\bfPhi_{\gamma_2}^{}=\bfPhi_{\gamma_1\gamma_3}^*\bfPhi_{\gamma_1\gamma_2}^{}$ for all $\gamma_3\in\widehat{\calG}$, and so
\begin{equation*}
\pi(\gamma_1)\bfPhi_{\gamma_2}
=\tfrac{D}{GR}\sum_{\gamma_3\in\widehat{\calG}}\bfPhi_{\gamma_1\gamma_3}^{}\bfPhi_{\gamma_3}^*\bfPhi_{\gamma_2}^{}
=\biggparen{\tfrac{D}{GR}\sum_{\gamma_3\in\widehat{\calG}}\bfPhi_{\gamma_1\gamma_3}^{}\bfPhi_{\gamma_1\gamma_3}^*}\bfPhi_{\gamma_1\gamma_2}^{}
=\bfPhi_{\gamma_1\gamma_2}.
\end{equation*}

(iii $\Rightarrow$ i)
Assume that there exists some subspace $\calU$ of $\calV$ and some homomorphism $\pi$ from $\widehat{\calG}$ into the group of unitary operators on $\calV$ such that $\calU_\gamma=\pi(\gamma)\calU$ for all \smash{$\gamma\in\widehat{\calG}$}.
We now factor $\pi$ into irreducible (one-dimensional) subrepresentations.
To explain,
since $\widehat{\calG}$ is abelian,
the unitaries \smash{$\set{\pi(\gamma)}_{\gamma\in\widehat{\calG}}$} commute,
implying there exists an orthonormal basis $\set{\bfv_d}_{d=1}^D$ for $\calV$ such that for each $d\in[D]=\set{1,\dotsc,D}$ and $\gamma\in\widehat{\calG}$ we have that $\bfv_d$ is an eigenvector for $\pi(\gamma)$.
The fact that
\smash{$\set{\pi(\gamma)}_{\gamma\in\widehat{\calG}}$} commute moreover implies that for each $d\in[D]$ the function $\gamma\mapsto\ip{\bfv_d}{\pi(\gamma)\bfv_d}$ that maps $\widehat{\calG}$ to the $d$th (unimodular) eigenvalue of $\pi(\gamma)$ is itself a homomorphism, and so is a character of $\widehat{\calG}$.
Identifying $\calG$ with the Pontryagin dual of $\widehat{\calG}$ in the standard way---via the isomorphism $g\mapsto(\gamma\mapsto\gamma(g))$---gives that for any $d\in[D]$ there exists a unique $g_d\in\calG$ such that $\pi(\gamma)\bfv_d =\gamma(g_d)\bfv_d$ for all $\gamma\in\widehat{\calG}$.
As such, defining $\calD_g:=\set{d\in[D]: g_d=g}$ for each $g\in\calG$ yields a partition $\set{\calD_g}_{g\in\calG}$ of $[D]$.
Now define $\calD:=\cup_{g\in\calG}(\set{g}\times\calD_g)$, as in Definition~\ref{def.harmonic matrix ensemble};
this has the effect of pairing each eigenvector index $d$ with its corresponding group element $g$.
In particular, reindexing the vectors $\set{\bfv_d}_{d=1}^D$ of our orthonormal eigenbasis as $\set{\bfv_{(g,d)}}_{(g,d)\in\calD}$, we have
\begin{equation}
\label{eq.pf of harmonic TFF characterization 1}
\pi(\gamma)\bfv_{(g,d)} =\gamma(g)\bfv_{(g,d)},\quad\forall\,\gamma\in\widehat{\calG},\,
(g,d)\in\calD.
\end{equation}
Now let $\bfV:\bbC^\calD\rightarrow\calV$,
$\bfV\bfy:=\sum_{(g,d)\in\calD}\bfy(g,d)\bfv_{(g,d)}$ be the (unitary) synthesis operator of $\set{\bfv_{(g,d)}}_{(g,d)\in\calD}$,
and let $\bfU:\bbC^\calR\rightarrow\calV$,
$\bfU\bfx:=\sum_{r\in\calR}\bfx(r)\bfu_r$ be the synthesis operator of some arbitrarily chosen orthonormal basis $\set{\bfu_r}_{r\in\calR}$ for $\calU$.
Here, since $\bfU$ is an isometry onto $\calU$,
$\set{\pi(\gamma)\bfU}_{\gamma\in\widehat{\calG}}$ are isometries onto the subspaces
\smash{$\set{\calU_\gamma}_{\gamma\in\widehat{\calG}}
=\set{\pi(\gamma)\,\calU}_{\gamma\in\widehat{\calG}}$} of our given $\TFF(D,G,R)$ for $\calV$.
Applying $\bfV^*$ to them yields unitarily equivalent isometries for a $\TFF(D,G,R)$ for $\bbC^\calD$, namely
\begin{equation}
\label{eq.pf of harmonic TFF characterization 2}
\set{\bfPhi_\gamma}_{\gamma\in\widehat{\calG}},
\quad\bfPhi_\gamma:=\bfV^*\pi(\gamma)\bfU.
\end{equation}
For any $g\in\calG$, now define a $\calR\times\calD_g$ matrix with entries
\begin{equation}
\label{eq.pf of harmonic TFF characterization 3}
\bfPsi_g(r,d):=(\tfrac{D}{R})^{\frac12}\ip{\bfu_r}{\bfv_{(g,d)}},
\quad\forall\, r\in\calR,\, d\in\calD_g.
\end{equation}
Now note that
\smash{$\set{\bfPsi_g}_{g\in\calG}$} and \smash{$\set{\bfPhi_\gamma}_{\gamma\in\widehat{\calG}}$} are related via~\eqref{eq.def of big frame}:
for any $\gamma\in\widehat{\calG}$, $g\in\calG$, $d\in\calD_g$ and $r\in\calR$,
combining~\eqref{eq.pf of harmonic TFF characterization 2},
\eqref{eq.pf of harmonic TFF characterization 1} and \eqref{eq.pf of harmonic TFF characterization 3} gives
\begin{multline*}
\bfPhi_\gamma((g,d),r)
=(\bfV^*\pi(\gamma)\bfU)((g,d),r)
=\ip{\bfdelta_{(g,d)}}{\bfV^*\pi(\gamma)\bfU\bfdelta_r}\\
=\ip{\bfv_{(g,d)}}{\pi(\gamma)\bfu_r}
=\ip{[\pi(\gamma)]^*\bfv_{(g,d)}}{\bfu_r}
=\ip{\overline{\gamma(g)}\bfv_{(g,d)}}{\bfu_r}\\
=\gamma(g)\ip{\bfv_{(g,d)}}{\bfu_r}
=(\tfrac{R}{D})^{\frac12}\gamma(g)\overline{(\tfrac{D}{R})^{\frac12}\ip{\bfu_r}{\bfv_{(g,d)}}}
=(\tfrac{R}{D})^{\frac12}\gamma(g)\overline{\bfPsi_g(r,d)}.
\end{multline*}
In summary, $\set{\calU_\gamma}_{\gamma\in\widehat{\calG}}
=\set{\pi(\gamma)\,\calU}_{\gamma\in\widehat{\calG}}$
is unitarily equivalent to a $\TFF(D,G,R)$ for $\bbC^\calD$ that has $\set{\bfPhi_\gamma}_{\gamma\in\widehat{\calG}}$ as isometries and moreover arises as the harmonic matrix ensemble generated by $\set{\bfPsi_g}_{g\in\calG}$.
Here, Theorem~\ref{thm.small TFF yields harmonic} moreover gives that \smash{$\set{\bfPsi_g}_{g\in\calG}$} are themselves the isometries of a $\TFF(R,G,\set{D_g}_{g\in\calG})$ for $\bbC^\calR$ where $D_g=\#(\calD_g)$ for all $g$.
\end{proof}

\begin{definition}
\label{def.harmonic}
Letting $\widehat{\calG}$ be the Pontryagin dual of a finite abelian group $\calG$,
we say that a $\TFF(D,G,R)$ \smash{$\set{\calU_\gamma}_{\gamma\in\widehat{\calG}}$} for a complex Hilbert space $\calV$ is \textit{harmonic} if it satisfies any one of the three equivalent properties given in Theorem~\ref{thm.harmonic TFF characterization}.
In this case, we say a TFF $\set{\calW_g}_{g\in\calG}$ for $\bbC^\calR$ is a corresponding \textit{generating TFF} if it is formed by isometries $\set{\bfPsi_g}_{g\in\calG}$ that generate a harmonic matrix ensemble \smash{$\set{\bfPhi_\gamma}_{\gamma\in\widehat{\calG}}$}, and furthermore, there exists a unitary operator $\bfV:\bbC^\calD\rightarrow\calV$ such that the isometries \smash{$\set{\bfV\bfPhi_\gamma}_{\gamma\in\widehat{\calG}}$} form the original TFF $\set{\calU_\gamma}_{\gamma\in\widehat{\calG}}$.
\end{definition}

We now give some facts that follow easily by combining Theorems~\ref{thm.small TFF yields harmonic}, \ref{thm.harmonic TFF characterization} and Definition~\ref{def.harmonic}:

\begin{corollary}
\label{cor.complements}
Any harmonic $\TFF(D,G,R)$ with $D<GR$ has a Naimark complement which is a harmonic $\TFF(GR-D,G,R)$.
Any harmonic $\TFF(D,G,R)$ with $R<D$ has a spatial complement which is a harmonic $\TFF(D-R,G,R)$.
\end{corollary}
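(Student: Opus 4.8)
The plan is to read both assertions directly off Theorem~\ref{thm.small TFF yields harmonic}(a), converting each complement of the harmonic TFF into the opposite complement of a generating TFF, for which Section~2 already supplies the relevant facts. I would fix a harmonic $\TFF(D,G,R)$, say $\set{\calU_\gamma}_{\gamma\in\widehat{\calG}}$ for $\calV$, and use Definition~\ref{def.harmonic} together with Theorem~\ref{thm.harmonic TFF characterization}(i) to produce a generating $\TFF(R,G,\set{D_g}_{g\in\calG})$, i.e.\ isometries $\set{\bfPsi_g}_{g\in\calG}$ with $\bfPsi_g:\bbC^{\calD_g}\rightarrow\bbC^\calR$ and $\sum_{g\in\calG}D_g=D$, whose harmonic matrix ensemble $\set{\bfPhi_\gamma}_{\gamma\in\widehat{\calG}}$ agrees with $\set{\calU_\gamma}_{\gamma\in\widehat{\calG}}$ after a unitary $\bfV:\bbC^\calD\rightarrow\calV$. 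Since conjugating by $\bfV$ preserves both harmonicity and the complement relations, I can argue with $\set{\bfPhi_\gamma}_{\gamma\in\widehat{\calG}}$ and transport the conclusions back through $\bfV$.

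For the Naimark complement, the hypothesis $D<GR$ is equivalent to $D_g<R$ for at least one $g$, so Section~2 yields a spatial complement of the generating TFF, namely isometries forming a $\TFF(R,G,\set{R-D_g}_{g\in\calG})$ for $\bbC^\calR$. By Theorem~\ref{thm.small TFF yields harmonic}(a) the harmonic matrix ensemble they generate is a Naimark complement of $\set{\bfPhi_\gamma}_{\gamma\in\widehat{\calG}}$, and it is harmonic since it is itself a harmonic matrix ensemble; reading its parameters off Theorem~\ref{thm.small TFF yields harmonic}, its ambient dimension is $\sum_{g\in\calG}(R-D_g)=GR-D$ and its subspace dimension is $R$, giving a harmonic $\TFF(GR-D,G,R)$.

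For the spatial complement, the hypothesis $R<D=\sum_{g\in\calG}D_g$ instead lets Section~2 furnish a Naimark complement of the generating TFF, namely isometries forming a $\TFF(D-R,G,\set{D_g}_{g\in\calG})$. By Theorem~\ref{thm.small TFF yields harmonic}(a) the harmonic matrix ensemble they generate is a spatial complement of $\set{\bfPhi_\gamma}_{\gamma\in\widehat{\calG}}$, and again is harmonic; its parameters are exactly those recorded in the proof of Theorem~\ref{thm.small TFF yields harmonic}(a), namely ambient dimension $\sum_{g\in\calG}D_g=D$ and subspace dimension $D-R$ (the ambient dimension of the complemented generating TFF). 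This identifies the spatial complement as a harmonic $\TFF(D,G,D-R)$, in keeping with the Section~2 fact that the spatial complement of a $\TFF(D,G,R)$ is a $\TFF(D,G,D-R)$.

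The one point I would be most careful about---and effectively the only subtlety---is the role reversal emphasized immediately after Theorem~\ref{thm.small TFF yields harmonic}: a \emph{spatial} complement of the generating TFF induces a \emph{Naimark} complement of the harmonic TFF, and conversely. Keeping this swap straight is what pairs the hypothesis $D<GR$ with the Naimark statement and $R<D$ with the spatial statement, and it dictates which coordinate of the parameter triple is complemented in each case. Everything else is a direct appeal to results already in hand, so no new estimates are required.
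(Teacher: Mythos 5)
Your proposal is correct and takes the same route the paper does: the paper offers no separate proof of Corollary~\ref{cor.complements} beyond the remark that it follows easily by combining Theorems~\ref{thm.small TFF yields harmonic}, \ref{thm.harmonic TFF characterization} and Definition~\ref{def.harmonic}, and your argument is exactly that combination, with the spatial-versus-Naimark role reversal of Theorem~\ref{thm.small TFF yields harmonic}(a) handled correctly. In particular your Naimark half, producing a harmonic $\TFF(GR-D,G,R)$, matches the statement verbatim.

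One issue should be raised explicitly rather than corrected in passing. Your spatial half proves that the spatial complement is a harmonic $\TFF(D,G,D-R)$, whereas the corollary as printed asserts a harmonic $\TFF(D-R,G,R)$; these are different triples, and yours is the correct one. The printed parameters cannot be right: Section~2 defines the spatial complement of a $\TFF(D,G,R)$ to consist of the $(D-R)$-dimensional orthogonal complements inside the \emph{same} $D$-dimensional space; the proof of Theorem~\ref{thm.small TFF yields harmonic}(a) constructs precisely a $\TFF(D,G,D-R)$ for $\bbC^\calD$, generated by a Naimark complement $\TFF(D-R,G,\set{D_g}_{g\in\calG})$ of the generating TFF; and the Naimark--spatial orbit computed immediately after the corollary, $(4,5,2)$, $(6,5,2)$, $(6,5,4)$, $(14,5,4)$, $(14,5,10)$, uses the rule $(D,G,R)\mapsto(D,G,D-R)$ for spatial complements, not $(D,G,R)\mapsto(D-R,G,R)$, which would instead send $(6,5,2)$ back to $(4,5,2)$ and which is outright impossible whenever $R<D<2R$ (one cannot have $R$-dimensional subspaces of a $(D-R)$-dimensional space). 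The coordinate $D-R$ is the ambient dimension of the spatial complement's \emph{generating} TFF, which is presumably the source of the slip. So your argument is sound and is the paper's intended one, but since it establishes a corrected statement rather than the printed one, you should flag the typo explicitly instead of substituting the correct parameters silently.
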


It was recently shown that taking iterated alternating Naimark and spatial complements of any $\TFF(D,N,R)$ with either $N>4$ or $N=4$ and $D\neq 2R$ yields an infinite number of TFFs with distinct parameters~\cite{FickusMW21}.
From Corollary~\ref{cor.complements}, we see that if one TFF in such a \textit{Naimark-spatial orbit} is harmonic then all TFFs in this orbit are as well.
Recalling that any Naimark or spatial complement of an ECTFF is itself an ECTFF,
this provides a simple way to generate an infinite number of distinct harmonic ECTFFs provided the order of the underlying group is at least $5$.
We caution however that the only members of such an orbit that might be EITFFs necessarily have ``minimal" $R$ parameters.
For example, taking iterated alternating Naimark and spatial complements of the harmonic $\EITFF(4,5,2)$ discussed in Example~\ref{ex.EITFF(4,5,2)} yields harmonic ECTFFs with $(D,N,R)$ parameters $(4,5,2)$, $(6,5,2)$, $(6,5,4)$, $(14,5,4)$, $(14,5,10)$, etc., but of these, only those with parameters $(4,5,2)$ and $(6,5,2)$ are EITFFs.
These ideas from~\cite{FickusMW21} were in part motivated by a desire to certify when a certain generalization of an ECTFF construction method of~\cite{King16} yielded genuinely new ECTFFs.
As we now discuss, it turns out that both this construction method~\cite{King16} and its generalization~\cite{FickusMW21} are essentially special cases of harmonic matrix ensembles.

\begin{remark}
Here regard the group $\calG$ of Theorem~\ref{thm.small TFF yields harmonic} to be a subgroup of some larger $K$-element finite abelian group $\calK$ with Pontryagin dual $\widehat{\calK}$.
The corresponding \textit{annihilator} of $\calG$ with respect to $\calK$ is  $\calG^\perp:=\set{\kappa\in\widehat{\calK}: \kappa(g)=1,\ \forall\, g\in\calG}$.
It is well known that $\calG^\perp$ is a subgroup of $\widehat{\calK}$ of order $R:=K/G$ and, in fact, is isomorphic to the Pontryagin dual of $\calK/\calG$.
Let $\calD$ be a nonempty $D$-element subset of $\calK$ and let \smash{$\set{\bfphi_\kappa}_{\kappa\in\widehat{\calK}}$} be the traditional harmonic frame for $\bbC^\calD$, having \smash{$\bfphi_\kappa(d):=\frac1{\sqrt{D}}\kappa(d)$} for all $\kappa\in\widehat{\calK}$ and $d\in\calD$.
We now partition the vectors of this UNTF for $\bbC^\calD$ according to the cosets of $\calG^\perp$ in $\widehat{\calK}$.
To be precise,
for any $\kappa\in\widehat{\calK}$, let $\bfPhi_\kappa$ be the $\calD\times\calG^\perp$ synthesis operator of $\set{\bfphi_{\kappa\eta}}_{\eta\in\calG^\perp}$,
and note that for any $\kappa_1,\kappa_2\in\widehat{\calK}$ and $\eta_1,\eta_2\in\calG^\perp$, the $(\eta_1,\eta_2)$th entry of the corresponding $(\kappa_1,\kappa_2)$th cross-Gram matrix is
\begin{multline}
\label{eq.other harmonic 1}
(\bfPhi_{\kappa_1}^*\bfPhi_{\kappa_2}^{})(\eta_1,\eta_2)
=\ip{\bfphi_{\kappa_1\eta_1}}{\bfphi_{\kappa_2\eta_2}}
=\tfrac1D\sum_{d\in\calD}\overline{(\kappa_1\eta_1)(d)}(\kappa_2\eta_2)(d)\\
=\tfrac1D\sum_{k\in\calK}\overline{(\kappa_1^{}\kappa_2^{-1}\eta_1\eta_2^{-1})(k)}\bfone_{\calD}(k)
=\tfrac1D(\bfGamma_\calK^*\bfone_\calD)(\kappa_1^{}\kappa_2^{-1}\eta_1\eta_2^{-1}),
\end{multline}
where $\bfGamma_\calK$ is the $\calK\times\widehat{\calK}$ character table of $\calK$ and $\bfone_\calD$ is the characteristic function of $\calD$.
In particular, each $\bfPhi_\kappa$ is an isometry if and only if $(\bfGamma_\calK^*\bfone_\calD)(\eta)=0$ for all $\eta\in\calG^\perp$, $\eta\neq 1$;
by the Poisson summation formula, this equates to the cardinality of $\calD^{(k)}:=\calG\cap(\calD-k)$ being independent of $k$~\cite{FickusMW21}.
In this case, the fact that \smash{$\set{\bfphi_\kappa}_{\kappa\in\widehat{\calK}}$} is a UNTF for $\bbC^\calD$ implies that
\smash{$\set{\bfPhi_\kappa}_{\kappa\calG^\perp\in\widehat{\calK}/\calG^\perp}$} are the isometries of a $\TFF(D,G,R)$ for $\bbC^\calD$;
here, \smash{$\set{\bfPhi_\kappa}_{\kappa\calG^\perp\in\widehat{\calK}/\calG^\perp}$} is (abused) notation for a subsequence of \smash{$\set{\bfPhi_\kappa}_{\kappa\in\widehat{\calK}}$} that is indexed by some transversal (choice of coset representatives) of $\calG^\perp$ in $\widehat{\calK}$.

In~\cite{GodsilR09} it was shown that such matrices \smash{$\set{\bfPhi_\kappa}_{\kappa\calG^\perp\in\widehat{\calK}/\calG^\perp}$} are the (unitary) synthesis operators of $D$ \textit{mutually unbiased bases} for $\bbC^\calD$ if and only if $\calD$ is a \textit{semiregular relative difference set} for $\calK$.
In~\cite{King16} it was more generally shown that if $\calD$ is a \textit{semiregular divisible difference set} for $\calK$ then
\smash{$\set{\bfPhi_\kappa}_{\kappa\calG^\perp\in\widehat{\calK}/\calG^\perp}$} are the isometries of an $\ECTFF(D,G,R)$ for $\bbC^\calD$ with the remarkable property that $\abs{\ip{\bfphi_{\kappa_1\eta_1}}{\bfphi_{\kappa_2\eta_2}}}$ is constant over all $\kappa_1,\kappa_2\in\widehat{\calK}$ that lie in distinct cosets of $\calG^\perp$ and all \smash{$\eta_1,\eta_2\in\calG^\perp$}.
In~\cite{FickusMW21} it was more generally shown that
\smash{$\set{\bfPhi_\kappa}_{\kappa\calG^\perp\in\widehat{\calK}/\calG^\perp}$} are the isometries of an $\ECTFF(D,G,R)$ for $\bbC^\calD$ if and only if the subsets $\set{\calD^{(k)}}_{k+\calG\in\calK/\calG}$ of $\calG$ all have the same cardinality and form a \textit{difference family} for $\calG$, having
\smash{$\sum_{k+\calG\in\calK/\calG}\abs{(\bfGamma^*\bfone_{\calD^{(k)}})(\gamma)}^2$} be constant over all \smash{$\gamma\in\widehat{\calG}$}, $\gamma\neq1$;
they are moreover the isometries of an $\EITFF(D,G,R)$ for $\bbC^\calD$ if and only if each $\calD^{(k)}$ is a difference set for $\calG$.
(A similar idea has also been used to partition certain harmonic ETFs into regular simplices~\cite{FickusJKM18,FickusS20} or, more generally, into \textit{mutually unbiased ETFs}~\cite{FickusM21}.)

Such TFFs are ``harmonic" in the sense that their isometries are constructed by extracting rows from the character table of a finite abelian group.
In light of~\eqref{eq.other harmonic 1}, each cross-Gram matrix of the isometries \smash{$\set{\bfPhi_\kappa}_{\kappa\calG^\perp\in\widehat{\calK}/\calG^\perp}$} is $\calG^\perp$-circulant.
Here, \eqref{eq.other harmonic 1} further implies that the fusion Gram matrix of these isometries is block-circulant if the representatives of the cosets of $\calG^\perp$ in $\widehat{\calK}$ can themselves be chosen to form some subgroup of $\widehat{\calK}$ of order $G$.
In particular, by Theorem~\ref{thm.harmonic TFF characterization}, the ``harmonic" TFFs of~\cite{King16,FickusMW21} are also harmonic in the sense of Definition~\ref{def.harmonic} if $\widehat{\calK}$ is the (internal) direct product of $\calG^\perp$ and some other subgroup of $\widehat{\calK}$.
This is often possible, but not always so.
That said, one can show that whenever an $\ECTFF(D,G,R)$ or $\EITFF(D,G,R)$ arises from the method of~\cite{King16,FickusMW21}, there is also a harmonic ECTFF or EITFF (in the sense of Definition~\ref{def.harmonic}) with the same parameters.
Furthermore, the $\EITFF(D,N,R)$ construction in~\cite{King16,FickusMW21} is necessarily ``tensor sized'', in the sense that $R$ divides $D$ and an $\ETF(\tfrac{D}{R},N)$ exists.
This defect does not appear in the more general approach of Definition~\ref{def.harmonic}, as illustrated by the results of Section~4 below.
\end{remark}

We now characterize when a TFF that is harmonic in the sense of Definition~\ref{def.harmonic} is an EITFF or an ECTFF in terms of a matrix-valued DFT of the projections of the corresponding generating TFF.

\begin{theorem}
\label{thm.EITFF}
Let $\widehat{\calG}$ be the Pontryagin dual of a finite abelian group $\calG$,
and let $\set{\bfM_\gamma}_{\gamma\in\widehat{\calG}}$ be the entrywise DFT of any $\calG$-indexed sequence $\set{\bfP_g}_{g\in\calG}$ of $\calR\times\calR$ orthogonal projection matrices:
\begin{equation}
\label{eq.DFT of projections}
\bfM_\gamma:=\sum_{g\in\calG}\overline{\gamma(g)}\bfP_g,
\quad\forall\,\gamma\in\widehat{\calG}.
\end{equation}
Then the images of $\set{\bfP_g}_{g\in\calG}$ form a TFF for $\bbC^\calR$ if and only if $\bfM_1=A\bfI$ for some $A>0$.
Assuming this to be the case, and defining
\begin{equation*}
D:=\sum_{g\in\calG}D_g
\quad\text{where}\quad
D_g:=\rank(\bfP_g)\quad\forall\,g\in\calG,
\qquad
G:=\#(\calG),
\qquad
R:=\#(\calR),
\end{equation*}
we have that $A=\frac DR$, that any isometries $\set{\bfPsi_g}_{g\in\calG}$ of this $\TFF(R,G,\set{\calD_g}_{g\in\calG})$ generate a harmonic matrix ensemble $\set{\bfPhi_\gamma}_{\gamma\in\widehat{\calG}}$ that forms a harmonic $\TFF(D,G,R)$, and that
\begin{equation}
\label{eq.pf of harmonic EITFF characterization 1}
\bfPhi_{\gamma_1}^*\bfPhi_{\gamma_2}^{}
=\tfrac{R}{D}\bfM_{\gamma_1^{}\gamma_2^{-1}},
\quad\forall\,\gamma_1,\gamma_2\in\widehat{\calG}.
\end{equation}
Furthermore:
\begin{enumerate}
\renewcommand{\labelenumi}{(\alph{enumi})}
\item
The following are equivalent:
\begin{enumerate}
\renewcommand{\labelenumii}{(\roman{enumii})}
\item
this harmonic $\TFF(D,G,R)$ is an $\EITFF(D,G,R)$;\smallskip
\item
there exists a scalar $B$ such that $\bfM_\gamma^*\bfM_\gamma^{}=B\bfI$ for all $\gamma\in\widehat{\calG}$, $\gamma\neq 1$;\smallskip
\item
there exists a scalar $C$ such that $\sum_{g'\in\calG}\bfP_{g'}\bfP_{g+g'}=C\bfI$ for all $g\in\calG$, $g\neq0$.
\end{enumerate}
Moreover, if (i)--(iii) hold, then $B=\tfrac{D(GR-D)}{R^2(G-1)}$ and $C=\tfrac{D(D-R)}{R^2(G-1)}$.
\item
The following are equivalent:
\begin{enumerate}
\setcounter{enumii}{3}
\renewcommand{\labelenumii}{(\roman{enumii})}
\item
this harmonic $\TFF(D,G,R)$ is an $\ECTFF(D,G,R)$;\smallskip
\item
there exists a scalar $B$ such that $\tfrac1R\norm{\bfM_\gamma}_\Fro^2=\tfrac1R\Tr(\bfM_\gamma^*\bfM_\gamma^{})=B$ for all $\gamma\in\widehat{\calG}$, $\gamma\neq 1$;\smallskip
\item
there exists a scalar $C$ such that $\sum_{g'\in\calG}\tfrac1R\Tr(\bfP_{g'}\bfP_{g+g'})=C$ for all $g\in\calG$, $g\neq0$.
\end{enumerate}
Moreover, if (iv)--(vi) hold, then $B=\tfrac{D(GR-D)}{R^2(G-1)}$ and $C=\tfrac{D(D-R)}{R^2(G-1)}$.

In particular, if $\set{\bfP_g}_{g\in\calG}$ are the projections of an $\ECTFF(R,G,\frac DG)$, then they generate a harmonic $\ECTFF(D,G,R)$ and $\ip{\bfM_{\gamma_1}}{\bfM_{\gamma_2}}_\Fro=0$ for $\gamma_1,\gamma_2\in\widehat{\calG}$ with $\gamma_1\neq\gamma_2$.
\item
This harmonic $\TFF(D,G,R)$ is real if
$\bfP_{g}(r_1,r_2)=\overline{\bfP_{-g}(r_1,r_2)}$ for all $g\in\calG$ and $r_1,r_2\in\calR$.
\item
For any $\gamma\in\widehat{\calG}$,
$\displaystyle\Tr(\bfM_\gamma)
=\sum_{g\in\calG}\overline{\gamma(g)}D_g$.
\end{enumerate}
\end{theorem}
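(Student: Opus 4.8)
The plan is to reduce the entire theorem to~\eqref{eq.cross Gram of big frame} of Theorem~\ref{thm.small TFF yields harmonic} together with the fusion-Gram-matrix characterizations of TFFs, ECTFFs, and EITFFs from Section~2, using the finite-group DFT to do the bookkeeping. First I would settle the preliminary claims. Writing $1$ for the trivial character, $\bfM_1=\sum_{g\in\calG}\bfP_g$ is exactly the fusion frame operator of the images of $\set{\bfP_g}_{g\in\calG}$, so these subspaces form a TFF for $\bbC^\calR$ iff $\bfM_1=A\bfI$ for some $A>0$; taking the trace and using $\Tr(\bfP_g)=\rank(\bfP_g)=D_g$ gives $AR=\sum_{g}D_g=D$, i.e.\ $A=\tfrac DR$. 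Choosing isometries $\set{\bfPsi_g}_{g\in\calG}$ with $\bfPsi_g^{}\bfPsi_g^*=\bfP_g$ exhibits this as a $\TFF(R,G,\set{D_g}_{g\in\calG})$, so Theorem~\ref{thm.small TFF yields harmonic} produces the harmonic $\TFF(D,G,R)$. Finally~\eqref{eq.pf of harmonic EITFF characterization 1} follows from~\eqref{eq.cross Gram of big frame}: since characters are unimodular, $(\gamma_1^{-1}\gamma_2^{})(g)=\overline{\gamma_1(g)}\gamma_2(g)=\overline{(\gamma_1^{}\gamma_2^{-1})(g)}$, whence $\bfPhi_{\gamma_1}^*\bfPhi_{\gamma_2}^{}=\tfrac RD\sum_{g}\overline{(\gamma_1^{}\gamma_2^{-1})(g)}\bfP_g=\tfrac RD\bfM_{\gamma_1^{}\gamma_2^{-1}}$.

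For part (a) I would first note that, by~\eqref{eq.pf of harmonic EITFF characterization 1}, the off-diagonal blocks of the fusion Gram matrix are precisely $\set{\tfrac RD\bfM_\gamma}_{\gamma\neq1}$, since as the ordered pair $(\gamma_1,\gamma_2)$ with $\gamma_1\neq\gamma_2$ varies, the quotient $\gamma_1^{-1}\gamma_2^{}$ ranges over all $\gamma\neq1$. By the Section~2 criterion, the TFF is an EITFF iff all these blocks are a common scalar multiple of unitaries, i.e.\ iff $\bfM_\gamma^*\bfM_\gamma^{}=B\bfI$ for a fixed $B$ and every $\gamma\neq1$; this is (i)$\Leftrightarrow$(ii). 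For (ii)$\Leftrightarrow$(iii), the crux is that, because each $\bfP_g$ is self-adjoint, $\bfM_\gamma^*\bfM_\gamma^{}=\sum_{g\in\calG}\overline{\gamma(g)}\mathbf{Q}_g$ where $\mathbf{Q}_g:=\sum_{g'\in\calG}\bfP_{g'}\bfP_{g+g'}$; that is, $\set{\bfM_\gamma^*\bfM_\gamma^{}}_\gamma$ is the DFT of the operator-valued autocorrelation $\set{\mathbf{Q}_g}_g$, whose zeroth term is $\mathbf{Q}_0=\sum_{g'}\bfP_{g'}=\tfrac DR\bfI$. Character orthogonality then gives that $\bfM_\gamma^*\bfM_\gamma^{}$ is constant in $\gamma\neq1$ iff $\mathbf{Q}_g$ is constant in $g\neq0$, with the two constants related by $B=\tfrac DR-C$. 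To evaluate $B$ explicitly I would invoke the matrix Parseval identity $\sum_\gamma\Tr(\bfM_\gamma^*\bfM_\gamma^{})=G\sum_{g}\norm{\bfP_g}_\Fro^2=G\sum_{g}D_g=GD$; peeling off the $\gamma=1$ term $\Tr(\bfM_1^2)=\tfrac{D^2}R$ leaves $(G-1)BR=GD-\tfrac{D^2}R$, so $B=\tfrac{D(GR-D)}{R^2(G-1)}$ and then $C=\tfrac DR-B=\tfrac{D(D-R)}{R^2(G-1)}$.

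Part (b) is the Frobenius-norm \emph{trace shadow} of part (a). Applying $\tfrac1R\Tr(\cdot)$ to the blocks $\tfrac RD\bfM_\gamma$ converts the EITFF criterion into the ECTFF criterion that all off-diagonal blocks share a common Frobenius norm, giving (iv)$\Leftrightarrow$(v); and applying $\tfrac1R\Tr(\cdot)$ to $\bfM_\gamma^*\bfM_\gamma^{}=\sum_g\overline{\gamma(g)}\mathbf{Q}_g$ makes $\set{\tfrac1R\Tr(\bfM_\gamma^*\bfM_\gamma^{})}_\gamma$ the DFT of the scalar sequence $\set{\tfrac1R\Tr(\mathbf{Q}_g)}_g$, so the same orthogonality argument yields (v)$\Leftrightarrow$(vi) and the identical constants. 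For the ``in particular'' claim, an $\ECTFF(R,G,\tfrac DG)$ has every $D_g=\tfrac DG$ and $\Tr(\bfP_{g_1}\bfP_{g_2})$ constant over $g_1\neq g_2$; since $g\neq0$ forces $g+g'\neq g'$ for all $g'$, each $\tfrac1R\Tr(\mathbf{Q}_g)$ is a sum of $G$ equal off-diagonal terms and hence independent of $g\neq0$, so (vi) holds. Expanding $\Tr(\bfM_{\gamma_1}^*\bfM_{\gamma_2}^{})$ and treating its diagonal ($g_1=g_2$) and off-diagonal contributions separately, character orthogonality forces it to vanish for $\gamma_1\neq\gamma_2$.

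Parts (c) and (d) are short. For (c), the hypothesis $\bfP_{-g}=\overline{\bfP_g}$ together with $\gamma(-g)=\overline{\gamma(g)}$ gives, after the substitution $g\mapsto-g$, that $\overline{\bfM_\gamma}=\bfM_\gamma$ for every $\gamma$; hence every block $\tfrac RD\bfM_{\gamma_1^{}\gamma_2^{-1}}$ of the fusion Gram matrix is real and the TFF is real with no further change of basis. For (d), linearity of the trace and $\Tr(\bfP_g)=\rank(\bfP_g)=D_g$ give $\Tr(\bfM_\gamma)=\sum_{g}\overline{\gamma(g)}\Tr(\bfP_g)=\sum_{g}\overline{\gamma(g)}D_g$ directly. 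I expect the main obstacle to be the bookkeeping in (ii)$\Leftrightarrow$(iii): correctly identifying $\set{\bfM_\gamma^*\bfM_\gamma^{}}$ as the DFT of the autocorrelation, handling the isolated $g=0$ and $\gamma=1$ terms under Fourier inversion, and extracting the exact constants $B$ and $C$; the remaining parts then fall out from~\eqref{eq.cross Gram of big frame}, the Section~2 characterizations, and routine character orthogonality.
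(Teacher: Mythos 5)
Your proposal is correct and follows essentially the same route as the paper's proof: the preliminary claims via the cross-Gram formula \eqref{eq.cross Gram of big frame} of Theorem~\ref{thm.small TFF yields harmonic}, (i)$\Leftrightarrow$(ii) via the fusion-Gram characterization of EITFFs, (ii)$\Leftrightarrow$(iii) by identifying $\set{\bfM_\gamma^*\bfM_\gamma^{}}$ as the DFT of the operator autocorrelation and inverting (with the $g=0$, $\gamma=1$ terms handled separately), the trace analogs for (b), and the same arguments for the ``in particular'' orthogonality claim, (c), and (d). The one (harmless) deviation is how the constants are pinned down: you use the Parseval identity $\sum_{\gamma}\Tr(\bfM_\gamma^*\bfM_\gamma^{})=GD$ together with $C=\tfrac DR-B$, whereas the paper extracts $B$ from equality in the chordal Welch bound and then uses $C=\tfrac1G\bigbracket{(\tfrac DR)^2-B}$; both relations are correct and yield the stated values.
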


\begin{proof}
Let $\calR$ be any nonempty finite set, let $\set{\bfP_g}_{g\in\calG}$ be any $\calG$-indexed sequence of $\calR\times\calR$ orthogonal projection matrices and define $D$, $D_g$, $G$ and $R$ as above.
For each $g\in\calG$ let \smash{$\bfPsi_g$} be any \smash{$\calR\times\calD_g$} isometry such that \smash{$\bfP_g=\bfPsi_g^{}\bfPsi_g^*$}.
By definition, the subspaces
\smash{$\set{\bfPsi_g(\bbC^{\calD_g})}_{g\in\calG}
=\set{\bfP_g(\bbC^{\calR})}_{g\in\calG}$} form a TFF for $\bbC^\calR$ if and only if
$\bfM_1=\sum_{g\in\calG}\bfP_g=\sum_{g\in\calG}\bfPsi_g^{}\bfPsi_g^*$ equals $A\bfI$ for some $A>0$.
Assume this holds for the remainder of this proof.
Here, \smash{$AR=\Tr(A\bfI)=\sum_{n\in\calN}\Tr(\bfP_n)=\sum_{n\in\calN}D_n=D$} and so $A=\frac DR$.
By Theorem~\ref{thm.small TFF yields harmonic} and Definition~\ref{def.harmonic},
$\set{\bfPsi_g}_{g\in\calG}$ generate the isometries \smash{$\set{\bfPhi_\gamma}_{\gamma\in\widehat{\calG}}$} of a harmonic $\TFF(D,G,R)$ for $\bbC^\calD$ where $\calD:=\cup_{g\in\calG}\set{g}\times\calD_g$ and,
by~\eqref{eq.cross Gram of big frame} and \eqref{eq.DFT of projections},
\begin{equation*}
\bfPhi_{\gamma_1}^*\bfPhi_{\gamma_2}^{}
=\tfrac{R}{D}\sum_{g\in\calG}(\gamma_1^{-1}\gamma_2^{})(g)\bfPsi_g^{}\bfPsi_g^*
=\tfrac{R}{D}\sum_{g\in\calG}\overline{(\gamma_1^{}\gamma_2^{-1})(g)}\bfP_g
=\tfrac{R}{D}\bfM_{\gamma_1^{}\gamma_2^{-1}}
\end{equation*}
for any $\gamma_1,\gamma_2\in\widehat{\calG}$.
This immediately implies that this harmonic $\TFF(D,G,R)$ is an EITFF if and only if there exists $B\geq0$ such that $\bfM_\gamma$ is $\sqrt{B}$ times a unitary for every $\gamma\neq1$, or equivalently, such that $\bfM_\gamma^*\bfM_\gamma^{}=B\bfI$ for all $\gamma\neq1$.
Thus, (i) equates to (ii).
To show that (ii) equates to (iii),
let $\set{\bfA_g}_{g\in\calG}$ be the \textit{autocorrelation} of $\set{\bfP_g}_{g\in\calG}$,
namely the sequence $\set{\bfA_g}_{g\in\calG}$ of $\calR\times\calR$ matrices defined by \smash{$\bfA_g:=\sum_{g'\in\calG}\bfP_{g'}\bfP_{g'+g}$} for each $g\in\calG$,
and note that for any \smash{$\gamma\in\widehat{\calG}$},
\begin{multline}
\label{eq.pf of harmonic EITFF characterization 2}
\bfM_\gamma^*\bfM_\gamma^{}
=\biggbracket{\,\sum_{g_1\in\calG}\overline{\gamma(g_1)}\bfP_{g_1}}^*
\biggbracket{\,\sum_{g_2\in\calG}\overline{\gamma(g_2)}\bfP_{g_2}}
=\sum_{g_1\in\calG}\sum_{g_2\in\calG}\gamma(g_1-g_2)\bfP_{g_1}\bfP_{g_2}\\
=\sum_{g_1\in\calG}\sum_{g\in\calG}\gamma(-g)\bfP_{g_1}\bfP_{g_1+g}
=\sum_{g\in\calG}\overline{\gamma(g)}\bfA_g.
\end{multline}
Since $\bfM_1=\sum_{g\in\calG}\bfP_g=\frac DR\bfI$, (ii) thus holds for a given $B$ if and only if
\begin{equation}
\label{eq.pf of harmonic EITFF characterization 3}
\sum_{g\in\calG}\overline{\gamma(g)}\bfA_g
=\bfM_\gamma^*\bfM_\gamma^{}
=\left\{\begin{array}{rl}
(\frac DR)^2\,\bfI,&\ \gamma=1\\
B\,\bfI,&\ \gamma\neq1\end{array}\right\}
=\Bigset{\bigbracket{(\tfrac DR)^2-B}\bfdelta_1(\gamma)+B}\bfI,
\quad
\forall\,\gamma\in\widehat{\calG}.
\end{equation}
Here, $\sum_{g\in\calG}\overline{\gamma(g)}\bfA_g$ is the (entrywise) DFT of $\set{\bfA_g}$, and so taking the (entrywise) inverse DFT of this equation gives,
for any given scalar $B$, that (ii) equates to having
\begin{equation}
\label{eq.pf of harmonic EITFF characterization 4}
\bfA_g
=\tfrac1G\sum_{\gamma\in\widehat{\calG}}\gamma(g)
\Bigset{\bigbracket{(\tfrac DR)^2-B}\bfdelta_1(\gamma)+B}\bfI
=\Bigset{\tfrac1G\bigbracket{(\tfrac DR)^2-B}+B\delta_0(g)}\bfI,
\quad\forall\,g\in\calG.
\end{equation}
In particular, if (ii) holds then $\bfA_g=C\bfI$ for all $g\neq0$
where
\begin{equation}
\label{eq.pf of harmonic EITFF characterization 5}
C=\tfrac1G\bigbracket{(\tfrac DR)^2-B},
\end{equation}
implying (iii).
Conversely, if (iii) holds,
namely if there exists a scalar $C$ such that $\bfA_g=C\bfI$ for all $g\neq0$,
then (ii) holds for the unique choice of $B$ that satisfies~\eqref{eq.pf of harmonic EITFF characterization 5} since, in this case, \eqref{eq.pf of harmonic EITFF characterization 4} holds for all $g\neq0$, while for $g=0$,
\begin{multline}
\label{eq.pf of harmonic EITFF characterization 6}
\bfA_0
=\sum_{g\in\calG}\bfA_g-\sum_{g\neq 0}\bfA_g
=\sum_{g\in\calG}\sum_{g'\in\calG}\bfP_{g'}\bfP_{g'+g}-\sum_{g\neq 0}C\bfI
=\biggparen{\,\sum_{g\in\calG}\bfP_g}^2-(G-1)C\bfI\\
=(\tfrac DR)^2\bfI-\tfrac{G-1}G\bigbracket{(\tfrac DR)^2-B}\bfI
=\Bigset{\tfrac1G\bigbracket{(\tfrac DR)^2-B}+B}\bfI.
\end{multline}
Altogether, (i), (ii) and (iii) are equivalent,
and the requisite scalars $B$ and $C$ satisfy~\eqref{eq.pf of harmonic EITFF characterization 5}.

The equivalence of (iv), (v) and (vi) is proven similarly.
In brief,
by~\eqref{eq.pf of harmonic EITFF characterization 1},
the harmonic $\TFF(D,G,R)$ under consideration here is an ECTFF if and only if there exists a scalar $B$ such that \smash{$\tfrac1R\norm{\bfM_\gamma}_\Fro^2=\tfrac1R(\bfM_\gamma^*\bfM_\gamma^{})=B$} for all $\gamma\neq1$,
and so (iv) equates to (v).
Moreover, taking the trace of~\eqref{eq.pf of harmonic EITFF characterization 2} gives
\smash{$\tfrac1R\Tr(\bfM_\gamma^*\bfM_\gamma^{})
=\sum_{g\in\calG}\overline{\gamma(g)}\tfrac1R\Tr(\bfA_g)$} for any $\gamma\in\widehat{\calG}$,
at which point the trace-based analogs of \eqref{eq.pf of harmonic EITFF characterization 3}, \eqref{eq.pf of harmonic EITFF characterization 4},
and~\eqref{eq.pf of harmonic EITFF characterization 6} imply that (v) holds for a given $B$ if and only if (vi) holds for a given $C$ where $B$ and $C$ are again related by~\eqref{eq.pf of harmonic EITFF characterization 5}.

When the harmonic $\TFF(D,G,R)$ here is an ECTFF (which includes the case where it is an EITFF) there is but one choice of $B$:
from~\eqref{eq.chordal Welch} and~\eqref{eq.pf of harmonic EITFF characterization 1},
recall that necessarily
\begin{equation*}
\tfrac{GR-D}{D(G-1)}
=\tfrac1R\norm{\bfPhi_{\gamma_1}^*\bfPhi_{\gamma_2}^{}}_\Fro^2
=\tfrac1R\norm{\tfrac{R}{D}\bfM_{\gamma_1^{}\gamma_2^{-1}}}_\Fro^2
=\tfrac{R^2}{D^2}\tfrac1R\norm{\bfM_{\gamma_1^{}\gamma_2^{-1}}}_\Fro^2
=\tfrac{R^2}{D^2}B,
\end{equation*}
whenever $\gamma_1\neq\gamma_2$, and so necessarily $B=\tfrac{D(GR-D)}{R^2(G-1)}$.
Substituting this into~\eqref{eq.pf of harmonic EITFF characterization 5} gives
\begin{equation*}
C
=\tfrac1G\bigbracket{(\tfrac DR)^2-B}
=\tfrac1G\bigbracket{(\tfrac DR)^2-\tfrac{D(GR-D)}{R^2(G-1)}}
=\tfrac{D[D(G-1)-(GR-D)]}{R^2G(G-1)}
=\tfrac{D(D-R)}{R^2(G-1)}.
\end{equation*}
This completes the proof of (a).
For the remaining conclusions of (b), note that if $\set{\bfP_g}_{g\in\calG}$ are the projections of an $\ECTFF(R,G,\frac DG)$ then $\Tr(\bfP_{g_1}\bfP_{g_2})$ is constant over all $g_1\neq g_2$, implying that (vi) holds, and so the harmonic $\TFF(D,G,R)$ here is an ECTFF.
In this case,
(vi) in fact gives that \smash{$\Tr(\bfP_{g_1}\bfP_{g_2})=\frac{RC}{G}=\frac{D(D-R)}{GR(G-1)}$} whenever $g_1\neq g_2$, and so more generally for any $g_1,g_2\in\calG$,
\begin{equation*}
\Tr(\bfP_{g_1}\bfP_{g_2})
=\tfrac{D}{G}\left\{\begin{array}{cl}
1,&\ g_1=g_2\\
\tfrac{D-R}{R(G-1)},&\ g_1\neq g_2
\end{array}\right\}
=\tfrac{D}{GR(G-1)}[(GR-D)\bfdelta_{g_1}(g_2)+(D-R)].
\end{equation*}
As such, for any $\gamma_1,\gamma_2\in\widehat{\calG}$,
\begin{align*}
\ip{\bfM_{\gamma_1}}{\bfM_{\gamma_2}}_\Fro
&=\Tr\biggbracket{
\biggparen{\,\sum_{g_1\in\calG}\overline{\gamma_1(g_1)}\bfP_{g_1}}^*
\sum_{g_2\in\calG}\overline{\gamma_2(g_2)}\bfP_{g_2}}\\
&=\tfrac{D}{GR(G-1)}\sum_{g_1\in\calG}\sum_{g_2\in\calG}\gamma_1(g_1)\overline{\gamma_2(g_2)}
[(GR-D)\bfdelta_{g_1}(g_2)+(D-R)]\\
&=\tfrac{D}{GR(G-1)}\sum_{g_1\in\calG}\gamma_1(g_1)
\bigbracket{(GR-D)\overline{\gamma_2(g_1)}+G(D-R)\bfdelta_1(\gamma_2)}\\
&=\tfrac{D}{GR(G-1)}\bigbracket{(GR-D)\ip{\gamma_2}{\gamma_1}+G^2(D-R)\bfdelta_1(\gamma_1)\bfdelta_1(\gamma_2)}\\
&=\left\{\begin{array}{cl}
\tfrac{D^2}{R}&\ \gamma_1=\gamma_2=1,\smallskip\\
\tfrac{D(GR-D)}{R(G-1)},&\ \gamma_1=\gamma_2\neq1,\smallskip\\
0,&\ \gamma_1\neq\gamma_2.
\end{array}\right.
\end{align*}
In particular, in this case, the matrices $\set{\bfM_\gamma}_{\gamma\in\widehat{\calG}}$ are pairwise orthogonal; this observation has previously been made in the special case where $R=1$ as a method of proving \textit{Gerzon's bound}~\cite{BalanBCE09}.

For (c), note that if $\bfP_g$ is the (entrywise) conjugate of $\bfP_{-g}$ for all $g$ then each matrix $\bfM_\gamma$ is real:
for any $r_1,r_2\in\calR$,
\begin{multline*}
\overline{\bfM_{\gamma}(r_1,r_2)}
=\overline{\sum_{g\in\calG}\overline{\gamma(g)}\bfP_g(r_1,r_2)}
=\sum_{g\in\calG}\gamma(g)\overline{\bfP_g(r_1,r_2)}\\
=\sum_{g\in\calG}\overline{\gamma(-g)}\bfP_{-g}(r_1,r_2)
=\sum_{g\in\calG}\overline{\gamma(g)}\bfP_{g}(r_1,r_2)
=\bfM_\gamma(r_1,r_2).
\end{multline*}
In light of~\eqref{eq.pf of harmonic EITFF characterization 1},
this means our harmonic $\TFF(D,G,R)$ is real, having a real-valued fusion Gram matrix.
Finally, for any $\gamma\in\widehat{\calG}$,
(d) follows by taking the trace of~\eqref{eq.DFT of projections}:
\begin{equation*}
\Tr(\bfM_\gamma)
=\sum_{g\in\calG}\overline{\gamma(g)}\Tr(\bfP_g)
=\sum_{g\in\calG}\overline{\gamma(g)}\rank(\bfP_g)
=\sum_{g\in\calG}\overline{\gamma(g)}D_g.
\qedhere
\end{equation*}
\end{proof}

As discussed in Example~\ref{ex.EITFF(3,7,1)},
the harmonic TFFs of Definition~\ref{def.harmonic} equate to traditional harmonic frames in the special case where $R=1$.
In this case, each projection $\bfP_g$ is a $1\times1$ matrix whose sole entry is either $1$ or $0$, implying there exists some subset $\calD$ of $\calG$ such that $\bfP_g(1,1)=\bfone_\calD(g)$ for all $g\in\calG$.
By Theorem~\ref{thm.EITFF}(iii), a harmonic $\TFF(D,N,1)$ is thus an EITFF if and only if there exists some scalar $C$ such that for all $g\neq0$,
\begin{equation*}
C
=\sum_{g'\in\calG}\bfone_\calD(g')\bfone_\calD(g+g')
=\#[\calD\cap(\calD-g)]
=\#\set{(d_1,d_2)\in\calD\times\calD: g=d_1-d_2},
\end{equation*}
where the final equality follows from the fact that $d\mapsto(d+g,d)$ is a bijection from $\calD\cap(\calD-g)$ onto $\set{(d_1,d_2)\in\calD\times\calD: g=d_1-d_2}$.
In particular, in the special case where $R=1$, the equivalence of (i) and (iii) of Theorem~\ref{thm.EITFF} reduces to well known equivalence~\cite{Konig99,StrohmerH03,XiaZG05,DingF07} between harmonic ETFs and difference sets.
In this case, Theorem~\ref{thm.EITFF}(a) gives that $C$ is necessarily \smash{$\tfrac{D(D-1)}{G-1}$},
a fact which is often proven in the context of difference sets using a simple counting argument.
Moreover, since here
$\bfM_\gamma(1,1)
=\sum_{g\in\calG}\overline{\gamma(g)}\bfP_g(1,1)
=\sum_{g\in\calG}\overline{\gamma(g)}(\bfone_\calD)(g)
=(\bfGamma^*\bfone_\calD)(\gamma)$
for all \smash{$\gamma\in\widehat{\calG}$},
Theorem~\ref{thm.EITFF}(ii) gives that this equates to \smash{$\abs{(\bfGamma^*\bfone_\calD)(\gamma)}^2$} being constant over all $\gamma\neq1$.
Thus, when $R=1$, the equivalence of (ii) and (iii) reduces to the classical character-based characterization of difference sets~\cite{Turyn65}.
From this perspective, projections $\set{\bfP_g}_{g\in\calG}$ that satisfy (i)--(iii) form a type of operator-theoretic generalization of a difference set:

\begin{definition}
\label{def.difference projections}
We say $\calR\times\calR$ orthogonal projection matrices $\set{\bfP_g}_{g\in\calG}$ are \textit{difference projections} of the finite abelian group $\calG$ if their images form a TFF for $\bbC^\calR$ and they satisfy any one of the three equivalent properties (i)--(iii) of Theorem~\ref{thm.EITFF}.
\end{definition}

In the next section, we construct some nontrivial difference projections for the additive groups of finite fields so as to obtain some new (harmonic) EITFFs.
We note the standard methods for converting one difference set for $\calG$ into another such set---translations, automorphisms and complementation---generalize to this setting.
In particular, if $\set{\bfP_g}_{g\in\calG}$ sum to a positive multiple of the identity and satisfy the condition of Theorem~\ref{thm.EITFF}(ii) then the same is true for $\set{\bfP_{g+g'}}_{g\in\calG}$ for any $g'\in\calG$ as well as for $\set{\bfP_{\sigma(g)}}_{g\in\calG}$ for any automorphism $\sigma$ of $\calG$ since,
for any $\gamma\in\widehat{\calG}$,
\begin{equation*}
\sum_{g\in\calG}\overline{\gamma(g)}\bfP_{g+g'}
=\sum_{g\in\calG}\overline{\gamma(g-g')}\bfP_{g}
=\gamma(g')\bfM_\gamma,
\qquad
\sum_{g\in\calG}\overline{\gamma(g)}\bfP_{\sigma(g)}
=\sum_{g\in\calG}\overline{(\gamma\circ\sigma^{-1})(g)}\bfP_{g}
=\bfM_{\gamma\circ\sigma^{-1}}.
\end{equation*}
Moreover, by Theorem~\ref{thm.small TFF yields harmonic}(a) and Corollary~\ref{cor.complements},
any harmonic $\EITFF(D,G,R)$ with $D<GR$ has a Naimark complement which is a harmonic $\EITFF(GR-D,G,R)$,
and the corresponding generating TFFs are spatial-complementary.
Thus, if $\set{\bfP_g}_{g\in\calG}$ are difference projections with $\bfP_g\neq\bfI$ for at least one choice of $g$ then so are $\set{\bfI-\bfP_g}_{g\in\calG}$.

\begin{remark}
Rather than regarding a harmonic ETF as arising from a generating TFF for a space of dimension one,
we may instead regard it as a generating ECTFF that, by Theorem~\ref{thm.EITFF}(b), yields a harmonic ECTFF with larger parameters.
Remarkably, one can show that the resulting ECTFFs are a special case of those constructed by Zauner~\cite{Zauner99} from \textit{balanced incomplete block designs} (BIBDs).
\end{remark}

\section{Constructions of harmonic equichordal and equi-isoclinic tight fusion frames}

In this section we use Theorem~\ref{thm.EITFF} to construct new infinite families of (harmonic) EITFFs.
In particular,
we construct difference projections (Definition~\ref{def.difference projections}) by exploiting the classical theory of Gauss sums over finite fields.
We start by reviewing some preliminaries~\cite{LidlN97}.

\subsection{Preliminaries}

Recall that the Pontryagin dual $\widehat{\calG}$ of a finite abelian group $\calG$ is the set of all characters on $\calG$,
namely the set of all homomorphisms $\gamma$ from $\calG$ into $\bbT:=\set{z\in\bbC: \abs{z}=1}$.
Let $Q$ be any prime power, and let $\chi$ be any \textit{multiplicative character} of the $Q$-element field $\bbF_Q$, that is, any character of its (cyclic) multiplicative group \smash{$\bbF_Q^\times$}.
Let $\gamma$ be an \textit{additive character} of $\bbF_Q$,
that is, a character of the additive group of $\bbF_Q$.
Now consider the inner product of the restriction of $\gamma$ to $\bbF_Q^\times$ with $\chi$:
\begin{equation*}
\ip{\gamma}{\chi}_{\times}
=\sum_{x\in\bbF_Q^\times}\overline{\gamma(x)}\chi(x).
\end{equation*}
(Throughout, we place a ``$\times$" subscript on an inner product when its arguments should be regarded as vectors whose entries are indexed by $\bbF_Q^\times$.)
Complex numbers of this form are types of Gauss sums,
and we now review some facts about them that we will need.

We denote the Pontryagin duals of the additive and multiplicative groups of $\bbF_Q$ as $\widehat{\bbF}_Q$ and $\widehat{\bbF}_Q^\times$, respectively.
Recall that if $\calG$ is any finite abelian group then \smash{$\set{\gamma}_{\gamma\in\widehat{\calG}}$} is an equal norm orthogonal basis for $\bbC^\calG$ that satisfies $\abs{\gamma(g)}=1$ for all $g\in\calG$, $\gamma\in\widehat{\calG}$.
Since restricting the identity additive character $1\in\widehat{\bbF}_Q$ to $\bbF_Q^\times$ yields $1\in\widehat{\bbF}_Q^\times$ we have that $\ip{1}{\chi}_\times=0$ whenever $\chi\neq1$.
When $\gamma\neq1$ and $\chi=1$, we instead have
\smash{$\ip{\gamma}{1}_\times
=\sum_{x\in\bbF_Q^\times}\overline{\gamma(x)}
=-1+\sum_{x\in\bbF_Q^\times}\overline{\gamma(x)}
=-1$}.
That is,
\begin{equation}
\label{eq.Gauss sum easy cases}
\ip{\gamma}{\chi}_\times
=\left\{\begin{array}{rl}
Q-1,&\ \gamma=1,\,\chi=1,\\
  0,&\ \gamma=1,\,\chi\neq 1,\\
 -1,&\ \gamma\neq1,\,\chi=1.
\end{array}\right.
\end{equation}
The remaining case has a well-known modulus:
\begin{equation}
\label{eq.Gauss sum magnitude}
\abs{\ip{\gamma}{\chi}_\times}
=\sqrt{Q},
\quad\forall\, \gamma\neq1,\,\chi\neq 1.
\end{equation}
We will also need a certain relationship between a Gauss sum of $\chi$ and a Gauss sum of its inverse (entrywise conjugate) $\overline{\chi}$:
\begin{equation}
\label{eq.Gauss sum of conjugate}
\ip{\gamma}{\overline{\chi}}_\times=\chi(-1)\overline{\ip{\gamma}{\chi}_\times},
\quad\forall\, \gamma\in\widehat{\bbF}_Q^{}, \, \chi \in \widehat{\bbF}^\times_Q.
\end{equation}

In general,
since the multiplicative group of $\bbF_Q$ is cyclic,
we can enumerate its characters by letting $x_0$ be any fixed generator of $\bbF_Q$ and defining \smash{$\chi_m(x_0^n):=\exp(\frac{2\pi\rmi mn}{Q-1})$} for all $m,n\in\bbZ_{Q-1}$.
In particular,
$\bbF_Q$ has no nontrivial real-valued multiplicative character when $Q$ is even,
and has a sole such character when $Q$ is odd.
Extending this multiplicative character to $\bbF_Q$ yields the \textit{Legendre symbol},
namely the function
\begin{equation}
\label{eq.Legendre symbol}
(\tfrac{\cdot}{Q}):\bbF_Q\rightarrow\bbC,
\quad
(\tfrac{x}{Q}):=\left\{\begin{array}{rl}
 0,&\ x=0,\\
 1,&\ \exists\,y\in\bbF_Q^\times\text{ such that }x=y^2,\\
-1,&\ \text{else}.
\end{array}\right.
\end{equation}
We abuse terminology, sometimes also using ``Legendre symbol" to refer to the multiplicative character obtained by restricting~\eqref{eq.Legendre symbol} to $\bbF_Q^\times$.

Any multiplicative character satisfies $\chi(-x)=\chi(-1)\chi(x)$ for all $x\in\bbF_Q$ where $\chi(-1)\in\set{1,-1}$,
and so is \textit{even} (i.e., has even symmetry) when $\chi(-1)=1$ and is \textit{odd} when $\chi(-1)=-1$.
When $Q$ is even, every multiplicative character is even.
When $Q$ is odd, exactly half of the $Q-1$ multiplicative characters are even.
Furthermore, when $Q$ is odd, the Legendre symbol is even if and only if $Q\equiv 1\bmod 4$.
Finally, we will use the standard result that any nontrivial multiplicative character $\chi$ satisfies
\begin{equation}
\label{eq.conjugate reversal relation}
\chi(x)=\overline{\chi(-x)}\quad \forall\, x\in\bbF_Q^\times
\qquad
\Longleftrightarrow
\qquad
Q\equiv 1\bmod 4\quad \text{and}\quad \chi(x)=(\tfrac{x}{Q})\quad \forall\, x\in\bbF_Q^\times.
\end{equation}

\subsection{Constructions}

With these facts in hand, we are ready to construct an infinite family of harmonic EITFFs that generalizes the one given in Example~\ref{ex.EITFF(4,5,2)}:

\begin{theorem}
\label{thm.EITFF(Q-1,Q,2)}
A harmonic $\EITFF(Q-1,Q,2)$ exists for any odd prime power $Q$.
Also, a harmonic $\ECTFF(Q-1,Q,R)$ exists for any prime power $Q$ and positive integer $R\leq Q-1$, and it can be chosen to be real when both $Q\equiv 1\bmod 4$ and $R=2$.
\end{theorem}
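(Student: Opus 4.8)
The plan is to invoke Theorem~\ref{thm.EITFF} with $\calG$ taken to be the additive group of $\bbF_Q$, so that $G=Q$; every EC/EITFF assertion then reduces to a statement about the entrywise DFT $\set{\bfM_\gamma}_{\gamma\in\widehat\calG}$ of a $\calG$-indexed family of $\calR\times\calR$ projections, and by design these DFTs are Gauss sums. Concretely, I would fix $R$ distinct multiplicative characters $\set{\chi_r}_{r\in\calR}$ of $\bbF_Q$---possible exactly when $R\le Q-1$, since $\bbF_Q$ has precisely $Q-1$ of them---and, for each $g\in\bbF_Q^\times$, let $\bfP_g$ be the rank-one orthogonal projection onto the unit vector $\bfv_g\in\bbC^\calR$ with $\bfv_g(r):=\tfrac1{\sqrt R}\chi_r(g)$, while setting $\bfP_0:=\bfzero$. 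Then $\rank(\bfP_g)=1$ for $g\neq0$ and $\rank(\bfP_0)=0$, so $D=\sum_{g}\rank(\bfP_g)=Q-1$, as required. A direct computation gives $\bfM_\gamma(r_1,r_2)=\tfrac1R\ip{\gamma}{\chi_{r_1}\overline{\chi_{r_2}}}_\times$ for all $\gamma\in\widehat\calG$; in particular, orthogonality of the distinct characters together with~\eqref{eq.Gauss sum easy cases} yields $\bfM_1=\tfrac{Q-1}{R}\bfI$, so the images of $\set{\bfP_g}$ form a $\TFF(R,Q,\set{D_g})$ with $A=\tfrac DR$ and Theorem~\ref{thm.EITFF} applies.

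For the equichordal claim with general $R\le Q-1$, I would verify criterion (v) of Theorem~\ref{thm.EITFF}(b). For $\gamma\neq1$,
\[
\bignorm{\bfM_\gamma}_\Fro^2
=\tfrac1{R^2}\sum_{r_1,r_2}\abs{\ip{\gamma}{\chi_{r_1}\overline{\chi_{r_2}}}_\times}^2
=\tfrac1{R^2}\bigbracket{R\cdot 1+R(R-1)\cdot Q},
\]
where the $R$ diagonal terms use $\chi_r\overline{\chi_r}=1$ together with $\abs{\ip{\gamma}{1}_\times}^2=1$ from~\eqref{eq.Gauss sum easy cases}, and the $R(R-1)$ off-diagonal terms use that $\chi_{r_1}\overline{\chi_{r_2}}$ is nontrivial (distinctness) together with $\abs{\ip{\gamma}{\chi_{r_1}\overline{\chi_{r_2}}}_\times}^2=Q$ from~\eqref{eq.Gauss sum magnitude}. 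Since this is independent of $\gamma\neq1$, (v) holds and the generated harmonic TFF is an $\ECTFF(Q-1,Q,R)$.

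For the equi-isoclinic claim I would specialize to $R=2$, $\calR=\set{1,2}$, $\chi_1=1$, and $\chi_2=\chi$ an odd multiplicative character, which exist precisely when $Q$ is odd; this recovers Example~\ref{ex.EITFF(4,5,2)} when $Q=5$. Writing $t:=\ip{\gamma}{\chi}_\times$ (so $\abs t=\sqrt Q$) and using~\eqref{eq.Gauss sum easy cases} and~\eqref{eq.Gauss sum of conjugate}, one finds for $\gamma\neq1$ that
\[
\bfM_\gamma=\tfrac12\begin{bmatrix}-1 & \chi(-1)\overline t\\ t & -1\end{bmatrix},
\qquad
\bfM_\gamma^*\bfM_\gamma^{}=\tfrac14\begin{bmatrix}Q+1 & -(1+\chi(-1))\overline t\\ -(1+\chi(-1))t & Q+1\end{bmatrix}.
\]
Because $\chi$ is odd, $\chi(-1)=-1$, so the off-diagonal entries vanish and $\bfM_\gamma^*\bfM_\gamma^{}=\tfrac{Q+1}4\bfI$ for every $\gamma\neq1$; criterion (ii) of Theorem~\ref{thm.EITFF}(a) then certifies a harmonic $\EITFF(Q-1,Q,2)$. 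For the reality statement I would instead take $R=2$, $\chi_1=1$, and $\chi_2=(\tfrac\cdot Q)$ the Legendre symbol with $Q\equiv1\bmod4$: this character is then even and real-valued, so each $\bfP_g$ is a real matrix with $\bfP_{-g}=\bfP_g$, whence $\bfP_g(r_1,r_2)=\overline{\bfP_{-g}(r_1,r_2)}$ and Theorem~\ref{thm.EITFF}(c) makes the (already equichordal) harmonic $\ECTFF(Q-1,Q,2)$ real.

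The main obstacle is the off-diagonal cancellation in $\bfM_\gamma^*\bfM_\gamma^{}$: getting it to vanish is precisely what upgrades \emph{equichordal} to \emph{equi-isoclinic}, and it rests on invoking the conjugate-reversal identity~\eqref{eq.Gauss sum of conjugate} and correctly tracking the sign $\chi(-1)\in\set{1,-1}$. This is what forces the relevant character to be odd---hence $Q$ odd---for the EITFF. It also explains the asymmetry in the reality clause: by~\eqref{eq.conjugate reversal relation}, entrywise reality of the $\bfP_g$ forces the governing character to be the Legendre symbol with $Q\equiv1\bmod4$, which is even and so reinstates a nonzero off-diagonal block, destroying the isoclinic property; thus reality is compatible only with the weaker equichordal conclusion, exactly as stated.
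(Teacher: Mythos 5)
Your proposal is correct and follows essentially the same route as the paper's own proof: the identical rank-one projections $\bfP_g$ built from values of distinct multiplicative characters (with $\bfP_0=\bfzero$), the same Gauss-sum evaluation of $\bfM_\gamma$, the same Frobenius-norm count certifying the ECTFF via Theorem~\ref{thm.EITFF}(b), the same $2\times2$ computation of $\bfM_\gamma^*\bfM_\gamma^{}$ showing that oddness of $\chi$ (hence $Q$ odd) kills the off-diagonal terms and certifies the EITFF via Theorem~\ref{thm.EITFF}(a), and the same Legendre-symbol choice for realness via Theorem~\ref{thm.EITFF}(c). The only cosmetic differences are that you do not insist $\chi_1=1$ in the equichordal part (harmless, as your argument shows) and you verify the realness condition directly for the Legendre symbol rather than invoking the full characterization~\eqref{eq.conjugate reversal relation}, which the paper uses to get an if-and-only-if.
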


(To clarify, Theorem~\ref{thm.EITFF(Q-1,Q,2)} does not announce the existence of a \textit{real} $\EITFF(Q-1,Q,2)$ for any prime power $Q\equiv 1\bmod 4$.)

For context, a complex $\ETF(\frac12(N-1),N)$ is known to exist whenever $N=Q$ is a prime power with $Q\equiv3\bmod4$~\cite{XiaZG05,Renes07} or, more generally, whenever there exists a skew-symmetric (real) conference matrix of size $N+1$~\cite{Strohmer08},
and directly summing two copies of such an ETF yields an $\EITFF(N-1,N,2)$.
However, whenever $Q$ is a prime power with $Q\equiv1\bmod4$, no complex $\ETF(\frac12(Q-1),Q)$ is known to exist~\cite{FickusM16}.
For $Q$ of this latter type, one may alternatively attempt to obtain a complex $\EITFF(Q-1,Q,2)$ by applying Hoggar's $\bbH$-to-$\bbC$ method to a quaternionic $\ETF(\frac12(Q-1),Q)$, and such ETFs are known to exist for $Q\in\set{5,7,9,11,13}$~\cite{CohnKM16,EtTaoui20,Waldron20}, but not more generally.
The EITFFs produced by Theorem~\ref{thm.EITFF(Q-1,Q,2)} above are thus apparently new for all of the infinite number of prime powers $Q\geq 19$ with $Q\equiv1\bmod 4$.
This itself suggests a question that we leave for future work: does a quaternionic $\ETF(\frac12(N-1),N)$ exist for every odd $N>1$?
In Theorem~\ref{thm.conference},
we generalize the construction of Theorem~\ref{thm.EITFF(Q-1,Q,2)},
finding that a complex $\EITFF(N-1,N,2)$ exists whenever a skew-symmetric complex conference matrix of size $N+1$ does.

\begin{proof}[Proof of Theorem~\ref{thm.EITFF(Q-1,Q,2)}]
We apply Theorem~\ref{thm.EITFF}, taking $\calG$ to be the additive group of $\bbF_Q$.
Let $R$ be any integer such that $1\leq R\leq Q-1$,
and let $\set{\chi_r}_{r=1}^R$ be any sequence of distinct multiplicative characters of \smash{$\bbF_Q$} with $\chi_1$ being the identity character.
To construct a harmonic $\ECTFF(Q-1,Q,R)$,
let $\bfPsi_0$ be the (degenerate, rank-$0$) $R\times\emptyset$ isometry,
and for any nonzero $x\in\bbF_Q$ let $\bfPsi_x$ be the $R\times 1$ matrix (column vector) with \smash{$\bfPsi_x(r)=\frac1{\sqrt{R}}\chi_r(x)$} for all $x\neq0$ and $r\in[R]$.
Defining $\bfP_x:=\bfPsi_x^{}\bfPsi_x^*$ for all $x\in\bbF_Q$,
we thus have that $\bfP_0:=\bfPsi_0^{}\bfPsi_0^*=\bfzero$,
and that for any $x\neq0$, $\bfP_x=\bfPsi_x^{}\bfPsi_x^*$ is the $R\times R$ rank-$1$ projection whose $(r_1,r_2)$th entry is
\begin{equation}
\label{eq.pf of EITFF(Q-1,Q,2) 1}
\bfP_x(r_1,r_2)
=(\bfPsi_x^{}\bfPsi_x^*)(r_1,r_2)
=\tfrac1R\chi_{r_1}(x)\overline{\chi_{r_2}}(x)
=\tfrac1R(\chi_{r_1}^{}\chi_{r_2}^{-1})(x).
\end{equation}
For any additive character $\gamma$ of $\bbF_Q$, the $(r_1,r_2)$th entry of the matrix $\bfM_\gamma$ of~\eqref{eq.DFT of projections} is thus
\begin{equation}
\label{eq.pf of EITFF(Q-1,Q,2) 2}
\bfM_\gamma(r_1,r_2)
=\sum_{x\in\bbF_Q}\overline{\gamma(x)}\bfP_x(r_1,r_2)
=0+\sum_{x\in\bbF_Q^\times}\overline{\gamma(x)}\tfrac1R(\chi_{r_1}^{}\chi_{r_2}^{-1})(x)
=\tfrac1R\ip{\gamma}{\chi_{r_1}^{}\chi_{r_2}^{-1}}_\times.
\end{equation}
When $\gamma=1$, this reduces via~\eqref{eq.Gauss sum easy cases} to \smash{$\bfM_1=\sum_{x\in\bbF_Q}\bfP_x=\tfrac{Q-1}{R}\bfI$},
implying that $\set{\bfPsi_x}_{x\in\bbF_Q}$ are the isometries of a generating $\TFF(R,Q,\set{D_x}_{x\in\bbF_Q})$ where $D_0=0$ and $D_x=1$ for all $x\neq0$,
and so generate a harmonic $\TFF(Q-1,Q,R)$ by Theorem~\ref{thm.small TFF yields harmonic}.
To show that this harmonic $\TFF(Q-1,Q,R)$ is an ECTFF,
note that for any $\gamma\neq1$,
\eqref{eq.Gauss sum easy cases} and~\eqref{eq.Gauss sum magnitude} give
\begin{equation*}
\abs{\bfM_\gamma(r_1,r_2)}^2
=\tfrac1{R^2}\abs{\ip{\gamma}{\chi_{r_1}^{}\chi_{r_2}^{-1}}_\times}^2
=\tfrac1{R^2}\left\{\begin{array}{cl}
1,&\ r_1=r_2,\\
Q,&\ r_1\neq r_2,
\end{array}\right.
\end{equation*}
implying that
$\norm{\bfM_\gamma}_\Fro^2
=\tfrac1{R^2}[R+R(R-1)Q]$
is independent of $\gamma\neq1$,
meaning the conditions of Theorem~\ref{thm.EITFF}(b) are met.

In light of~\eqref{eq.pf of EITFF(Q-1,Q,2) 1} and the fact that $\bfP_0=\bfzero$,
this harmonic $\ECTFF(Q-1,Q,R)$ meets the sufficient condition for being real that is given in Theorem~\ref{thm.EITFF}(c) if and only if
\begin{equation*}
\tfrac1R(\chi_{r_1}^{}\chi_{r_2}^{-1})(x)
=\bfP_x(r_1,r_2)
=\overline{\bfP_{-x}(r_1,r_2)}
=\overline{\tfrac1R(\chi_{r_1}^{}\chi_{r_2}^{-1})(-x)}
\end{equation*}
for all $x\neq0$ and all distinct $r_1,r_2\in[R]$.
In light of~\eqref{eq.conjugate reversal relation} and the fact that $\chi_1$ is the identity multiplicative character,
this holds if and only if either $R=1$ or $R=2$, $\chi_2$ is the Legendre symbol, and $Q\equiv1\bmod 4$.

We conclude by characterizing when this harmonic $\ECTFF(Q-1,Q,R)$ is an EITFF in the special case where $R=2$.
Here, letting $\chi_2=\chi$ be any nonidentity multiplicative character,
and letting $\gamma$ be any nonidentity additive character,
\eqref{eq.pf of EITFF(Q-1,Q,2) 2}, \eqref{eq.Gauss sum easy cases} and~\eqref{eq.Gauss sum of conjugate} give
\begin{equation}
\label{eq.pf of EITFF(Q-1,Q,2) 3}
\bfM_\gamma
=\tfrac12\left[\begin{array}{cc}
\ip{\gamma}{1}_{\times}&\ip{\gamma}{\chi^{-1}}_{\times}\\
\ip{\gamma}{\chi}_{\times}&\ip{\gamma}{1}_{\times}
\end{array}\right]
=\tfrac12\left[\begin{array}{cc}
-1&\ip{\gamma}{\overline{\chi}}_{\times}\\
\ip{\gamma}{\chi}_{\times}&-1
\end{array}\right]
=\tfrac12\left[\begin{array}{cc}
-1&\chi(-1)\overline{\ip{\gamma}{\chi}_{\times}}\\
\ip{\gamma}{\chi}_{\times}&-1
\end{array}\right].
\end{equation}
For any $\gamma\neq1$,
\eqref{eq.Gauss sum magnitude} gives $\abs{\ip{\gamma}{\chi}_\times}^2=Q$ and so this in turn implies that
\begin{equation*}
\bfM_\gamma^*\bfM_\gamma^{}
=\tfrac14\left[\begin{array}{cc}
Q+1&-[1+\chi(-1)]\overline{\ip{\gamma}{\chi}_\times}\\
-[1+\chi(-1)]\ip{\gamma}{\chi}_\times&Q+1
\end{array}\right].
\end{equation*}
By Theorem~\ref{thm.EITFF}(a), such a harmonic $\ECTFF(Q-1,Q,2)$ is an EITFF if and only if $\chi(-1)=-1$,
namely if and only if $\chi$ has odd symmetry.
Such $\chi$ exist whenever $Q$ is odd.
\end{proof}

In the next result, we modify the construction of Theorem~\ref{thm.EITFF(Q-1,Q,2)} so as to obtain new (harmonic) $\EITFF(Q,Q,2)$.

\begin{theorem}
\label{thm.EITFF(Q,Q,2)}
A harmonic $\EITFF(Q,Q,2)$ exists for any prime power $Q\geq 4$,
and it can be chosen to be real whenever $Q\equiv 1\bmod 4$.
Also, a harmonic $\ECTFF(Q,Q,R)$ exists for any prime power $Q$ and positive integer  $R\leq Q-1$.
\end{theorem}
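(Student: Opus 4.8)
The plan is to invoke Theorem~\ref{thm.EITFF} with $\calG$ the additive group of $\bbF_Q$ and to exhibit an explicit family of difference projections (Definition~\ref{def.difference projections}). Concretely, I would seek $\calR\times\calR$ orthogonal projections $\set{\bfP_x}_{x\in\bbF_Q}$ with $\sum_{x\in\bbF_Q}\bfP_x=\tfrac{Q}{R}\bfI$ and then verify condition (ii) (for the EITFF) or (v) (for the ECTFF) of that theorem by computing the entrywise DFT $\bfM_\gamma=\sum_{x}\overline{\gamma(x)}\bfP_x$ via the Gauss sums of Section~4.1. The decisive new feature, relative to Theorem~\ref{thm.EITFF(Q-1,Q,2)}, is that here $D=\sum_{x}\rank(\bfP_x)=Q=G$, so no fibre may be taken to be the zero projection; in particular, for $R=2$ every $\bfP_x$ must be a rank-one projection $\bfv_x^{}\bfv_x^*$ onto a unit vector, and the generating TFF is a genuine $Q$-vector unit-norm tight frame for $\bbC^2$ rather than one carrying a $\bfzero$ block at $x=0$ as in \eqref{eq.pf of EITFF(Q-1,Q,2) 1}.

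For the $\EITFF(Q,Q,2)$ I would first record the reduction this forces. Writing $\bfv_x=\tfrac1{\sqrt2}(1,\mu_x)$ with $\abs{\mu_x}=1$, each diagonal entry of $\bfM_\gamma$ is $\tfrac12\sum_x\overline{\gamma(x)}$, hence vanishes for $\gamma\neq1$, so $\bfM_\gamma$ is anti-diagonal with off-diagonal entry $\tfrac12\sum_x\overline{\gamma(x)}\mu_x$, while tightness is exactly $\sum_x\mu_x=0$. By Theorem~\ref{thm.EITFF}(a), the construction then succeeds precisely when $\set{\mu_x}_{x\in\bbF_Q}\subset\bbT$ has vanishing ``DC'' coefficient and constant $\abs{\sum_x\overline{\gamma(x)}\mu_x}$ over all $\gamma\neq1$ (equivalently, constant off-peak autocorrelation $\sum_x\overline{\mu_x}\mu_{x+t}$). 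I would produce such data from Gauss sums: an odd multiplicative character supplies the \emph{flat} off-diagonal magnitudes exactly as in \eqref{eq.pf of EITFF(Q-1,Q,2) 3}, and the value of $\mu$ on the $x=0$ fibre (equivalently the Bloch datum of $\bfP_0$) is chosen to restore $\sum_x\mu_x=0$; the identity \eqref{eq.Gauss sum of conjugate} should again pin down the role of $\chi(-1)$, certifying that the anti-diagonal $\bfM_\gamma$ is a scalar multiple of a unitary with $B=\tfrac{D(GR-D)}{R^2(G-1)}$. For the real refinement when $Q\equiv1\bmod4$, I would arrange $\bfP_x=\overline{\bfP_{-x}}$ and apply Theorem~\ref{thm.EITFF}(c), using that the Legendre symbol is even in this case (cf.\ \eqref{eq.conjugate reversal relation}).

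For the $\ECTFF(Q,Q,R)$ the defining requirement is only constancy of $\tfrac1R\norm{\bfM_\gamma}_\Fro^2$ for $\gamma\neq1$ (Theorem~\ref{thm.EITFF}(b)), which is far more permissive. Here I expect the additive-character vectors $\bfv_x=\tfrac1{\sqrt R}(\psi_1(x),\dotsc,\psi_R(x))$ to be convenient: since additive characters are defined at $x=0$ and orthogonal over all of $\bbF_Q$, one gets $\sum_x\bfP_x=\tfrac QR\bfI$, and hence $D=Q$, for free, and $\norm{\bfM_\gamma}_\Fro^2$ collapses to a difference-counting statistic of the indexing data. I would verify constancy of this trace statistic using the Gauss-sum magnitudes of Section~4.1 and, where a single choice of characters does not suffice for every $R\le Q-1$, combine it with the Naimark and spatial complement operations of Corollary~\ref{cor.complements} (recalling that these preserve the equichordal property) to fill in the remaining values of $R$.

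The hard part is the $x=0$ fibre in the equi-isoclinic case. Multiplicative characters are the natural source of the flat Gauss-sum magnitudes that equi-isoclinicity demands, but their orthogonality lives on $\bbF_Q^\times$ and so intrinsically yields $D=Q-1$ with a zero block at the origin; additive characters, by contrast, are defined at $0$ and give tightness at $D=Q$ for free but produce \emph{spiky} (non-equi-isoclinic) transforms $\bfM_\gamma$. Reconciling these—placing a genuine rank-one projection on every fibre, including $x=0$, without breaking $\sum_x\bfP_x=\tfrac{Q}{2}\bfI$ yet keeping $\abs{\sum_x\overline{\gamma(x)}\mu_x}$ constant—is the crux, and it must be carried out by a mechanism insensitive to the parity of the characteristic, since Theorem~\ref{thm.EITFF(Q,Q,2)} covers even $Q$, where no odd multiplicative character exists.
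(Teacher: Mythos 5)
Your high-level framework---invoke Theorem~\ref{thm.EITFF} over the additive group of $\bbF_Q$ and control $\bfM_\gamma$ with Gauss sums---is indeed the paper's, but your execution of the equi-isoclinic case fails at exactly the point you call the crux, and it cannot be repaired within your ansatz. Restricting to flat vectors $\bfv_x=\tfrac1{\sqrt2}(1,\mu_x)$ with $\abs{\mu_x}=1$ is a genuine loss of generality. Under it, tightness is $\sum_{x\in\bbF_Q}\mu_x=0$; but if $\mu_x=\chi(x)$ for $x\neq0$ with $\chi$ a nontrivial multiplicative character, then $\sum_{x\neq0}\mu_x=0$ already, so you would be forced to take $\mu_0=0$, which is not unimodular: the unimodular value at $x=0$ that you plan to choose so as to restore the zero sum does not exist. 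Worse, the flat ansatz is impossible for $Q=4$, a case the theorem covers: writing $\widehat{\mu}(\gamma):=\sum_x\overline{\gamma(x)}\mu_x$, your conditions force $\abs{\widehat{\mu}(\gamma)}^2=\tfrac{Q^2}{Q-1}$ for all $\gamma\neq1$, hence every off-peak autocorrelation $\sum_x\overline{\mu_x}\mu_{x+t}$ equals $-\tfrac{Q}{Q-1}=-\tfrac43$; but four unimodular numbers sum to zero only if they form two antipodal pairs $\set{z,-z,w,-w}$ (both $\set{u_1,u_2}$ and $\set{-u_3,-u_4}$ are the intersection of the unit circle with its translate by $u_1+u_2$), and in $\bbF_4$, placing $z$ at $x=0$ and $-z$ at some $t_1\neq 0$ makes the autocorrelation at $t_1$ equal to $-4$. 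The paper's proof simply abandons flatness: it takes $\bfPsi_0$ to be the first standard basis vector and, for $x\neq0$, gives $\bfPsi_x$ first entry $[\tfrac{Q-2}{2(Q-1)}]^{1/2}$ and second entry $[\tfrac{Q}{2(Q-1)}]^{1/2}\chi(x)$. Then for $\gamma\neq1$ the matrix $\bfM_\gamma$ is \emph{not} anti-diagonal---its diagonal is $\pm\tfrac{Q}{2(Q-1)}$---and one computes that $\bfM_\gamma^*\bfM_\gamma$ is scalar if and only if $\chi(-1)=+1$. So the parity requirement is the opposite of your guess from Theorem~\ref{thm.EITFF(Q-1,Q,2)}: one needs an \emph{even} nontrivial $\chi$, and these exist exactly when $Q\geq4$ (including even $Q$, where every character is even); taking $\chi$ to be the Legendre symbol when $Q\equiv1\bmod4$ then gives realness via Theorem~\ref{thm.EITFF}(c).

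The equichordal half also has a gap. With distinct additive characters, $\bfM_\gamma$ is $\tfrac QR$ times a partial permutation matrix supported on $\set{(r_1,r_2):\psi_{r_1}^{}\psi_{r_2}^{-1}=\gamma}$---no Gauss sums enter at all---so $\norm{\bfM_\gamma}_\Fro^2$ is constant over $\gamma\neq1$ if and only if $\set{\psi_r}_{r=1}^R$ is a difference set in $\widehat{\bbF}_Q$. Such sets exist only for special pairs $(Q,R)$: for $Q=7$ the admissible sizes $R\leq 6$ are $\set{1,3,4,6}$, so $R=2$ and $R=5$ are out of reach, and complementation cannot fill these holes, since Naimark complements fix $R$ while spatial complements replace $R$ by $D-R$, and no sequence of these operations applied to your seeds ever produces an $\ECTFF(7,7,2)$. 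The paper instead obtains every $R\leq Q-1$ from the same non-flat template as above: constant first coordinate $[\tfrac{Q-R}{R(Q-1)}]^{1/2}$ on $\bbF_Q^\times$ (with $\bfPsi_0$ the first standard basis vector) and $R-1$ distinct nontrivial multiplicative characters, scaled by $[\tfrac{Q}{R(Q-1)}]^{1/2}$, in the remaining coordinates; then every entry of $\bfM_\gamma$ for $\gamma\neq1$ has modulus independent of $\gamma$ by \eqref{eq.Gauss sum easy cases} and \eqref{eq.Gauss sum magnitude}, so the Frobenius norm is automatically constant.
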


\begin{proof}
Let $R$ be any integer such that $1\leq R\leq Q-1$,
and let $\set{\chi_r}_{r=1}^R$ be any sequence of distinct multiplicative characters of \smash{$\bbF_Q$} with $\chi_1$ being the identity character.
To construct a harmonic $\ECTFF(Q,Q,R)$,
let \smash{$\set{\bfPsi_x}_{x\in\bbF_Q}$} be the $R\times 1$ matrices (column vectors) obtained by letting $\bfPsi_0$ be the first standard basis vector and letting
\begin{equation*}
\bfPsi_x(r)=\left\{\begin{array}{ll}
\bigbracket{\tfrac{Q-R}{R(Q-1)}}^{\frac12},&\ r=1,\smallskip\\
\bigbracket{\tfrac{Q}{R(Q-1)}}^{\frac12}\chi_r(x),&\ r=2,\dotsc,R,
\end{array}\right.
\end{equation*}
for any $x\neq0$ and $r\in[R]$.
These are rank-$1$ isometries (unit vectors):
for any $x\neq0$,
\begin{equation*}
\sum_{r=1}^R\abs{\bfPsi_x(r)}^2
=\tfrac{Q-R}{R(Q-1)}+(R-1)\tfrac{Q}{R(Q-1)}
=1.
\end{equation*}
The corresponding rank-$1$ projections $\set{\bfP_x}_{x\in\bbF_Q}$,
$\bfP_x=\bfPsi_x^{}\bfPsi_x^*$ have $(r_1,r_2)$th entries
\begin{equation}
\label{eq.pf of EITFF(Q,Q,2) 1}
\bfP_0(r_1,r_2)
=\left\{\begin{array}{cl}
1,&\ r_1=1=r_2,\\
0,&\ \text{else},
\end{array}\right.
\quad
\bfP_x(r_1,r_2)
=\left\{\begin{array}{ll}
\tfrac{Q-R}{R(Q-1)},&\ x\neq0,\, r_1=1=r_2,\smallskip\\
\tfrac{[Q(Q-R)]^{\frac12}}{R(Q-1)}\chi_{r_2}^{-1}(x),&\ x\neq0,\, r_1=1<r_2,\smallskip\\
\tfrac{[Q(Q-R)]^{\frac12}}{R(Q-1)}\chi_{r_1}^{}(x),&\ x\neq0,\, r_1>1=r_2,\smallskip\\
\tfrac{Q}{R(Q-1)}(\chi_{r_1}^{}\chi_{r_2}^{-1})(x),&\ x\neq0,\, r_1,r_2>1.
\end{array}\right.
\end{equation}
For any additive character $\gamma$,
the $(r_1,r_2)$th entry of the matrix $\bfM_\gamma$ of~\eqref{eq.DFT of projections} is thus
\begin{equation}
\label{eq.pf of EITFF(Q,Q,2) 2}
\bfM_\gamma(r_1,r_2)
=\left\{\begin{array}{ll}
1+\tfrac{Q-R}{R(Q-1)}\ip{\gamma}{1}_\times,&\ r_1=1=r_2,\smallskip\\
\tfrac{[Q(Q-R)]^{\frac12}}{R(Q-1)}\ip{\gamma}{\chi_{r_2}^{-1}}_\times,&\ r_1=1<r_2,\smallskip\\
\tfrac{[Q(Q-R)]^{\frac12}}{R(Q-1)}\ip{\gamma}{\chi_{r_1}}_\times,&\ r_1>1=r_2,\smallskip\\
\tfrac{Q}{R(Q-1)}\ip{\gamma}{\chi_{r_1}^{}\chi_{r_2}^{-1}}_\times,&\ r_1,r_2>1.
\end{array}\right.
\end{equation}
Since the multiplicative characters $\set{\chi_r}_{r=1}^R$ are distinct with $\chi_1=1$,
this reduces via~\eqref{eq.Gauss sum easy cases} to the following in the special case where $\gamma=1$:
\begin{equation*}
\bfM_1(r_1,r_2)
=\left\{\begin{array}{cl}
1+\tfrac{Q-R}{R(Q-1)}(Q-1),&\ r_1=r_2=1\smallskip\\
\tfrac{Q}{R(Q-1)}(Q-1),&\ r_1=r_2>1\smallskip\\
0,&\ \text{else}
\end{array}\right\}
=\tfrac{Q}{R}\bfI(r_1,r_2).
\end{equation*}
Thus, \smash{$\bfM_1=\sum_{x\in\bbF_Q}\bfP_x=\tfrac{Q}{R}\bfI$},
implying that $\set{\bfPsi_x}_{x\in\bbF_Q}$ are the isometries of a generating $\TFF(R,Q,1)$,
and so yield a harmonic $\TFF(Q,Q,R)$ by Theorem~\ref{thm.small TFF yields harmonic}.
To show that this harmonic $\TFF(Q,Q,R)$ is an ECTFF,
note that for any $\gamma\neq1$,
combining~\eqref{eq.Gauss sum easy cases} with~\eqref{eq.pf of EITFF(Q,Q,2) 2} gives the following for any $\gamma\neq1$:
\begin{equation}
\label{eq.pf of EITFF(Q,Q,2) 3}
\bfM_\gamma(r_1,r_2)
=\left\{\begin{array}{ll}
\tfrac{(R-1)Q}{R(Q-1)},&\ \gamma\neq 1,\, r_1=r_2=1,\smallskip\\
-\tfrac{Q}{R(Q-1)},&\ \gamma\neq 1,\, r_1=r_2>1,\smallskip\\
\tfrac{[Q(Q-R)]^{\frac12}}{R(Q-1)}\ip{\gamma}{\chi_{r_2}^{-1}}_\times,&\ \gamma\neq 1,\, r_1=1<r_2,\smallskip\\
\tfrac{[Q(Q-R)]^{\frac12}}{R(Q-1)}\ip{\gamma}{\chi_{r_1}}_\times,&\ \gamma\neq 1,\, r_1>1=r_2,\smallskip\\
\tfrac{Q}{R(Q-1)}\ip{\gamma}{\chi_{r_1}^{}\chi_{r_2}^{-1}}_\times,&\ \gamma\neq 1,\, r_1\neq r_2,\, r_1,r_2>1.
\end{array}\right.
\end{equation}
Moreover, in the last three cases above, \eqref{eq.Gauss sum magnitude} gives that $\ip{\gamma}{\chi_{r_2}^{-1}}_\times$, $\ip{\gamma}{\chi_{r_1}}_\times$ and $\ip{\gamma}{\chi_{r_1}^{}\chi_{r_2}^{-1}}_\times$ have magnitude $\sqrt{Q}$, respectively.
As such, for any $\gamma\neq1$,
\begin{equation*}
\norm{\bfM_\gamma}_\Fro^2
=\bigbracket{\tfrac{(R-1)Q}{R(Q-1)}}^2
+(R-1)\bigbracket{\tfrac{Q}{R(Q-1)}}^2
+2(R-1)\tfrac{Q(Q-R)}{[R(Q-1)]^2}Q
+(R-1)(R-2)\bigbracket{\tfrac{Q}{R(Q-1)}}^2Q.
\end{equation*}
Thus, our harmonic $\TFF(Q,Q,R)$ satisfies the condition of Theorem~\ref{thm.EITFF}(b), and so is indeed an ECTFF.
(As a check on this work, the interested reader can verify that the quantity above simplifies to
\smash{$
\tfrac{(R-1)Q^2}{R(Q-1)}$},
which is consistent with the necessary value of
$B$ given in Theorem~\ref{thm.EITFF}(b).)

Next,
in light of~\eqref{eq.pf of EITFF(Q,Q,2) 1},
the fact that $\chi_1=1$, and the fact that $\bfP_0$ is real,
this harmonic $\ECTFF(Q,Q,R)$ satisfies the sufficient condition for realness given in Theorem~\ref{thm.EITFF}(c) if and only if
$(\chi_{r_1}^{}\chi_{r_2}^{-1})(x)
=\overline{(\chi_{r_1}^{}\chi_{r_2}^{-1})(-x)}$
for all $x\neq0$ and all distinct $r_1,r_2\in[R]$.
As in the proof of Theorem~\ref{thm.EITFF(Q-1,Q,2)},
\eqref{eq.conjugate reversal relation} implies that this occurs if and only if either $R=1$ or $R=2$, $\chi_2$ is the Legendre symbol, and $Q\equiv1\bmod 4$.
We do not explicitly mention this fact in the statement of the result since,
as we now explain, it is a subset of the cases in which this harmonic ECTFF is an EITFF.

When $R=2$,
for any nontrivial multiplicative character $\chi=\chi_2$,
\eqref{eq.Gauss sum of conjugate} and~\eqref{eq.pf of EITFF(Q,Q,2) 3} give
\begin{equation}
\label{eq.pf of EITFF(Q,Q,2) 4}
\bfM_\gamma
=\left[\begin{array}{cc}
\frac{Q}{2(Q-1)}&
\frac{[Q(Q-2)]^{\frac12}}{2(Q-1)}\chi(-1)\overline{\ip{\gamma}{\chi}_\times}\\
\frac{{[Q(Q-2)]^{\frac12}}}{2(Q-1)}\ip{\gamma}{\chi}_\times&
-\frac{Q}{2(Q-1)}
\end{array}\right]
\end{equation}
for any $\gamma\neq 1$.
In light of~\eqref{eq.Gauss sum magnitude},
we thus have
\begin{equation*}
\bfM_\gamma^*\bfM_\gamma^{}
=\left[\begin{array}{cc}
\frac{Q^2}{4(Q-1)}&
\frac{Q^{\frac32}(Q-2)^{\frac12}}{4(Q-1)^2}[\chi(-1)-1]\overline{\ip{\gamma}{\chi}_\times}\smallskip\\
\frac{Q^{\frac32}(Q-2)^{\frac12}}{4(Q-1)^2}[\chi(-1)-1]\ip{\gamma}{\chi}_\times&
\frac{Q^2}{4(Q-1)}
\end{array}\right],
\end{equation*}
for any $\gamma\neq 1$.
By Theorem~\ref{thm.EITFF}(a), our harmonic $\ECTFF(Q,Q,2)$ is thus an EITFF if and only if $\chi(-1)=1$, that is, if and only if $\chi$ has even symmetry.
A nontrivial multiplicative character $\chi$ for $\bbF_Q$ with even symmetry exists if and only if $Q\geq 4$.
In fact, when $Q\equiv 1\bmod 4$,
we can choose $\chi$ to be the Legendre symbol,
and as we have already explained, doing so yields a real TFF, and so a real harmonic $\EITFF(Q,Q,2)$.
\end{proof}

When $N$ is even, the existence of an $\EITFF(N,N,2)$ is not too surprising since $\ETF(\frac{N}{2},N)$ are relatively common~\cite{FickusM16}.
But, when $N$ is odd, no such EITFF can arise by either direct sums or Hoggar's methods.
That said, applying a clever modification of Hoggar's $\bbC$-to-$\bbR$ method to the appropriate matrix yields an $\EITFF(N,N,2)$ whenever $N=Q$ is a prime power with $Q\equiv1\bmod4$~\cite{EtTaoui18} or, more generally, whenever there exists a complex symmetric conference matrix of size $N$~\cite{BlokhuisBE18}.
Nevertheless, the EITFFs of Theorem~\ref{thm.EITFF(Q,Q,2)} seem to be new for all (infinitely many) prime powers $Q\geq 7$ with $Q\equiv3\bmod4$.
For example, when $Q=7$,
we have $\bbF_7^\times=\set{1,2,3,4,5,6}=\set{3^0,3^2,3^1,3^4,3^5,3^3}$,
and so a new (complex, harmonic) $\EITFF(7,7,2)$ is generated by the following isometries $\set{\bfPsi_x}_{x\in\bbF_7}$ of a $\TFF(2,7,1)$:
\begin{equation*}
\bfPsi
=\left[\begin{array}{c|c|c|c|c|c|c}
\bfPsi_0&\bfPsi_1&\bfPsi_2&\bfPsi_3&\bfPsi_4&\bfPsi_5&\bfPsi_6
\end{array}\right]
=\left[\begin{array}{c|c|c|c|c|c|c}
1&
a&
a&
a&
a&
a&
a\\
\ 0\ &
\ b\ &
\ b\omega^2\ &
\ b\omega\ &
\ b\omega\ &
\ b\omega^2\ &
\ b\
\end{array}\right]
\end{equation*}
where $a=(\tfrac{5}{12})^{\frac12}$, $b=(\tfrac{7}{12})^{\frac12}$ and $\omega=\exp(\frac{2\pi\rmi}{3})$.
In Theorem~\ref{thm.combinatorial},
we generalize Theorem~\ref{thm.EITFF(Q,Q,2)},
finding that an $\EITFF(N,N,2)$ exists whenever a complex symmetric conference matrix of size $N+1$ exists.
As detailed there, this is notably distinct from the method of~\cite{EtTaoui18,BlokhuisBE18} that produces such an EITFF from such a matrix of size $N$,
but nevertheless equates to it in an important special case.

The two previous results show that the generalization of difference sets to difference projections given in Definition~\ref{def.difference projections} bears fruit.
We conclude this section with a construction of a real harmonic $\EITFF(11,11,3)$ from a generating $\TFF(3,11,1)$, specifically from $11$ rank-$1$ difference projections for the additive group of $\bbF_{11}$.
Up to Naimark complements, this is only the second (real or complex) $\EITFF(D,N,R)$ to have ever been discovered in which $D$ is relatively prime to $R>2$, the first being the real $\EITFF(8,6,3)$ of~\cite{EtTaouiR09,IversonKM21}.

\begin{theorem}
\label{thm.EITFF(11,11,3)}
There exists a real harmonic $\EITFF(11,11,3)$.
Explicitly, let $\chi$ be any generator of $\widehat{\bbF}_{11}^\times$, let $\gamma \in \widehat{\bbF}_{11}$ be given by $\gamma(x) := \exp( \tfrac{ 2 \pi \mathrm{i} x}{11} )$, put \smash{$\omega := \frac{ \langle \gamma, \chi^6 \rangle_{\times} }{ \langle \gamma, \chi \rangle_{\times} }$}, $c := \cos( \tfrac{2\pi}{5} )$, and $s := \sin( \tfrac{2\pi}{5} )$, and define
\begin{equation*}
\bfU:=\frac1{\sqrt{2}}
\left[\begin{array}{ccr}
\sqrt{2}&0&0\\
0&1&1\\
0&\rmi&-\rmi
\end{array}\right],
\quad
\bfPsi_0:=\left[\begin{array}{c}1\\0\\0\end{array}\right],
\quad
\bfPsi_x:=\tfrac{\sqrt{11}}{\sqrt{30}}\left[\begin{array}{c}
\frac{\sqrt{8}}{\sqrt{11}}\smallskip\\
\chi(x)[c(\tfrac{x}{11})+\rmi s\omega]\smallskip\\
\overline{\chi(x)}[c(\tfrac{x}{11})+\rmi s\overline{\omega}]
\end{array}\right],\ \forall\, x\in\bbF_{11}^\times,
\end{equation*}
where $(\tfrac{x}{11})$ denotes the Legendre symbol of $x$.
Then the harmonic matrix ensemble generated by $\set{\bfU \bfPsi_x}_{x\in\bbF_{11}}$ forms a real harmonic $\EITFF(11,11,3)$.
\end{theorem}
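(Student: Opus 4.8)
The plan is to apply Theorem~\ref{thm.EITFF} with $\calG$ the additive group of $\bbF_{11}$, $R=3$, and generating projections $\bfP_x:=(\bfU\bfPsi_x)(\bfU\bfPsi_x)^*$. Since $\bfU$ is unitary, Theorem~\ref{thm.small TFF yields harmonic}(b) shows that left-multiplying the $\bfPsi_x$ by $\bfU$ merely conjugates every cross-Gram matrix, $\bfM_\gamma\mapsto\bfU\bfM_\gamma\bfU^*$, and so affects neither the tight-fusion-frame property nor the equi-isoclinic property. I would therefore verify those two properties for the unrotated projections $\bfPsi_x\bfPsi_x^*$, reserving the explicit form of $\bfU$ for the realness claim. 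Throughout I would use that every nontrivial additive character of $\bbF_{11}$ has the form $x\mapsto\gamma(tx)$ for a unique $t\in\bbF_{11}^\times$, under which $\ip{\gamma(t\,\cdot)}{\psi}_\times=\overline{\psi(t)}\,\ip{\gamma}{\psi}_\times$ for every multiplicative character $\psi$; since $\chi^5=(\tfrac{\cdot}{11})$ is the unique order-$2$ character and $\chi^6=\chi\cdot(\tfrac{\cdot}{11})$, the definition of $\omega$ then forces $\ip{\gamma(t\,\cdot)}{\chi^6}_\times=(\tfrac t{11})\,\omega\,\ip{\gamma(t\,\cdot)}{\chi}_\times$.

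First I would check that each $\bfPsi_x$ is a unit vector and that $\set{\bfPsi_x}_{x\in\bbF_{11}}$ generates a $\TFF(3,11,1)$. Here $|\omega|=1$ by~\eqref{eq.Gauss sum magnitude}, so the identities $c^2+s^2=1$ and $(\tfrac x{11})^2=1$ give $\norm{\bfPsi_x}=1$ for $x\neq0$, while $\bfPsi_0$ is plainly a unit vector. Evaluating the discrete Fourier transform~\eqref{eq.DFT of projections} at $\gamma=1$ and using $\sum_{x\neq0}(\tfrac x{11})=0$ together with the vanishing of $\ip{1}{\psi}_\times$ for nontrivial $\psi$ in~\eqref{eq.Gauss sum easy cases}, I would obtain $\bfM_1=\sum_x\bfP_x=\tfrac{11}{3}\bfI$. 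Thus the $\bfPsi_x$ are the isometries of a $\TFF(3,11,1)$ with every $D_x=1$, so $D=11$ and, by Theorem~\ref{thm.small TFF yields harmonic}, they generate a harmonic $\TFF(11,11,3)$.

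The crux is the equi-isoclinic condition Theorem~\ref{thm.EITFF}(a)(ii): $\bfM_\gamma^*\bfM_\gamma=B\bfI$ with $B=\tfrac{D(GR-D)}{R^2(G-1)}=\tfrac{121}{45}$ for every $\gamma\neq1$. I would compute the nine entries of $\bfM_\gamma$ from~\eqref{eq.DFT of projections}, reducing each to the Gauss sums $\ip{\gamma}{\chi^j}_\times$. The simplifications are supplied by the stated facts: $\chi$ is odd while $(\tfrac{\cdot}{11})=\chi^5$ is even, so~\eqref{eq.Gauss sum of conjugate} gives $\ip{\gamma}{\chi^{-1}}_\times=-\overline{\ip{\gamma}{\chi}_\times}$ and $\ip{\gamma}{\chi^4}_\times=\overline{\ip{\gamma}{\chi^6}_\times}$; the quadratic sum $\ip{\gamma}{\chi^5}_\times$ is purely imaginary of modulus $\sqrt{11}$; and~\eqref{eq.Gauss sum magnitude} fixes $|\ip{\gamma}{\chi^j}_\times|=\sqrt{11}$. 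Combining these with $\ip{\gamma}{\chi^6}_\times=(\tfrac t{11})\omega\,\ip{\gamma}{\chi}_\times$, the three diagonal entries of $\bfM_\gamma^*\bfM_\gamma$ collapse to $B$ once the cross terms cancel via $\omega\overline\omega=1$ and $(c^2+s^2)^2=1$. The genuine obstacle is the off-diagonal (column-orthogonality) entries: the blocks mixing the second and third coordinates introduce $\ip{\gamma}{\chi^2}_\times$, which none of the paper's stated Gauss-sum facts relate to $\ip{\gamma}{\chi}_\times$ or $\ip{\gamma}{\chi^5}_\times$. Forcing these to vanish requires an identity of the form $\ip{\gamma}{\chi}_\times\ip{\gamma}{\chi^6}_\times=\overline{\chi(4)}\,\ip{\gamma}{\chi^5}_\times\ip{\gamma}{\chi^2}_\times$---a Hasse--Davenport product relation for the quadratic character---after which the definition of $\omega$ and the value $\ip{\gamma}{\chi^5}_\times=\pm\rmi\sqrt{11}$ conspire to annihilate the surviving terms. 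I expect this reconciliation (equivalently, pinning down $\omega$ and $\chi(4)$ precisely enough to verify a pair of scalar identities) to be the hard part; for the small field $\bbF_{11}$ it can be discharged either by invoking Hasse--Davenport or by direct evaluation of the finitely many relevant Gauss sums.

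Finally I would establish realness through Theorem~\ref{thm.EITFF}(c), namely $\bfP_x(r_1,r_2)=\overline{\bfP_{-x}(r_1,r_2)}$ for the $\bfU$-rotated projections. The cleanest route is to prove $\bfU\bfPsi_{-x}=\overline{\bfU\bfPsi_x}$ for all $x$, whence $\bfP_{-x}=\overline{\bfP_x}$. Because $11\equiv3\bmod4$ we have $(\tfrac{-1}{11})=-1$, and $\chi$ is odd, so $\chi(-x)=-\chi(x)$ and $(\tfrac{-x}{11})=-(\tfrac x{11})$; these two sign flips show that $\bfPsi_{-x}$ is obtained from $\overline{\bfPsi_x}$ by interchanging its second and third coordinates, i.e.\ $\bfPsi_{-x}=\bfJ\,\overline{\bfPsi_x}$ for the permutation $\bfJ$ that swaps those coordinates. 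Since the specific $\bfU$ satisfies $\bfU\bfJ=\overline{\bfU}$---this is exactly why $\bfU$ replaces the conjugate-paired rows by their real and imaginary combinations---we get $\bfU\bfPsi_{-x}=\bfU\bfJ\,\overline{\bfPsi_x}=\overline{\bfU}\,\overline{\bfPsi_x}=\overline{\bfU\bfPsi_x}$, and $\bfPsi_0$ is fixed. Hence the fusion Gram matrix is real, and, combined with the previous paragraph, $\set{\bfU\bfPsi_x}_{x\in\bbF_{11}}$ generates a real harmonic $\EITFF(11,11,3)$.
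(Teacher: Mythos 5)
Your architecture matches the paper's: generate from $\set{\bfPsi_x}_{x\in\bbF_{11}}$, check $\bfM_1=\tfrac{11}{3}\bfI$ to get a harmonic $\TFF(11,11,3)$, verify Theorem~\ref{thm.EITFF}(a)(ii) for all $\gamma\neq1$, and handle realness separately. Two of your ingredients are correct and even cleaner than the paper's: the reduction to the unrotated $\bfPsi_x$ via Theorem~\ref{thm.small TFF yields harmonic}(b), and especially the realness argument (since $\chi$ and the Legendre symbol are both odd, $\bfPsi_{-x}=\bfJ\overline{\bfPsi_x}$ where $\bfJ$ swaps the last two coordinates, and $\bfU\bfJ=\overline{\bfU}$, so the rotated projections satisfy Theorem~\ref{thm.EITFF}(c) outright); the paper instead derives realness from the conjugate-pair structure of each $\bfM_y$ and checks that $\bfU\bfM_y\bfU^*$ is real. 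Your diagnosis of the crux is also accurate: the diagonal entries of $\bfM_\gamma^*\bfM_\gamma$ do come out to $B=\tfrac{121}{45}$ automatically, and the entire difficulty is column orthogonality, where $\ip{\gamma}{\chi^2}_\times$ enters.

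The gap is that this crux is never resolved, and the one substantive claim you make about it is unjustified. Even granting Hasse--Davenport, $\ip{\gamma}{\chi}_\times\ip{\gamma}{\chi^6}_\times=\overline{\chi(4)}\,\ip{\gamma}{\chi^5}_\times\ip{\gamma}{\chi^2}_\times$ (which is the correct form of the product relation, and does eliminate $\ip{\gamma}{\chi^2}_\times$), the surviving terms do \emph{not} ``conspire to annihilate'' on the strength of $\abs{\ip{\gamma}{\chi^j}_\times}=\sqrt{11}$, $\ip{\gamma}{\chi^5}_\times=\pm\rmi\sqrt{11}$, and the definition of $\omega$. What remains after the Hasse--Davenport substitution is a pair of sharp scalar identities, equivalent to the paper's \eqref{eq.pf of EITFF(11,11,3) 13} and \eqref{eq.pf of EITFF(11,11,3) 14}: one must know that $\imag(\omega)=(\sqrt{5}-1)/\sqrt{11}$, that $\real(\omega)=(1+\sqrt{5})\imag(\chi(4))/\sqrt{11}$, and that the quadratic Gauss sum satisfies $\ip{\gamma}{\chi^5}_\times=-\sqrt{11}\rmi$ (the sign matters). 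None of these follow from the facts you cite; they are exact evaluations of the phase of the Gauss-sum ratio $\omega=\ip{\gamma}{\chi^6}_\times/\ip{\gamma}{\chi}_\times$, and establishing them is precisely what the paper's Appendix~A spends its cyclotomic computations in $\bbQ[\exp(\tfrac{2\pi\rmi}{55})]$ doing. Deferring this to ``direct evaluation of the finitely many relevant Gauss sums'' omits the actual mathematical content of the theorem---without it you have verified only that the construction is a harmonic ECTFF candidate with the right symmetries, not that it is equi-isoclinic. That said, your Hasse--Davenport idea is worth keeping: combined with the Jacobi-sum identity $J(\chi,\chi)=\ip{\gamma}{\chi}_\times^2/\ip{\gamma}{\chi^2}_\times$ it gives $\omega=\overline{\chi(4)}\ip{\gamma}{\chi^5}_\times/J(\chi,\chi)$, reducing the paper's $\bbQ[\exp(\tfrac{2\pi\rmi}{55})]$ manipulations to the evaluation of a single nine-term Jacobi sum in $\bbZ[\exp(\tfrac{2\pi\rmi}{10})]$---a genuine simplification, but one that still has to be carried out.
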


It is easy to verify Theorem~\ref{thm.EITFF(11,11,3)} with the aid of a computer algebra system, and the arXiv version of this paper includes an ancillary file with code that implements such a verification in GAP~\cite{GAP}.
For the sake of completeness, we also include a human-readable (albeit technical) proof of Theorem~\ref{thm.EITFF(11,11,3)} as Appendix~A.

\section{Combinatorial generalizations of some harmonic EITFF constructions}

In Theorem~\ref{thm.EITFF(Q-1,Q,2)},
we used Theorem~\ref{thm.EITFF} to construct a harmonic $\EITFF(Q-1,Q,2)$ for any odd prime power.
Examining~\eqref{eq.pf of EITFF(Q-1,Q,2) 3} from the perspective of Theorems~\ref{thm.small TFF yields harmonic} and~\ref{thm.EITFF} reveals that for any distinct $n_1,n_2\in[Q]$, the $(n_1,n_2)$th block of this EITFF's signature matrix $\bfS$ is of the form
\begin{equation}
\label{eq.sig of EITFF(N-1,N,2)}
\bfS_{n_1,n_2}
=\tfrac1{\sqrt{N+1}}\left[\begin{array}{cc}
-1&-\sqrt{N}\,\overline{\bfZ(n_1,n_2)}\\
\sqrt{N}\,\bfZ(n_1,n_2)&-1
\end{array}\right],
\end{equation}
where $\bfZ$ is some $\bbF_Q$-circulant complex matrix whose off-diagonal entries are unimodular and whose diagonal entries are $0$.
Applying a similar logic to~\eqref{eq.pf of EITFF(Q,Q,2) 4} reveals that the $(n_1,n_2)$th block of the signature matrix of the $\EITFF(Q,Q,2)$ we constructed in Theorem~\ref{thm.EITFF(Q,Q,2)} is of the form
\begin{equation}
\label{eq.sig of EITFF(N,N,2)}
\bfS_{n_1,n_2}=\tfrac1{\sqrt{N-1}}\left[\begin{array}{cc}
1&\sqrt{N-2}\,\overline{\bfZ(n_1,n_2)}\\
\sqrt{N-2}\,\bfZ(n_1,n_2)&-1\end{array}\right],
\end{equation}
for some such matrix $\bfZ$.
In this section, we relax the requirement that $\bfZ$ be $\bbF_Q$-circulant,
and instead characterize those $N\times N$ matrices $\bfZ$ for
which~\eqref{eq.sig of EITFF(N-1,N,2)} or~\eqref{eq.sig of EITFF(N,N,2)} define a signature matrix of an $\EITFF(N-1,N,2)$ or of an $\EITFF(N,N,2)$, respectively.
As we shall see, it turns out that~\eqref{eq.sig of EITFF(N,N,2)} is distinct from, but also related to, another recently discovered construction of $\EITFF(N,N,2)$~\cite{EtTaoui18,BlokhuisBE18},
and moreover that all three constructions involve \textit{conference matrices}.

To elaborate,
for any positive integer $M$,
a \textit{complex conference matrix} of size $M$
is any $M\times M$ complex matrix $\bfC$ whose diagonal entries are zero,
whose off-diagonal entries are unimodular,
and which is a scalar multiple of a unitary matrix, satisfying $\bfC^*\bfC=(M-1)\bfI$.
A \textit{(real) conference matrix} is a complex conference matrix $\bfC$ with
$\bfC(m_1,m_2)\in\set{1,-1}$ for all $m_1\neq m_2$.
We say that a complex matrix $\bfC$ is \textit{$\varepsilon$-symmetric} for some $\varepsilon\in\set{1,-1}$ if $\bfC^\rmT=\varepsilon\bfC$.
That is, $\bfC$ is $1$-symmetric or $-1$-symmetric if it is symmetric or skew-symmetric, respectively.
A complex conference matrix $\bfC$ is \textit{$\varepsilon$-normalized} if $\bfC(m,1)=1=\varepsilon\bfC(1,m)$ for all $m>1$, namely if it is of the following form:
\begin{equation}
\label{eq.def of core}
\bfC
=\left[\begin{array}{cc}
0&\varepsilon\bfone^*\\
\bfone&\bfC_\varepsilon
\end{array}\right],
\end{equation}
where $\bfone$ is an $(M-1)\times 1$ all-ones vector.
When $\bfC$ is $\varepsilon$-normalized, we say $\bfC_\varepsilon$ is its $\varepsilon$-\textit{core} which is real and/or $\varepsilon$-symmetric if and only if $\bfC$ is as well.
(We also use the terms \textit{core} and \textit{skew-core} to refer to $\bfC_\varepsilon$ when $\varepsilon$ is $1$ or $-1$, respectively.)
If $\bfC$ is any complex conference matrix, then $\bfD_1\bfC\bfD_{2,\varepsilon}$ is an $\varepsilon$-normalized complex conference matrix where $\bfD_1$ and $\bfD_{2,\varepsilon}$ are $M\times M$ diagonal matrices with $\bfD_1(1,1)=\bfD_{2,\varepsilon}(1,1)=1$ and $\bfD_1(m,m)=\overline{\bfC(m,1)}$, $\bfD_{2,\varepsilon}(m,m)=\varepsilon\overline{\bfC(1,m)}$ for all $m>1$.
If $\bfC$ is real and/or $\varepsilon$-symmetric then $\bfD_1\bfC\bfD_{2,\varepsilon}$ is as well; for the latter, note that
$\bfD_1=\bfD_{2,\varepsilon}$ and so
$(\bfD_1\bfC\bfD_{2,\varepsilon})^\rmT
=\bfD_{2,\varepsilon}^\rmT\bfC^\rmT\bfD_1^\rmT
=\bfD_{2,\varepsilon}\varepsilon\bfC\bfD_1
=\varepsilon\bfD_1\bfC\bfD_{2,\varepsilon}$.
For any $M\times M$ complex matrix $\bfC$ of the form~\eqref{eq.def of core} for some $\varepsilon\in\set{1,-1}$ we have
\begin{equation*}
\bfC^*\bfC
=\left[\begin{array}{cc}
0&\bfone^*\\
\varepsilon\bfone&\bfC_\varepsilon^*
\end{array}\right]\left[\begin{array}{cc}
0&\varepsilon\bfone^*\\
\bfone&\bfC_\varepsilon^{}
\end{array}\right]
=\left[\begin{array}{cc}
M-1&\bfone^*\bfC_\varepsilon\\
\bfC_\varepsilon^*\bfone&\bfone\bfone^*+\bfC_\varepsilon^*\bfC_\varepsilon^{}
\end{array}\right].
\end{equation*}
As such, an $(M-1)\times(M-1)$ complex matrix $\bfC_\varepsilon$ whose diagonal entries are zero and whose off-diagonal entries are unimodular is the $\varepsilon$-core of an $\varepsilon$-normalized complex conference matrix
if and only if both $\bfC_\varepsilon^*\bfone=\bfzero$
and $\bfC_\varepsilon^*\bfC_\varepsilon^{}=(M-1)\bfI-\bfJ$ where $\bfJ=\bfone\bfone^*$ is an $(M-1)\times(M-1)$ all-ones matrix.
Here, having $\bfC_\varepsilon^*\bfC_\varepsilon^{}=(M-1)\bfI-\bfJ$
implies
$\bfone\in\ker((M-1)\bfI-\bfJ)
=\ker(\bfC_\varepsilon^*\bfC_\varepsilon^{})
=\ker(\bfC_\varepsilon^{})$.
(In fact, these two conditions equate to the columns of $\bfC_\varepsilon$ summing to zero and forming a regular simplex in the orthogonal complement of the all-ones vector.)
As such, when $\bfC_\varepsilon$ is $\varepsilon$-symmetric and $\bfC_\varepsilon^*\bfC_\varepsilon^{}=(M-1)\bfI-\bfJ$,
we automatically have that
$\bfC_\varepsilon^*\bfone
=\overline{\bfC_\varepsilon^\rmT}\bfone
=\varepsilon\overline{\bfC_\varepsilon\bfone}
=\bfzero$.
Altogether, for $\varepsilon\in\set{1,-1}$, we see that an $(M-1)\times(M-1)$ complex matrix $\bfC_\varepsilon$ is the $\varepsilon$-core of an $\varepsilon$-normalized complex $\varepsilon$-symmetric conference matrix $\bfC$ of size $M$ if and only if
\begin{equation}
\label{eq.core of symmetric conference}
\bfC_{\varepsilon}^*\bfC_{\varepsilon}^{}=(M-1)\bfI-\bfJ,
\qquad
\bfC_{\varepsilon}^\rmT=\varepsilon\bfC_{\varepsilon}^{},
\qquad
\abs{\bfC_{\varepsilon}(m_1,m_2)}
=\left\{\begin{array}{cl}
1,&\ m_1=m_2,\\
0,&\ m_1\neq m_2,
\end{array}\right.
\end{equation}
and moreover, in this case we necessarily have
\begin{equation}
\label{eq.core of symmetric conference sums to zero}
\bfC_\varepsilon^{}\bfone=\bfzero=\bfC_\varepsilon^\rmT\bfone.
\end{equation}
In particular, for any positive integer $M$ and choice of $\varepsilon\in\set{1,-1}$,
a complex $\varepsilon$-symmetric conference matrix of size $M$ exists if and only if such a matrix $\bfC_\varepsilon$ exists.

Real conference matrices are a subject of classical interest,
see~\cite{BlokhuisBE18,FickusM16} for recent overviews.
In brief, it is known that if a real symmetric conference matrix of size $M$ exists then $M\equiv2\bmod4$ and $M-1$ is a sum of two squares, and that if a real skew-symmetric conference matrix of size $M$ exists then either $M=2$ or $4\mid M$.
Notably, if any conference matrix $M$ exists then either a symmetric or skew-symmetric conference matrix of size $M$ exists, depending on whether $M\equiv 2$ or $M\equiv 0\bmod 4$~\cite{DelsarteGS71}.
The existence of a real symmetric conference matrix of size $M$ equates to that of a real $\ETF(\tfrac{M}{2},M)$;
such conference matrices are known to exist, for example, when $M=Q+1$ where $Q\equiv 1\bmod 4$ is a prime power, and when $M=5(9^J)+1$ where $J$ is an odd positive integer.
Meanwhile, the existence of a real skew-symmetric conference matrix of size $M$ equates to the existence of a complex $\ETF(\tfrac{M}{2},M)$ whose Gram matrix's off-diagonal entries are purely imaginary, and when $M>2$ moreover implies the existence of a complex $\ETF(\tfrac{M}{2}-1,M-1)$~\cite{Strohmer08};
such conference matrices are known to exist, for example, when $M=Q+1$ where $Q\equiv 3\bmod 4$.

Less is understood about complex conference matrices in general.
That said, a recent series of papers~\cite{EtTaoui18,BlokhuisBE18} gives that any real symmetric conference matrix of size $M+1$ yields a complex symmetric conference matrix of size $M$, which in turn yields an $\EITFF(M,M,2)$.
In particular, a real $\EITFF(Q,Q,2)$ exists for any prime power $Q\equiv1\bmod4$~\cite{EtTaoui18}.
\textit{Dihedral} complex symmetric conference matrices are also known to exist for any
$M\in\set{2,5,6,8,9,10,14,17,18}$,
but of these, only the $M=8$ case is particularly notable since (real) symmetric conference matrices of these sizes are known to exist whenever $M\in\set{2,6,10,14,18}$,
and complex symmetric conference matrices of these sizes arises from (real) symmetric conference matrices of size $M+1$ when $M\in\set{5,9,17}$.
We also note that~\cite{FickusS20} provides a circulant complex conference matrix of size $M$ whenever $M-1$ is a prime power, but these are not necessarily symmetric or skew-symmetric.

In the following result, we generalize Theorems~\ref{thm.EITFF(Q-1,Q,2)} and~\ref{thm.EITFF(Q,Q,2)} to construct an $\EITFF(N-1,N,2)$ from any complex skew-symmetric conference matrix of size $N+1$, and also to construct an $\EITFF(N,N,2)$ from any complex symmetric conference matrix of size $N+1$;
notably, applying the construction of~\cite{EtTaoui18,BlokhuisBE18} to the latter would produce an EITFF with distinct parameters, namely an $\EITFF(N+1,N+1,2)$.

\begin{theorem}
\label{thm.combinatorial}
Let $\bfZ$ be an $N\times N$ complex matrix whose diagonal entries are zero,
and whose off-diagonal entries are unimodular.
Let $\bfS$ be a $2N\times2N$ matrix with entries partitioned into $2\times 2$ blocks such that the diagonal blocks are zero matrices.
\begin{enumerate}
\renewcommand{\labelenumi}{(\alph{enumi})}
\item
If the off-diagonal blocks of $\bfS$ are given in terms of $\bfZ$ by \eqref{eq.sig of EITFF(N-1,N,2)},
then $\bfS$ is the signature matrix of an $\EITFF(N-1,N,2)$ if and only if $\bfZ$ is the skew-core of a skew-normalized complex skew-symmetric conference matrix.
\item
If the off-diagonal blocks of $\bfS$ are given in terms of $\bfZ$ by \eqref{eq.sig of EITFF(N,N,2)},
then $\bfS$ is the signature matrix of an $\EITFF(N,N,2)$ if and only if $\bfZ$ is the core of a normalized complex symmetric conference matrix.
\end{enumerate}
In both (a) and (b), the resulting EITFF is real if the requisite complex conference matrix is real.
As a consequence, no $5\times 5$ complex skew-symmetric conference matrix or $4\times 4$ complex symmetric conference matrix exists.
\end{theorem}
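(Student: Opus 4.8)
The plan is to reduce the entire statement to the signature-matrix characterization \eqref{eq.signature matrix conditions}: a block matrix with zero diagonal blocks and unitary off-diagonal blocks is the signature matrix of an $\EITFF(D,N,R)$ exactly when it satisfies the self-adjointness relation $\bfS_{n_1,n_2}=\bfS_{n_2,n_1}^*$ and the quadratic (tightness) relation $\sum_{n_3}\bfS_{n_1,n_3}\bfS_{n_3,n_2}=(NR-2D)[\tfrac{N-1}{D(NR-D)}]^{1/2}\bfS_{n_1,n_2}$ for all $n_1\neq n_2$. For part (a) the parameters are $(D,N,R)=(N-1,N,2)$, so the scalar evaluates to $\tfrac{2}{\sqrt{N+1}}$; for part (b) they are $(N,N,2)$, so the scalar is $0$. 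I would translate each of the two signature-matrix conditions into a condition on $\bfZ$ and then match the result against the conference-core characterization \eqref{eq.core of symmetric conference}, \eqref{eq.core of symmetric conference sums to zero}.

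First I would dispatch the structural conditions. The diagonal blocks of $\bfS$ vanish by hypothesis, and a direct $2\times2$ multiplication shows that each off-diagonal block in \eqref{eq.sig of EITFF(N-1,N,2)} (resp.\ \eqref{eq.sig of EITFF(N,N,2)}) is unitary precisely because $\abs{\bfZ(n_1,n_2)}=1$. Comparing $\bfS_{n_1,n_2}$ with $\bfS_{n_2,n_1}^*$ entry by entry, the self-adjointness relation holds if and only if $\bfZ(n_1,n_2)=-\bfZ(n_2,n_1)$ in case (a) and $\bfZ(n_1,n_2)=\bfZ(n_2,n_1)$ in case (b); that is, self-adjointness of $\bfS$ is equivalent to $\bfZ^\rmT=-\bfZ$ for (a) and $\bfZ^\rmT=\bfZ$ for (b), which is the skew-symmetry/symmetry demanded of the core.

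The heart of the argument is the quadratic relation. I would expand the $2\times2$ product $\bfS_{n_1,n_3}\bfS_{n_3,n_2}$ for $n_3\notin\{n_1,n_2\}$ and sum over $n_3$. The crucial bookkeeping is that the terms $n_3=n_1$ and $n_3=n_2$ drop out because the diagonal blocks are zero, while the vanishing diagonal of $\bfZ$ then lets me re-extend the surviving sums over all $n_3$: the diagonal block entries collect into $(\overline{\bfZ}\bfZ)(n_1,n_2)$, and the off-diagonal block entries collect into the row and column sums $(\bfZ\bfone)(n_1)$ and $(\bfZ^\rmT\bfone)(n_2)$. Using the symmetry established above to rewrite $\overline{\bfZ}\bfZ=\pm\bfZ^*\bfZ$ and $\bfZ^\rmT\bfone=\pm\bfZ\bfone$, I would match the diagonal entries of the block sum against the prescribed multiple of $\bfS_{n_1,n_2}$; in both cases this forces $(\bfZ^*\bfZ)(n_1,n_2)=-1$ for all $n_1\neq n_2$, and since the diagonal of $\bfZ^*\bfZ$ equals $N-1$ automatically (the off-diagonal entries of $\bfZ$ are unimodular and the diagonal is zero), this is exactly $\bfZ^*\bfZ=N\bfI-\bfJ$. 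Matching the off-diagonal entries forces $\bfZ\bfone$ to be a scalar multiple of $\bfone$, which is consistent with---and subsumed by---the relation $\bfZ\bfone=\bfzero$ that \eqref{eq.core of symmetric conference sums to zero} guarantees once $\bfZ^*\bfZ=N\bfI-\bfJ$ holds for a (skew-)symmetric $\bfZ$. Thus, under the correct symmetry, the quadratic relation is equivalent to $\bfZ^*\bfZ=N\bfI-\bfJ$.

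Assembling the two equivalences shows that $\bfS$ is the signature matrix of the stated EITFF if and only if $\bfZ$ meets all three requirements of \eqref{eq.core of symmetric conference} with $M=N+1$ and $\varepsilon=-1$ in (a), $\varepsilon=+1$ in (b)---equivalently, $\bfZ$ is the skew-core (resp.\ core) of a skew-normalized (resp.\ normalized) complex skew-symmetric (resp.\ symmetric) conference matrix of size $N+1$. The realness assertion is immediate: a real $\bfZ$ produces a real $\bfS$ via \eqref{eq.sig of EITFF(N-1,N,2)} or \eqref{eq.sig of EITFF(N,N,2)}, and a real signature matrix yields a real EITFF. For the consequence I would argue by contradiction through the \emph{if} direction: a $5\times5$ complex skew-symmetric conference matrix has a $4\times4$ skew-core, which by (a) would build an $\EITFF(3,4,2)$, and a $4\times4$ complex symmetric conference matrix has a $3\times3$ core, which by (b) would build an $\EITFF(3,3,2)$; but each has $R=2<D=3$ with $R>\tfrac{D}{2}$, so neither can exist by the elementary constraint that an $\EITFF(D,N,R)$ with $R<D$ forces $R\le\tfrac{D}{2}$, whence neither conference matrix exists. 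I expect the block computation of the third paragraph to be the main obstacle---not conceptually, but in the index bookkeeping: one must correctly discard the $n_3\in\{n_1,n_2\}$ terms, use the zero diagonal of $\bfZ$ to re-complete the sums, and apply the $\pm$ symmetry in exactly the right places so that the $(1,1)$ and $(2,2)$ entries assemble into $\bfZ^*\bfZ=N\bfI-\bfJ$ while the $(1,2)$ and $(2,1)$ entries assemble into $\bfZ\bfone=\bfzero$.
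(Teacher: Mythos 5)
Your proposal is correct and follows essentially the same route as the paper's own proof: reduction to the signature-matrix conditions \eqref{eq.signature matrix conditions} with scalars $\tfrac{2}{\sqrt{N+1}}$ and $0$, identification of block self-adjointness with (skew-)symmetry of $\bfZ$, extraction of $\bfZ^*\bfZ=N\bfI-\bfJ$ from the diagonal entries of the summed block products, absorption of the row/column-sum condition via \eqref{eq.core of symmetric conference sums to zero}, and the nonexistence consequences via the $R\leq\tfrac{D}{2}$ constraint applied to $\EITFF(3,4,2)$ and $\EITFF(3,3,2)$. No gaps worth noting.
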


For context, no $3\times 3$ complex conference matrix exists (symmetric, skew-symmetric or otherwise) since no $3\times 3$ matrix with $0$ diagonal entries and unimodular off-diagonal entries has orthogonal columns.
Meanwhile, Proposition~3.1 of~\cite{EtTaouiM21} gives that any $4\times 4$ complex conference matrix is equivalent---via left and/or right multiplication by phased permutation matrices---to a (real) skew-symmetric conference matrix.
Furthermore, Theorem~3.4 of~\cite{EtTaouiM21} gives that every $5\times 5$ complex conference matrix is equivalent in this same sense to a certain complex symmetric conference matrix.
The final statement of the theorem above seems related to these results from~\cite{EtTaouiM21}, but is also not immediately implied by them.
In fact, this suggests a problem, which we leave for future research:
can a complex symmetric conference matrix of size $N>2$ be equivalent to a skew-symmetric one?

\begin{proof}[Proof of Theorem~\ref{thm.combinatorial}]
Our proof of (b) is slightly less technical than that of (a),
and so we discuss it first.
For any $n_1\neq n_2$,
the fact that $\bfZ(n_1,n_2)$ is unimodular implies that the matrix $\bfS_{n_1,n_2}$ defined by~\eqref{eq.sig of EITFF(N,N,2)} is unitary.
As such, \eqref{eq.signature matrix conditions} gives that $\bfS$ is the signature matrix of an $\EITFF(N,N,2)$ if and only if
\begin{equation*}
\bfS_{n_1,n_2}=\bfS_{n_2,n_1}^*,
\qquad
\sum_{n_3=1}^N\bfS_{n_1,n_3}\bfS_{n_3,n_2}=\bfzero,
\qquad\forall\,n_1\neq n_2.
\end{equation*}
Here, \eqref{eq.sig of EITFF(N,N,2)} moreover implies that $\bfS_{n_1,n_2}=\bfS_{n_2,n_1}^*$ for all $n_1\neq n_2$ if and only if $\bfZ(n_1,n_2)=\bfZ(n_2,n_1)$ for all $n_1\neq n_2$,
that is, if and only if $\bfZ$ is symmetric.
For any pairwise distinct $n_1,n_2,n_3\in[N]$,
\eqref{eq.sig of EITFF(N,N,2)} also gives that $\bfS_{n_1,n_3}\bfS_{n_3,n_2}$ is
\begin{multline*}
\tfrac1{N-1}\left[\begin{array}{cc}
1&\sqrt{N-2}\,\overline{\bfZ(n_1,n_3)}\\
\sqrt{N-2}\,\bfZ(n_1,n_3)&-1\end{array}\right]
\left[\begin{array}{cc}
1&\sqrt{N-2}\,\overline{\bfZ(n_3,n_2)}\\
\sqrt{N-2}\,\bfZ(n_3,n_2)&-1\end{array}\right]\\
=\tfrac1{N-1}\left[\begin{array}{cc}
1+(N-2)\overline{\bfZ(n_1,n_3)}\bfZ(n_3,n_2)&
\sqrt{N-2}\,\overline{[\bfZ(n_3,n_2)-\bfZ(n_1,n_3)]}\smallskip\\
\sqrt{N-2}\,[\bfZ(n_1,n_3)-\bfZ(n_3,n_2)]&
1+(N-2)\bfZ(n_1,n_3)\overline{\bfZ(n_3,n_2)}
\end{array}\right].
\end{multline*}
Summing this over all $n_3\in[N]$ that are distinct from $n_1$ and $n_2$,
and recalling that $\bfS_{n,n}=\bfzero$ and $\bfZ(n,n)=0$ for all $n\in[N]$,
we see that $\bfS$ is the signature matrix of an $\EITFF(N,N,2)$ if and only if $\bfZ$ is symmetric and the following two equations are satisfied for all distinct $n_1,n_2\in[N]$:
\begin{align}
\label{eq.pf of combinatorial 1}
0&=\sum_{\substack{n_3\in[N],\\n_3\notin\set{n_1,n_2}}}
[1+(N-2)\overline{\bfZ(n_1,n_3)}\bfZ(n_3,n_2)]
=(N-2)[1+(\bfZ^*\bfZ)(n_1,n_2)],\\
\label{eq.pf of combinatorial 2}
0&=\sum_{\substack{n_3\in[N],\\n_3\notin\set{n_1,n_2}}}
[\bfZ(n_1,n_3)-\bfZ(n_3,n_2)]
=\sum_{n_3=1}^N\bfZ(n_1,n_3)-\sum_{n_3=1}^N\bfZ(n_3,n_2).
\end{align}
Here, since every off-diagonal entry of $\bfZ$ is unimodular,
$(\bfZ^*\bfZ)(n,n)=N-1$ for all $n$,
and so having~\eqref{eq.pf of combinatorial 1} for all $n_1\neq n_2$ equates to having $\bfZ^*\bfZ=N\bfI-\bfJ$.
In particular, if $\bfS$ is the signature matrix of an $\EITFF(N,N,2)$ then $\bfC_\varepsilon=\bfZ$ satisfies~\eqref{eq.core of symmetric conference} when $\varepsilon=1$ and $M=N+1$, that is, when $\bfZ$ is the core of a normalized complex symmetric conference matrix.
Conversely, if $\bfC_\varepsilon=\bfZ$ satisfies~\eqref{eq.core of symmetric conference} when $\varepsilon=1$ and $M=N+1$,
then~\eqref{eq.core of symmetric conference sums to zero} also holds,
implying that $\bfZ$ is symmetric and satisfies both~\eqref{eq.pf of combinatorial 1} and~\eqref{eq.pf of combinatorial 2} for all $n_1,n_2$,
namely that $\bfS$ is the signature matrix of an $\EITFF(N,N,2)$.

For (a),
we instead use~\eqref{eq.sig of EITFF(N-1,N,2)} to define $\bfS_{n_1,n_2}$ for any $n_1\neq n_2$,
which is a unitary matrix since $\bfZ(n_1,n_2)$ is unimodular.
Since \smash{$(NR-2D)\bigbracket{\tfrac{N-1}{D(NR-D)}}^{\frac12}
=\frac2{\sqrt{N+1}}$}
when $(D,N,R)=(N-1,N,2)$,
we thus have by~\eqref{eq.signature matrix conditions} that $\bfS$ is the signature matrix of an $\EITFF(N-1,N,2)$ if and only if
\begin{equation*}
\bfS_{n_1,n_2}=\bfS_{n_2,n_1}^*,
\qquad
\sum_{n_3=1}^N\bfS_{n_1,n_3}\bfS_{n_3,n_2}
=\tfrac2{\sqrt{N+1}}\bfS_{n_1,n_2},
\qquad\forall\,n_1\neq n_2.
\end{equation*}
By~\eqref{eq.sig of EITFF(N-1,N,2)},
$\bfS_{n_1,n_2}=\bfS_{n_2,n_1}^*$ if and only if $\bfZ(n_1,n_2)=-\bfZ(n_2,n_1)$.
So, $\bfS_{n_1,n_2}=\bfS_{n_2,n_1}^*$ for all $n_1\neq n_2$ if and only if $\bfZ^\rmT=-\bfZ$.
Next, note that for any pairwise distinct $n_1,n_3,n_3\in[N]$,
\eqref{eq.sig of EITFF(N-1,N,2)} implies
\begin{align*}
\bfS_{n_1,n_3}\bfS_{n_3,n_2}
&=\tfrac1{N+1}\left[\begin{array}{cc}
-1&-\sqrt{N}\,\overline{\bfZ(n_1,n_3)}\\
\sqrt{N}\,\bfZ(n_1,n_3)&-1
\end{array}\right]\left[\begin{array}{cc}
-1&-\sqrt{N}\,\overline{\bfZ(n_3,n_2)}\\
\sqrt{N}\,\bfZ(n_3,n_2)&-1
\end{array}\right]\\
&=\tfrac1{N+1}\left[\begin{array}{cc}
1-N\overline{\bfZ(n_1,n_3)}\bfZ(n_3,n_2)&
\sqrt{N}\,\overline{[\bfZ(n_3,n_2)+\bfZ(n_1,n_3)]}\\
-\sqrt{N}\,[\bfZ(n_1,n_3)+\bfZ(n_3,n_2)]&
1-N\bfZ(n_1,n_3)\overline{\bfZ(n_3,n_2)}
\end{array}\right].
\end{align*}
Summing this over all $n_3\in[N]$ that are distinct from $n_1$ and $n_2$,
we see that $\bfS$ is thus the signature matrix of an $\EITFF(N-1,N,2)$ if and only if $\bfZ^\rmT=-\bfZ$ and, for all $n_1\neq n_2$, we have that
\begin{align*}
-\tfrac{2}{N+1}
=\sum_{\substack{n_3\in[N],\\n_3\notin\set{n_1,n_2}}}
\tfrac1{N+1}[1-N\overline{\bfZ(n_1,n_3)}\bfZ(n_3,n_2)]
&=\tfrac{N-2}{N+1}+\tfrac{N}{N+1}\sum_{n_3=1}^{N}\overline{\bfZ(n_3,n_1)}\bfZ(n_3,n_2)\\
&=\tfrac{N-2}{N+1}+\tfrac{N}{N+1}(\bfZ^*\bfZ)(n_1,n_2),
\end{align*}
i.e., $(\bfZ^*\bfZ)(n_1,n_2)=-1$,
and also that
\begin{align*}
\tfrac{2\sqrt{N}}{N+1}\bfZ(n_1,n_2)
&=\sum_{\substack{n_3\in[N],\\n_3\notin\set{n_1,n_2}}}
\bigparen{-\tfrac{\sqrt{N}}{N+1}}[\bfZ(n_1,n_3)+\bfZ(n_3,n_2)]\\
&=\bigparen{-\tfrac{\sqrt{N}}{N+1}}
\biggparen{-2\bfZ(n_1,n_2)+\sum_{n_3=1}^N\bfZ(n_1,n_3)+\sum_{n_3=1}^N\bfZ(n_3,n_2)},
\end{align*}
namely that $\sum_{n_3=1}^N\bfZ(n_1,n_3)+\sum_{n_3=1}^N\bfZ(n_3,n_2)=0$.
Paralleling the argument we used at this point of our proof of (b) then gives that $\bfS$ is the signature matrix of an $\EITFF(N-1,N,2)$ if and only if $\bfC_\varepsilon=\bfZ$ satisfies~\eqref{eq.core of symmetric conference} when $\varepsilon=-1$ and $M=N+1$, that is, when $\bfZ$ is the skew-core of a skew-normalized complex skew-symmetric conference matrix.

In either (a) or (b), note that by~\eqref{eq.sig of EITFF(N-1,N,2)} and~\eqref{eq.sig of EITFF(N,N,2)}, respectively, we have that $\bfS$ is real if $\bfZ$ is real,
and that in this case, the (skew-)normalized complex (skew-)symmetric conference matrix that $\bfZ$ is the (skew-)core of is also real.

For the final conclusions, note that if a $5\times 5$ complex skew-symmetric conference matrix exists then taking $\bfZ$ in (a) to be its $4\times 4$ core yields an $\EITFF(3,4,2)$, contradicting the fact that an $\EITFF(D,N,R)$ with $R<D$ can only exist if $R\leq\frac D2$.
Similarly, if a $4\times 4$ complex symmetric conference matrix exists then taking $\bfZ$ in (b) to be its $3\times 3$ core yields an $\EITFF(3,3,2)$, a contradiction.
\end{proof}

Applying Theorem~\ref{thm.combinatorial} to the EITFFs of Theorems~\ref{thm.EITFF(Q-1,Q,2)} and~\ref{thm.EITFF(Q,Q,2)} implies that a complex skew-symmetric conference matrix of size $Q+1$ exists for any odd prime power $Q$,
and that a complex symmetric conference matrix of size $Q+1$ exists for any prime power $Q\geq 4$, respectively.
These families are not mentioned in~\cite{EtTaoui18,BlokhuisBE18,EtTaouiM21} and so are apparently new.
As such, we also provide a shorter, direct proof of their existence:

\begin{theorem}
\label{thm.conference}
Let $Q$ be a prime power.
If $Q\geq 4$ then a complex symmetric conference matrix of size $Q+1$ exists.
If $Q$ is odd then a complex skew-symmetric conference matrix of size $Q+1$ exists.
\end{theorem}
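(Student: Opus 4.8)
The plan is to bypass the EITFF machinery of Theorems~\ref{thm.EITFF(Q-1,Q,2)} and~\ref{thm.EITFF(Q,Q,2)} and instead build the two conference matrices directly from Gauss sums. By the discussion culminating in~\eqref{eq.core of symmetric conference}, a complex $\varepsilon$-symmetric conference matrix of size $Q+1$ exists as soon as there is a $Q\times Q$ matrix $\bfZ$ whose diagonal entries are zero, whose off-diagonal entries are unimodular, and which satisfies $\bfZ^\rmT=\varepsilon\bfZ$ together with $\bfZ^*\bfZ=Q\bfI-\bfJ$; here $\varepsilon=1$ is the symmetric case (to be handled for $Q\geq4$) and $\varepsilon=-1$ the skew-symmetric case (for odd $Q$). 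First I would index the rows and columns of $\bfZ$ by $\bbF_Q$ and force it to be group-circulant by setting $\bfZ(x,y):=z(x-y)$ for a single function $z:\bbF_Q\to\bbC$; this reduces all four requirements to statements about $z$ and its autocorrelation.

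Next I would fix a nontrivial additive character $\gamma$ of $\bbF_Q$, write $\gamma_g$ for the additive character $x\mapsto\gamma(gx)$, and choose a multiplicative character $\chi$ of the correct parity, namely with $\chi(-1)=\varepsilon$. An even nontrivial $\chi$ (i.e.\ $\chi(-1)=1$) exists precisely when $Q\geq4$, while an odd $\chi$ (i.e.\ $\chi(-1)=-1$, for instance any generator of $\widehat{\bbF}_Q^\times$) exists precisely when $Q$ is odd, so these choices are available exactly under the stated hypotheses. I would then define the normalized Gauss sum $z(g):=Q^{-\frac12}\ip{\gamma_g}{\chi}_\times$. The first three requirements on $\bfZ$ follow quickly: $z(0)=Q^{-\frac12}\ip{1}{\chi}_\times=0$ by~\eqref{eq.Gauss sum easy cases} (giving the zero diagonal), $\abs{z(g)}=1$ for $g\neq0$ by~\eqref{eq.Gauss sum magnitude} (giving unimodular off-diagonal entries), and the substitution $x\mapsto-x$ gives $z(-g)=\chi(-1)z(g)=\varepsilon z(g)$, so that $\bfZ(y,x)=\varepsilon\bfZ(x,y)$ and hence $\bfZ^\rmT=\varepsilon\bfZ$.

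The substance of the argument is the identity $\bfZ^*\bfZ=Q\bfI-\bfJ$, which I would establish by diagonalizing the circulant $\bfZ$ in the additive-character basis. Regarding each additive character $\gamma'$ as a column vector indexed by $\bbF_Q$, a short computation gives $\bfZ\gamma'=\lambda(\gamma')\gamma'$ with $\lambda(\gamma')=\sum_{g\in\bbF_Q}z(g)\overline{\gamma'(g)}$, so $\bfZ^*\bfZ$ is simultaneously diagonalized with eigenvalues $\abs{\lambda(\gamma')}^2$. Writing $\gamma'=\gamma_a$, substituting the definition of $z$, and interchanging the order of summation turns $\lambda(\gamma_a)$ into an inner sum $\sum_g\overline{\gamma(g(x+a))}$, which by orthogonality of additive characters collapses to $0$ unless $x=-a$; since the multiplicative sum runs over $\bbF_Q^\times$, this yields $\lambda(1)=0$ and $\abs{\lambda(\gamma')}^2=Q$ for every $\gamma'\neq1$. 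Because $\bfJ=\bfone\bfone^*$ acts as $Q$ on the trivial character $\bfone$ and as $0$ on its orthogonal complement, the operator with eigenvalue $0$ on $\bfone$ and $Q$ elsewhere is exactly $Q\bfI-\bfJ$, so this completes the verification, and~\eqref{eq.core of symmetric conference} then promotes $\bfZ$ to the desired $\varepsilon$-symmetric conference matrix of size $Q+1$. I expect this eigenvalue bookkeeping---correctly matching the single vanishing eigenvalue on $\bfone$ to the rank-one defect $\bfJ$ rather than to a full multiple of $\bfI$---to be the one place demanding care; everything else reduces to the tabulated Gauss-sum facts~\eqref{eq.Gauss sum easy cases} and~\eqref{eq.Gauss sum magnitude} together with the parity substitution.
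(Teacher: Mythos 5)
Your proof is correct and is essentially the paper's own argument: both construct a Paley-type $\bbF_Q$-circulant matrix with zero diagonal and unimodular off-diagonal entries whose symmetry type is $\varepsilon=\chi(-1)$, verify $\bfZ^*\bfZ=Q\bfI-\bfJ$ by diagonalizing in the additive-character basis via~\eqref{eq.Gauss sum easy cases} and~\eqref{eq.Gauss sum magnitude}, and then invoke the core characterization~\eqref{eq.core of symmetric conference}. Indeed, your matrix is the paper's in light disguise: since $\ip{\gamma_g}{\chi}_\times=\overline{\chi(g)}\,\ip{\gamma_1}{\chi}_\times$ for $g\neq 0$, your $\bfZ$ equals the unimodular constant $Q^{-1/2}\ip{\gamma_1}{\chi}_\times$ times the entrywise conjugate of the paper's circulant $\bfC_0(x_1,x_2)=\chi_0(x_1-x_2)$, so the two constructions differ only by a global phase and the replacement of $\chi$ by $\overline{\chi}$ (which preserves parity).
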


\begin{proof}
We generalize the classical construction of a Paley conference matrix to the complex setting.
Let $\chi$ be any nontrivial multiplicative character of $\bbF_Q$,
and extend it to a function $\chi_0:\bbF_Q\rightarrow\bbC$ by letting $\chi_0(0):=0$.
Let $\bfC_0$ be the $\bbF_Q$-circulant matrix with $\chi_0$ as its $0$th column,
having $\bfC_0(x_1,x_2)=\chi_0(x_1-x_2)$ for all $x_1,x_2\in\bbF_Q$.
Note that $\bfC_0(x,x)=\chi_0(0)=0$ for all $x\in\bbF_Q$,
and moreover that $\abs{\bfC_0(x_1,x_2)}=\abs{\chi(x_1-x_2)}=1$ for any $x_1\neq x_2$.
Further note that $\bfC_0$ is either symmetric or skew-symmetric depending on whether $\chi$ has even or odd symmetry, that is, whether $\varepsilon:=\chi(-1)$ is either $1$ or $-1$;
for any prime power $Q\geq 4$, we can choose $\chi$ to be even,
and for any odd prime power $Q$ we can choose $\chi$ to be odd.
Being $\bbF_Q$-circulant, $\bfC_0$ is moreover diagonalized by the additive characters of $\bbF_Q$:
for any $\gamma\in\widehat{\bbF}_Q$, and $x_1\in\bbF_Q$,
letting $x_3=x_1-x_2$ below gives
\begin{equation*}
(\bfC_0\gamma)(x_1)
=\sum_{x_2\in\bbF_Q}\chi_0(x_1-x_2)\gamma(x_2)
=\sum_{x_3\in\bbF_Q^\times}\overline{\gamma(x_3)}\chi_0(x_3)\gamma(x_1)
=\bigbracket{\ip{\gamma}{\chi}_\times}\gamma(x_1),
\end{equation*}
for all $x\in\bbF_Q$.
Letting $\bfGamma$ be the corresponding $\bbF_Q\times\widehat{\bbF}_Q$ character table, defined by $\bfGamma(x,\gamma):=\gamma(x)$, we thus have that $\bfC_0=\frac1Q\bfGamma\bfD_0\bfGamma^*$ where $\bfD_0$ is the \smash{$\widehat{\bbF}_Q\times\widehat{\bbF}_Q$} diagonal matrix whose $\gamma$th diagonal entry is $\ip{\gamma}{\chi}_\times$.
As such,
\smash{$\bfC_0^*\bfC_0
=(\frac1Q\bfGamma\bfD_0\bfGamma^*)(\frac1Q\bfGamma\bfD_0^{}\bfGamma^*)
=\bfGamma\frac1Q\bfD_0^*\bfD_0^{}\bfGamma^*$}.
Here, by~\eqref{eq.Gauss sum easy cases} and~\eqref{eq.Gauss sum magnitude},
$\frac1Q\bfD_0^*\bfD_0^{}$ is a diagonal matrix whose $\gamma$th diagonal entry is
\begin{equation*}
\tfrac1Q(\bfD_0^*\bfD_0^{})(\gamma,\gamma)
=\tfrac1Q\abs{\ip{\gamma}{\chi}_\times}^2
=\left\{\begin{array}{cl}
0,&\ \gamma=1\\
1,&\ \gamma\neq 1
\end{array}\right\}
=(\bfI-\bfdelta_1^{}\bfdelta_1^*)(\gamma,\gamma).
\end{equation*}
Since $\bfGamma\bfdelta_1=\bfone$ is the all-ones vector,
we thus have that
\begin{equation*}
\bfC_0^*\bfC_0^{}
=\bfGamma\tfrac1Q\bfD_0^*\bfD_0^{}\bfGamma^*
=\bfGamma(\bfI-\bfdelta_1^{}\bfdelta_1^*)\bfGamma^*
=Q\bfI-(\bfGamma\bfdelta_1)(\bfGamma\bfdelta_1)^*
=Q\bfI-\bfone\bfone^*
=Q\bfI-\bfJ.
\end{equation*}
Altogether, we see that $\bfC_0$ satisfies~\eqref{eq.core of symmetric conference} for $\varepsilon=\chi(-1)$ and $M=Q+1$,
meaning it is the core of an $\varepsilon$-normalized complex $\varepsilon$-symmetric conference matrix of size $Q+1$.
\end{proof}

The distinction between the EITFFs of Theorem~\ref{thm.combinatorial}(b) and those of~\cite{EtTaoui18,BlokhuisBE18} is subtle.
To explain, the signature matrix $\bfS$ of an $\EITFF(N,N,2)$ can be obtained by applying either one of the following mappings $\eta,\widetilde{\eta}:\bbC\rightarrow\bbR^{2\times 2}$ to every entry of a suitably chosen $N\times N$ complex matrix $\bfZ$ whose diagonal entries are $0$ and whose off-diagonal entries are unimodular:
\begin{equation*}
\eta(z):=\left[\begin{array}{cr}
\real(z)&-\imag(z)\\
\imag(z)& \real(z)
\end{array}\right],
\quad
\widetilde{\eta}(z)
:=\left[\begin{array}{cr}
\real(z)& \imag(z)\\
\imag(z)&-\real(z)
\end{array}\right]
=\left[\begin{array}{cr}
\real(z)&-\imag(z)\\
\imag(z)& \real(z)
\end{array}\right]\left[\begin{array}{cr}
1&0\\
0&-1
\end{array}\right].
\end{equation*}
Here, $\eta$ is an injective ring homomorphism (a faithful representation of the field $\bbC$) that restricts to the classical canonical isomorphism from $\bbT:=\set{z\in\bbC: \abs{z}=1}$ onto $\operatorname{SO}(2)$.
As essentially noted by Hoggar~\cite{Hoggar77},
applying $\eta$ to each entry of the signature matrix of a complex $\EITFF(D,N,R)$ yields the signature matrix of a real $\EITFF(2D,N,2R)$.
In particular,
$\bfS_{n_1,n_2}:=\eta(\bfZ(n_1,n_2))$ defines the signature matrix of a real $\EITFF(N,N,2)$ if and only if $\bfZ$ is the signature matrix of an $\ETF(\frac N2,N)$,
namely an $N\times N$ complex Hermitian conference matrix;
note here that $N$ is necessarily even.
In contrast, $\widetilde{\eta}$ is not an isomorphism.
In fact, the image of its restriction to $\bbT$ is not even a group,
being the nontrivial coset of $\operatorname{SO}(2)$ in $\operatorname{O}(2)$.
That said, $\widetilde{\eta}$ is related to the isomorphism $\eta$ in a nice way,
and has other useful properties:
\begin{equation*}
\eta(1)=\bfI,
\qquad
[\widetilde{\eta}(z)]^\rmT
=[\widetilde{\eta}(z)]^*
=\widetilde{\eta}(z),
\qquad
\widetilde{\eta}(z_1)\widetilde{\eta}(z_2)
=\eta(z_1\overline{z_2}),
\qquad
\forall\,z,z_1,z_2\in\bbC.
\end{equation*}
As shown in~\cite{EtTaoui18,BlokhuisBE18},
these facts quickly imply that $\bfS_{n_1,n_2}:=\widetilde{\eta}(\bfZ(n_1,n_2))$ defines the signature matrix of a real $\EITFF(N,N,2)$ if and only if $\bfZ$ is an $N\times N$ complex symmetric conference matrix,
and such matrices exist for an infinite number of odd $N$.
Meanwhile, Theorem~\ref{thm.combinatorial}(b) yields an $\EITFF(N,N,2)$ if and only if $\bfZ$ the core of an $(N+1)\times(N+1)$ complex symmetric conference matrix.
That said, the construction of~\cite{EtTaoui18,BlokhuisBE18} does overlap with that of Theorem~\ref{thm.combinatorial}(b) in a special case:
Theorem~1 of~\cite{BlokhuisBE18} constructs a complex symmetric conference matrix $\bfZ$ of size $N$ from a normalized real symmetric conference matrix $\bfC$ of size $N+1$,
and a careful read of its proof reveals that applying $\widetilde{\eta}$ to $\bfZ$ yields the same signature matrix $\bfS$ that one obtains by applying Theorem~\ref{thm.combinatorial}(b) directly to the core of $\bfC$.
From this perspective, Theorems~1 and~3 of~\cite{BlokhuisBE18} and Theorem~\ref{thm.combinatorial}(b) yield two distinct generalizations of Et-Taoui's original breakthrough construction of an $\EITFF(Q,Q,2)$ from the Legendre symbol on $\bbF_Q$ when $Q\equiv1\bmod4$~\cite{EtTaoui18}.

\section*{Acknowledgments}
The views expressed in this article are those of the authors and do not reflect the official policy or position of the United States Air Force, Department of Defense, or the U.S.~Government.
This work was partially supported by NSF DMS 1830066, NSF DMS 1829955, AFOSR FA9550-18-1-0107, and the Air Force Office of Scientific Research Summer Faculty Fellowship Program.

\appendix

\section{Proof of Theorem~\ref{thm.EITFF(11,11,3)}}

\begin{proof}[Proof of Theorem~\ref{thm.EITFF(11,11,3)}]
Let $\chi$ be any generator of $\widehat{\bbF}_{11}^\times$,
implying $\chi(-1)=-1$ and that $\chi^5$ is the restriction of the Legendre symbol~\eqref{eq.Legendre symbol} to $\bbF_{11}^\times$,
satisfying $\chi^5(x)=(\frac{x}{11})\in\set{1,-1}$ for all $x\in\bbF_{11}^\times$.
Also let $c:=\cos(\theta)$, $s:=\sin(\theta)$ where $\theta\in\bbR$,
and let $\omega\in\bbC$ be unimodular.
Consider the eleven $3\times 1$ rank-$1$ matrices $\set{\bfPsi_x}_{x\in\bbF_{11}}$ defined by
\begin{equation*}
\bfPsi_0:=\left[\begin{array}{c}1\\0\\0\end{array}\right],
\quad
\bfPsi_x:=\tfrac{\sqrt{11}}{\sqrt{30}}\left[\begin{array}{c}
\frac{\sqrt{8}}{\sqrt{11}}\smallskip\\
\chi(x)[c(\tfrac{x}{11})+\rmi s\omega]\smallskip\\
\overline{\chi(x)}[c(\tfrac{x}{11})+\rmi s\overline{\omega}]
\end{array}\right],\ \forall\, x\in\bbF_{11}^\times.
\end{equation*}
For any $x\neq0$, the squared modulus of the second and third entries of this matrix are
\begin{align*}
\abs{\bfPsi_x(2,1)}^2
&=\tfrac{11}{30}\abs{c(\tfrac{x}{11})+\rmi s\omega}^2
=\tfrac{11}{30}[c^2+s^2+2cs\real(\rmi\omega)(\tfrac{x}{11})]
=\tfrac{11}{30}[1-2cs\imag(\omega)(\tfrac{x}{11})],\\
\abs{\bfPsi_x(3,1)}^2
&=\tfrac{11}{30}\abs{c(\tfrac{x}{11})+\rmi s\overline{\omega}}^2
=\tfrac{11}{30}[c^2+s^2+2cs\real(\rmi\overline{\omega})(\tfrac{x}{11})]
=\tfrac{11}{30}[1+2cs\imag(\omega)(\tfrac{x}{11})],
\end{align*}
respectively.
In particular, each $\bfPsi_x$ is an isometry (unit-norm column vector):
for any $x\neq0$,
\begin{equation*}
\abs{\bfPsi_x(1,1)}^2+\abs{\bfPsi_x(2,1)}^2+\abs{\bfPsi_x(3,1)}^2
=\tfrac{8}{30}
+\tfrac{11}{30}[1-2cs\imag(\omega)(\tfrac{x}{11})]
+\tfrac{11}{30}[1+2cs\imag(\omega)(\tfrac{x}{11})]
=1.
\end{equation*}
For any $x\neq0$, we moreover have that
\begin{equation*}
(\bfPsi_x^{}\bfPsi_x^*)(2,3)
=\bfPsi(2,1)\overline{\bfPsi(3,1)}
=\tfrac{11}{30}\chi(x)[c(\tfrac{x}{11})+\rmi s\omega]
\chi(x)[c(\tfrac{x}{11})-\rmi s\omega]
=\tfrac{11}{30}(c^2+s^2\omega^2)\chi^2(x).
\end{equation*}
As such, the $3\times 3$ rank-$1$ projections $\set{\bfP_x}_{x\in\bbF_{11}}$,
$\bfP_x:=\bfPsi_x^{}\bfPsi_x^*$ are
$\bfP_0=\left[\begin{smallmatrix}1&0&0\\0&0&0\\0&0&0\end{smallmatrix}\right]$ and,
for any $x\neq0$,
\begin{equation*}
\bfP_x
=\tfrac{11}{30}\left[\begin{array}{ccc}
\tfrac{8}{11}&
\tfrac{\sqrt{8}}{\sqrt{11}}\overline{\chi(x)}[c(\tfrac{x}{11})-\rmi s\overline{\omega}]&
\tfrac{\sqrt{8}}{\sqrt{11}}\chi(x)[c(\tfrac{x}{11})-\rmi s\omega]\smallskip\\
\tfrac{\sqrt{8}}{\sqrt{11}}\chi(x)[c(\tfrac{x}{11})+\rmi s\omega]&
1-2cs\imag(\omega)(\tfrac{x}{11})&
(c^2+s^2\omega^2)\chi^2(x)\smallskip\\
\tfrac{\sqrt{8}}{\sqrt{11}}\overline{\chi(x)}[c(\tfrac{x}{11})+\rmi s\overline{\omega}]&
(c^2+s^2\overline{\omega}^2)\overline{\chi^2}(x)&
1+2cs\imag(\omega)(\tfrac{x}{11})
\end{array}\right].
\end{equation*}
We next compute the matrices $\set{\bfM_\gamma}_{\gamma\in\widehat{\bbF}_{11}}$ of~\eqref{eq.DFT of projections}.
Here, we elect to identify $\bbF_{11}$ with $\widehat{\bbF}_{11}$: for any $y\in\bbF_{11}$,
define the additive character $\gamma_y$ by
\smash{$\gamma_y(x):=\exp(\frac{2\pi\rmi xy}{11})$} for each $x\in\bbF_{11}$.
We also adopt the following shorthand notation for Gauss sums:
\begin{equation}
\label{eq.pf of EITFF(11,11,3) 1}
z_{n,y}:=\ip{\gamma_y}{\chi^n}_\times
=\sum_{x\in\bbF_{11}^\times}\overline{\gamma_y(x)}\chi^n(x)
=\sum_{x\in\bbF_{11}^\times}\exp(-\tfrac{2\pi\rmi xy}{11})\chi^n(x),
\quad\forall\,n\in\bbZ,\, y\in\bbF_{11}.
\end{equation}
In particular, for any $y\in\bbF_{11}$,
the matrix $\bfM_y:=\bfM_{\gamma_y}$ of~\eqref{eq.DFT of projections} has $(1,1)$ entry
\begin{equation*}
\bfM_y(1,1)
=\bfP_0(1,1)+\sum_{x\in\bbF_{11}^\times}\overline{\gamma_y(x)}\bfP_x(1,1)
=1+\tfrac{8}{30}\ip{\gamma_y}{1}_\times
=\tfrac1{30}(30+8z_{0,y}).
\end{equation*}
We compute each remaining entry of $\bfM_y$ by using the facts that $(\frac{x}{11})=\chi^5(x)$ and $\overline{\chi(x)}=\chi^{-1}(x)$ for all $x\in\bbF_{11}^\times$.
For example, by~\eqref{eq.pf of EITFF(11,11,3) 1},
\begin{multline*}
\bfM_y(1,2)
=\sum_{x\in\bbF_{11}^\times}
\overline{\gamma_y(x)}\tfrac{11}{30}\tfrac{\sqrt{8}}{\sqrt{11}}\overline{\chi(x)}[c(\tfrac{x}{11})-\rmi s\overline{\omega}]\\
=\tfrac{11}{30}\tfrac{\sqrt{8}}{\sqrt{11}}\sum_{x\in\bbF_{11}^\times}\overline{\gamma_y(x)}
[c\chi^4(x)-\rmi s\overline{\omega}\chi^{-1}(x)]
=\tfrac{11}{30}\tfrac{\sqrt{8}}{\sqrt{11}}(cz_{4,y}-\rmi s\overline{\omega}z_{-1,y}).
\end{multline*}
Performing this type of analysis for each entry gives that
\begin{equation}
\label{eq.pf of EITFF(11,11,3) 2}
\bfM_y
=\tfrac{11}{30}\left[\begin{array}{ccc}
\tfrac1{11}(30+8z_{0,y})&
\tfrac{\sqrt{8}}{\sqrt{11}}(cz_{4,y}-\rmi s\overline{\omega}z_{-1,y})&
\tfrac{\sqrt{8}}{\sqrt{11}}(cz_{6,y}-\rmi s\omega z_{1,y})\smallskip\\
\tfrac{\sqrt{8}}{\sqrt{11}}(cz_{6,y}+\rmi s\omega z_{1,y})&
z_{0,y}-2cs\imag(\omega)z_{5,y}&
(c^2+s^2\omega^2)z_{2,y}\smallskip\\
\tfrac{\sqrt{8}}{\sqrt{11}}(cz_{4,y}+\rmi s\overline{\omega}z_{-1,y})&
(c^2+s^2\overline{\omega}^2)z_{-2,y}&
z_{0,y}+2cs\imag(\omega)z_{5,y}
\end{array}\right],
\end{equation}
for all $y\in\bbF_{11}$.
Here, combining~\eqref{eq.Gauss sum easy cases} and~\eqref{eq.pf of EITFF(11,11,3) 1} gives $z_{0,0}=10$ and $z_{n,0}=0$ whenever $10\nmid n$,
and so~\eqref{eq.pf of EITFF(11,11,3) 2} reduces to the following when $y=0$:
\begin{equation}
\label{eq.pf of EITFF(11,11,3) 3}
\bfM_0
=\sum_{x\in\bbF_{11}}\bfP_x
=\tfrac{11}{30}\left[\begin{array}{ccc}
\tfrac1{11}[30+8(10)]&0&0\\
0&10&0\\
0&0&10
\end{array}\right]
=\tfrac{11}{3}\bfI.
\end{equation}
Thus, $\set{\bfP_x}_{x\in\bbF_{11}}$ are the projections of a $\TFF(3,11,1)$,
and so yield a harmonic $\TFF(11,11,3)$ via Theorem~\ref{thm.small TFF yields harmonic}.
Moreover, by Theorem~\ref{thm.EITFF},
this harmonic $\TFF(11,11,3)$ is an EITFF if and only if $\bfM_y^*\bfM_y^{}=B\bfI$ for all $y\neq 0$ where \smash{$B=\frac{D(GR-D)}{R^2(G-1)}=\frac{11(22)}{9(10)}=(\tfrac{11}{3})^2\tfrac15$},
that is, if and only if \smash{$\frac{3\sqrt{5}}{11}\bfM_y$} is unitary for all $y\neq0$.
Here, for any $y\neq0$, \eqref{eq.Gauss sum easy cases} implies $z_{0,y}=-1$ and so in this case \eqref{eq.pf of EITFF(11,11,3) 2} becomes
\begin{equation}
\label{eq.pf of EITFF(11,11,3) 4}
\tfrac{3\sqrt{5}}{11}\bfM_y
=\tfrac{1}{2\sqrt{5}}\left[\begin{array}{ccc}
2&
\tfrac{\sqrt{8}}{\sqrt{11}}(cz_{4,y}-\rmi s\overline{\omega}z_{-1,y})&
\tfrac{\sqrt{8}}{\sqrt{11}}(cz_{6,y}-\rmi s\omega z_{1,y})\smallskip\\
\tfrac{\sqrt{8}}{\sqrt{11}}(cz_{6,y}+\rmi s\omega z_{1,y})&
-1-2cs\imag(\omega)z_{5,y}&
(c^2+s^2\omega^2)z_{2,y}\smallskip\\
\tfrac{\sqrt{8}}{\sqrt{11}}(cz_{4,y}+\rmi s\overline{\omega}z_{-1,y})&
(c^2+s^2\overline{\omega}^2)z_{-2,y}&
-1+2cs\imag(\omega)z_{5,y}
\end{array}\right].
\end{equation}
To simplify these matrices,
note that by~\eqref{eq.Gauss sum of conjugate} and~\eqref{eq.pf of EITFF(11,11,3) 1},
\begin{equation*}
z_{-n,y}
=\ip{\gamma_y}{\chi^{-n}}_\times
=\ip{\gamma_y}{\overline{\chi^n}}
=[\chi(-1)]^n\overline{\ip{\gamma_y}{\chi^n}}
=(-1)^n\overline{z_{n,y}},
\quad
\forall\,n\in\bbZ,\,y\in\bbF_{11}.
\end{equation*}
As such, for any $y\neq0$, $z_{5,y}$ is purely imaginary,
and moreover the entries of~\eqref{eq.pf of EITFF(11,11,3) 4} arise in conjugate pairs.
Specifically, its
$(2,2)$ and $(3,3)$ entries are conjugates of each other,
and the same is true for its $(1,2)$ and $(1,3)$ entries,
its $(2,1)$ and $(3,1)$ entries,
and its $(2,3)$ and $(3,2)$ entries.
Thus,
\begin{equation}
\label{eq.pf of EITFF(11,11,3) 5}
\tfrac{3\sqrt{5}}{11}\bfM_y
=\tfrac{1}{2\sqrt{5}}\left[\begin{array}{ccc}
2&
\tfrac{\sqrt{8}}{\sqrt{11}}\overline{(cz_{6,y}-\rmi s\omega z_{1,y})}&
\tfrac{\sqrt{8}}{\sqrt{11}}(cz_{6,y}-\rmi s\omega z_{1,y})\smallskip\\
\tfrac{\sqrt{8}}{\sqrt{11}}(cz_{6,y}+\rmi s\omega z_{1,y})&
-1-2cs\imag(\omega)z_{5,y}&
(c^2+s^2\omega^2)z_{2,y}\smallskip\\
\tfrac{\sqrt{8}}{\sqrt{11}}\overline{(cz_{6,y}+\rmi s\omega z_{1,y})}&
\overline{(c^2+s^2\omega^2)z_{2,y}}&
\overline{-1-2cs\imag(\omega)z_{5,y}}
\end{array}\right],
\end{equation}
for any $y\neq0$.
To proceed,
note that combining~\eqref{eq.Gauss sum magnitude} and~\eqref{eq.pf of EITFF(11,11,3) 1} gives $\abs{z_{n,y}}=\sqrt{11}$ whenever $y\neq0$ and $10\nmid n$.
In particular, $z_{5,y}\in\set{\sqrt{11}\rmi,-\sqrt{11}\rmi}$ for all $y\neq0$.
Moreover, the modulus of the $(2,2)$ entry of~\eqref{eq.pf of EITFF(11,11,3) 1} is independent of $y\neq0$, and the same is true for its $(2,3)$, $(3,2)$ and $(3,3)$ entries.
In order for \eqref{eq.pf of EITFF(11,11,3) 5} to be unitary for all $y\neq0$,
the modulus of its $(2,1)$ entry must thus also be independent of $y\neq0$.
This suggests we choose $\omega:=\frac{z_{6,1}}{z_{1,1}}$.
To elaborate,
note \smash{$\abs{\omega}=\frac{\abs{z_{6,1}}}{\abs{z_{1,1}}}=\frac{\sqrt{11}}{\sqrt{11}}=1$}
as required.
Moreover,
\eqref{eq.pf of EITFF(11,11,3) 1} and a simple reindexing give
\begin{equation}
\label{eq.pf of EITFF(11,11,3) 6}
z_{n,y}
=\ip{\gamma_y}{\chi^n}_\times
=\chi^n(y^{-1})\ip{\gamma_1}{\chi^n}
=\overline{\chi^n(y)}z_{n,1},
\quad\forall\,n\in\bbZ,\,y\in\bbF_{11}^\times.
\end{equation}
As such, for any $y\neq0$,
we have $\tfrac{z_{6,y}}{z_{1,y}}
=\tfrac{\chi^1(y)}{\chi^6(y)}\tfrac{z_{6,1}}{z_{1,1}}
=\omega\chi^5(y)
=\omega(\tfrac{y}{11})$
and so
\begin{equation*}
cz_{6,y}\pm \rmi s\omega z_{1,y}
=c\omega z_{1,y}(\tfrac{y}{11})\pm \rmi s\omega z_{1,y}
=\omega z_{1,y}[c(\tfrac{y}{11})\pm\rmi s].
\end{equation*}
As such, for this particular choice of $\omega$,
\eqref{eq.pf of EITFF(11,11,3) 5} can be reexpressed as
\begin{equation}
\label{eq.pf of EITFF(11,11,3) 7}
\tfrac{3\sqrt{5}}{11}\bfM_y
=\tfrac{1}{2\sqrt{5}}\left[\begin{array}{ccc}
2&
~\tfrac{\sqrt{8}}{\sqrt{11}}\overline{\omega z_{1,y}}[c(\tfrac{y}{11})+\rmi s]~&
\tfrac{\sqrt{8}}{\sqrt{11}}\omega z_{1,y}[c(\tfrac{y}{11})-\rmi s]\smallskip\\
\tfrac{\sqrt{8}}{\sqrt{11}}\omega z_{1,y}[c(\tfrac{y}{11})+\rmi s]&
-1-2cs\imag(\omega)z_{5,y}&
(c^2+s^2\omega^2)z_{2,y}\smallskip\\
\tfrac{\sqrt{8}}{\sqrt{11}}\overline{\omega z_{1,y}}[c(\tfrac{y}{11})-\rmi s]&
\overline{(c^2+s^2\omega^2)z_{2,y}}&
\overline{-1-2cs\imag(\omega)z_{5,y}}
\end{array}\right],
\end{equation}
for any $y\neq0$.
Next, for any $y\neq0$,
we conjugate~\eqref{eq.pf of EITFF(11,11,3) 7} by the $3\times 3$ diagonal unitary matrix $\bfD_y$ whose $(1,1)$, $(2,2)$ and $(3,3)$ entries are $1$, \smash{$\frac1{\sqrt{11}}\overline{\omega z_{1,y}}$} and \smash{$\frac1{\sqrt{11}}\omega z_{1,y}$}, respectively.
In particular, the $(3,2)$ entry of the resulting matrix is
\begin{multline*}
(\tfrac{3\sqrt{5}}{11}\bfD_y^{}\bfM_y^{}\bfD_y^*)(3,2)
=\tfrac{3\sqrt{5}}{11}\bfD_y(3,3)(\bfM_y^{})(3,1)\overline{\bfD_y(2,2)}\\
=\tfrac1{2\sqrt{5}}\tfrac1{\sqrt{11}}\omega z_{1,y}\overline{(c^2+s^2\omega^2)z_{2,y}}\tfrac1{\sqrt{11}}\omega z_{1,y}
=\tfrac1{2\sqrt{5}}(c^2\omega^2+s^2)\tfrac{z_{1,y}^2}{z_{2,y}}.
\end{multline*}
From this, it follows that
\begin{equation}
\label{eq.pf of EITFF(11,11,3) 8}
\tfrac{3\sqrt{5}}{11}\bfD_y^{}\bfM_y^{}\bfD_y^*
=\tfrac{1}{2\sqrt{5}}\left[\begin{array}{ccc}
2&
\sqrt{8}[c(\tfrac{y}{11})+\rmi s]&
\sqrt{8}[c(\tfrac{y}{11})-\rmi s]\smallskip\\
\sqrt{8}[c(\tfrac{y}{11})+\rmi s]&
~-1-2cs\imag(\omega)z_{5,y}~&
\overline{(c^2\omega^2+s^2)\tfrac{z_{1,y}^2}{z_{2,y}}}\smallskip\\
\sqrt{8}[c(\tfrac{y}{11})-\rmi s]&
(c^2\omega^2+s^2)\tfrac{z_{1,y}^2}{z_{2,y}}&
\overline{-1-2cs\imag(\omega)z_{5,y}}
\end{array}\right],
\end{equation}
for any $y\neq0$.
To simplify this even further,
for each $y\neq0$,
we multiply~\eqref{eq.pf of EITFF(11,11,3) 8} on both the left and right by the $3\times 3$ diagonal unitary matrix $\bfDelta_y$ whose $(1,1)$, $(2,2)$ and $(3,3)$ entries are $1$, $c(\tfrac{y}{11})-\rmi s$ and $c(\tfrac{y}{11})+\rmi s$, respectively.
(We caution that $\bfDelta_y$ is not self-adjoint, and so this is not conjugation by $\bfDelta_y$, but nevertheless preserves the property of being unitary.)
In particular, since $[c(\tfrac{y}{11})\pm\rmi s]^{-1}=c(\tfrac{y}{11})\mp\rmi s$ and $[c(\tfrac{y}{11})\pm\rmi s]^2=c^2-s^2\pm2\rmi cs(\tfrac{y}{11})$ for all $y\neq0$,
this yields
\begin{equation}
\label{eq.pf of EITFF(11,11,3) 9}
\tfrac{3\sqrt{5}}{11}\bfDelta_y^{}\bfD_y^{}\bfM_y^{}\bfD_y^*\bfDelta_y^{}
=\tfrac{1}{\sqrt{5}}\left[\begin{array}{ccc}
1&
\sqrt{2}&
\sqrt{2}\smallskip\\
\sqrt{2}&
\frac{-1-2cs\imag(\omega)z_{5,y}}{2[c^2-s^2+2\rmi cs(\tfrac{y}{11})]}&
\overline{\frac{(c^2\omega^2+s^2)}{2}\tfrac{z_{1,y}^2}{z_{2,y}}}\smallskip\\
\sqrt{2}&
\frac{(c^2\omega^2+s^2)}{2}\tfrac{z_{1,y}^2}{z_{2,y}}&
\overline{\frac{-1-2cs\imag(\omega)z_{5,y}}{2[c^2-s^2+2\rmi cs(\tfrac{y}{11})]}}
\end{array}\right],
\end{equation}
for any $y\neq0$.
To summarize,
to show that our harmonic $\TFF(11,11,3)$ here is an EITFF,
it suffices to show that~\eqref{eq.pf of EITFF(11,11,3) 9} is a unitary matrix for all $y\neq0$.
We will prove something stronger,
namely that for the appropriate choices of $\chi$ and $\theta$,
\eqref{eq.pf of EITFF(11,11,3) 9} is some fixed real-symmetric orthogonal matrix regardless of $y\neq0$, namely that
\begin{equation}
\label{eq.pf of EITFF(11,11,3) 10}
\tfrac{3\sqrt{5}}{11}\bfDelta_y^{}\bfD_y^{}\bfM_y^{}\bfD_y^*\bfDelta_y^{}
:=\tfrac{1}{\sqrt{5}}\left[\begin{array}{ccc}
1&
\sqrt{2}&
\sqrt{2}\smallskip\\
\sqrt{2}&
\frac{-1+\sqrt{5}}{2}&
\frac{-1-\sqrt{5}}{2}\smallskip\\
\sqrt{2}&
\frac{-1-\sqrt{5}}{2}&
\frac{-1+\sqrt{5}}{2}
\end{array}\right]
\end{equation}
for all $y\neq0$.
Comparing~\eqref{eq.pf of EITFF(11,11,3) 9} and~\eqref{eq.pf of EITFF(11,11,3) 10},
it suffices to show that for some $\chi$ and $\theta$,
the following two equations hold for all $y\neq0$:
\begin{align}
\label{eq.pf of EITFF(11,11,3) 11}
-1-2cs\imag(\omega)z_{5,y}
&=(-1+\sqrt{5})[c^2-s^2+2\rmi cs(\tfrac{y}{11})],\\
\nonumber
(c^2\omega^2+s^2)z_{1,y}^2
&=(-1-\sqrt{5})z_{2,y}.
\end{align}
(To be clear, \smash{$z_{n,y}:=\ip{\gamma_y}{\chi^n}_{\times}$} depends on one's choice of $\chi$,
and thus so does \smash{$\omega=\frac{z_{6,1}}{z_{1,1}}$}.)
We choose $\theta=\frac{2\pi}{5}$, implying $c=\cos(\frac{2\pi}{5})=\frac14(\sqrt{5}-1)$, \smash{$s=\sin(\frac{2\pi}{5})=\tfrac14\sqrt{10+2\sqrt{5}}$} and also
\begin{equation}
\label{eq.pf of EITFF(11,11,3) 12}
c^2=\tfrac1{8}(3-\sqrt{5}),
\quad
s^2=\tfrac1{8}(5+\sqrt{5}),
\quad
(-1+\sqrt{5})(c^2-s^2)
=(-1+\sqrt{5})\tfrac14(-1-\sqrt{5})
=-1.
\end{equation}
Simplifying~\eqref{eq.pf of EITFF(11,11,3) 11} with~\eqref{eq.pf of EITFF(11,11,3) 12},
we see that it suffices to show that for some $\chi$,
\begin{align*}
-\imag(\omega)z_{5,y}
&=(-1+\sqrt{5})\rmi(\tfrac{y}{11}),\\
\tfrac18[(3-\sqrt{5})\omega^2+(5+\sqrt{5})]z_{1,y}^2
&=(-1-\sqrt{5})z_{2,y},
\end{align*}
for all $y\neq0$.
Now note that for any $y\neq0$, \eqref{eq.pf of EITFF(11,11,3) 6} gives
\begin{equation*}
z_{5,y}=\overline{\chi^5(y)}z_{5,1}=(\tfrac{y}{11})z_{5,1},
\quad
z_{1,y}^2=[\overline{\chi(y)}z_{1,1}]^2=\overline{\chi^2(y)}z_{1,1}^2,
\quad
z_{2,y}=\overline{\chi^2(y)}z_{2,1}.
\end{equation*}
Using this to simplify the previous equations reveals that it suffices to prove that the following two equations hold for some choice of $\chi$;
notably, these equations are independent of $y$:
\begin{align}
\label{eq.pf of EITFF(11,11,3) 13}
-\imag(\omega)z_{5,1}
&=(-1+\sqrt{5})\rmi,\\
\label{eq.pf of EITFF(11,11,3) 14}
\tfrac18[(3-\sqrt{5})\omega^2+(5+\sqrt{5})]z_{1,1}^2
&=(-1-\sqrt{5})z_{2,1}.
\end{align}
It will turn out that~\eqref{eq.pf of EITFF(11,11,3) 13} and~\eqref{eq.pf of EITFF(11,11,3) 14} hold, for example,
for the standard generator $\chi$ of $\widehat{\bbF}_{11}^\times$ that arises from the fact that $2$ is a generator for $\bbF_{11}^\times$,
namely for \smash{$\chi(2^m):=\exp(\frac{2\pi\rmi m}{10})$} for all $m\in\bbZ$.
Assume this for the moment.
Then~\eqref{eq.pf of EITFF(11,11,3) 10} holds for all $y\neq0$,
and so by Theorem~\ref{thm.EITFF}(a) the resulting harmonic $\TFF(11,11,3)$ is an EITFF.
Remarkably, this $\EITFF(11,11,3)$ is real, as we now explain.
Borrowing~\eqref{eq.pf of harmonic EITFF characterization 1} from the proof of Theorem~\ref{thm.EITFF},
we see that the isometries $\set{\bfPhi_y}_{y\in\bbF_{11}}$ of our harmonic $\EITFF(11,11,3)$ satisfy $\bfPhi_{y_1}^*\bfPhi_{y_2}^{}
=\tfrac{3}{11}\bfM_{y_1-y_2}$ for all $y_1,y_2\in\bbF_{11}$.
Moreover, for any $y\in\bbF_{11}$,
\eqref{eq.pf of EITFF(11,11,3) 3} and~\eqref{eq.pf of EITFF(11,11,3) 5} imply that there exists $d\in\bbR$ and $e,f,g,h\in\bbC$ such that
\begin{equation*}
\bfM_y
=\left[\begin{array}{ccc}
d&\overline{e}&e\\
f&g&h\\
\overline{f}&\overline{h}&\overline{g}
\end{array}\right].
\end{equation*}
Conjugating any matrix of this form by
$\bfU:=\frac1{\sqrt{2}}
\left[\begin{array}{ccr}
\sqrt{2}&0&0\\
0&1&1\\
0&\rmi&-\rmi
\end{array}\right]$ yields a real matrix:
\begin{equation*}
\tfrac1{\sqrt{2}}
\left[\begin{array}{ccr}
\sqrt{2}&0&0\\
0&1&1\\
0&\rmi&-\rmi
\end{array}\right]
\left[\begin{array}{ccc}
d&\overline{e}&e\\
f&g&h\\
\overline{f}&\overline{h}&\overline{g}
\end{array}\right]
\tfrac1{\sqrt{2}}
\left[\begin{array}{ccr}
\sqrt{2}&0&0\\
0&1&-\rmi\\
0&1&\rmi
\end{array}\right]
=\left[\begin{array}{ccc}
d&\sqrt{2}\real(e)&-\sqrt{2}\imag(e)\\
\phantom{-}\sqrt{2}\real(f)&\phantom{-}\real(g)+\real(h)&\imag(g)-\imag(h)\\
-\sqrt{2}\imag(f)&-\imag(g)-\imag(h)&\real(g)-\real(h)
\end{array}\right].
\end{equation*}
Thus, $\set{\bfPhi_y\bfU^*}_{y\in\bbF_{11}}$ is an (alternative) sequence of isometries for our harmonic $\EITFF(11,11,3)$ whose cross-Gram matrices are real,
having $(\bfPhi_{y_1}\bfU^*)^*(\bfPhi_{y_2}^{}\bfU^*)
=\bfU\bfPhi_{y_1}^*\bfPhi_{y_2}^{}\bfU^*
=\tfrac{3}{11}\bfU\bfM_{y_1-y_2}\bfU^*$
for any $y_1,y_2\in\bbF_{11}$,
implying this harmonic $\EITFF(11,11,3)$ is itself real.
Alternatively but equivalently, the interested reader can verify that the alternate projections  $\set{\bfU\bfP_x\bfU^*}_{x\in\bbF_{11}}$ satisfy the condition of Theorem~\ref{thm.EITFF}(c),
despite the fact that the original projections $\set{\bfP_x}_{x\in\bbF_{11}}$ do not.

It remains to prove~\eqref{eq.pf of EITFF(11,11,3) 13} and \eqref{eq.pf of EITFF(11,11,3) 14}.
Recall that we have now chosen a particular generator $\chi$ of the Pontryagin dual of $\bbF_{11}^\times$, defined by \smash{$\chi(2^m):=\exp(\frac{2\pi\rmi m}{10})$} for all $m\in\bbZ$.
That is, letting $a:=\exp(\frac{2\pi\rmi}{5})$, $\chi$ maps $\bbF_{11}^\times$ into $\bbT$ as follows:
\begin{equation*}
\set{1,2,3,4,5,6,7,8,9,10}\mapsto\set{1,-a^3,a^4,a,a^2,-a^2,-a,-a^4,a^3,-1}.
\end{equation*}
Further recall that we have associated an additive character $\gamma_y$ of $\bbF_{11}$ to any $y\in\bbF_{11}$, defined by $\gamma_y(x):=\exp(\frac{2\pi\rmi xy}{11})=b^{xy}$ for all $x\in\bbF_{11}$, where $b:=\exp(\frac{2\pi\rmi}{11})$.
For any integer $n$ we thus have
\begin{align*}
z_{n,1}
&:=\ip{\gamma_1}{\chi^n}_\times
=\sum_{x\in\bbF_{11}^\times}\overline{\gamma(x)}\chi^n(x)\\
&=[b^{10}+(-1)^nb]
+a^n[b^7+(-1)^n b^4]
+a^{2n}[b^6+(-1)^n b^5]
+a^{3n}[b^2+(-1)b^9]
+a^{4n}[b^8+(-1)b^3].
\end{align*}
In particular, letting $n$ be $1$, $2$, $5$ and $6$ gives
\begin{align}
\label{eq.z1}
z_{1,1}
&=(b^{10}-b)
+a(b^7-b^4)
+a^2(b^6-b^5)
+a^3(b^2-b^9)
+a^4(b^8-b^3),\\
\label{eq.z2}
z_{2,1}
&=(b^{10}+b)
+a^2(b^7+b^4)
+a^4(b^6+b^5)
+a(b^2+b^9)
+a^3(b^8+b^3),\\
\label{eq.z5}
z_{5,1}
&=(b^{10}-b)
+(b^7-b^4)
+(b^6-b^5)
+(b^2-b^9)
+(b^8-b^3),\\
\label{eq.z6}
z_{6,1}
&=(b^{10}+b)
+a(b^7+b^4)
+a^2(b^6+b^5)
+a^3(b^2+b^9)
+a^4(b^8+b^3),
\end{align}
respectively.
Our goal here is to prove that these four quantities satisfy two particular algebraic relationships, namely~\eqref{eq.pf of EITFF(11,11,3) 13} and~\eqref{eq.pf of EITFF(11,11,3) 14}.
From~\eqref{eq.Gauss sum magnitude}, recall that these four numbers have modulus $\sqrt{11}$.
Since clearly $\overline{z_{5,1}}=-z_{5,1}$, this implies $z_{5,1}\in\set{\sqrt{11}\rmi,-\sqrt{11}\rmi}$.
To proceed, we note that in fact, $z_{5,1}=-\sqrt{11}\rmi$:
since $11\equiv 3\bmod 4$, the classical theory of quadratic Gauss sums gives
\begin{equation}
\label{eq.appendix 1 1}
\sqrt{11}\rmi=\sum_{x\in\bbF_{11}}\exp(\tfrac{2\pi\rmi x^2}{11})
=1+2(b+b^3+b^4+b^5+b^9);
\end{equation}
subtracting $0=\sum_{n=0}^{10}b^n$ from this,
and comparing the result against~\eqref{eq.z5} gives $\sqrt{11}\rmi=\overline{z_{5,1}}$.
As such, \eqref{eq.pf of EITFF(11,11,3) 13} equates to having \smash{$\imag(\omega)=\frac{-1+\sqrt{5}}{\sqrt{11}}$},
where $\omega:=\frac{z_{6,1}}{z_{1,1}}$.
We now claim something which both immediately implies this fact---and so~\eqref{eq.pf of EITFF(11,11,3) 13}---and will help us prove~\eqref{eq.pf of EITFF(11,11,3) 14},
namely that
\begin{equation}
\label{eq.appendix 1 2}
\sqrt{11}\rmi\tfrac{z_{6,1}}{z_{1,1}}
=\sqrt{11}\rmi\omega
=(1-\sqrt{5})+\rmi\sqrt{5+2\sqrt{5}}
=1+2a^2-2a^4.
\end{equation}
The right-hand equality of~\eqref{eq.appendix 1 2} involves trigonometry: since $\exp(\frac{2\pi\rmi}{5})=a$ and $\exp(\frac{2\pi\rmi}{10})=-a^3$,
\begin{gather}
\label{eq.appendix 1 3}
\tfrac12(a+a^4)=\cos(\tfrac{2\pi}{5})=\tfrac{1}{4}(-1+\sqrt{5}),
\qquad
\tfrac1{2\rmi}(a-a^4)=\sin(\tfrac{2\pi}{5})=\tfrac{1}{4}\sqrt{10+2\sqrt{5}},\\
\nonumber
\tfrac12(-a^3-a^2)=\cos(\tfrac{2\pi}{10})=\tfrac1{\sqrt{8}}\sqrt{3+\sqrt{5}},
\quad
\tfrac1{2\rmi}(-a^3+a^2)=\sin(\tfrac{2\pi}{10})=\tfrac14\sqrt{10-2\sqrt{5}},
\end{gather}
implying
\begin{equation*}
\sqrt{5+2\sqrt{5}}
=\sqrt{10+2\sqrt{5}}\,\tfrac1{\sqrt{8}}\sqrt{3+\sqrt{5}}
=4\sin(\tfrac{2\pi}{5})\cos(\tfrac{2\pi}{10})
=\tfrac1{\rmi}(a-a^4)(-a^3-a^2);
\end{equation*}
combining these identities with the fact that $0=\sum_{n=0}^4 a^n$ gives
\begin{equation*}
(1-\sqrt{5})+\rmi\sqrt{5+2\sqrt{5}}
=-2(a+a^4)+(a-a^4)(-a^3-a^2)
=-a+a^2-a^3-3a^4
=1+2a^2-2a^4,
\end{equation*}
as claimed.
In light of this,
\eqref{eq.appendix 1 1} and the fact that $\omega:=\frac{z_{6,1}}{z_{1,1}}$,
\eqref{eq.appendix 1 2} holds if and only if
\begin{equation}
\label{eq.appendix 1 4}
[1+2(b+b^3+b^4+b^5+b^9)]z_{6,1}-(1+2a^2-2a^4)z_{1,1}
=0.
\end{equation}
By~\eqref{eq.z1} and~\eqref{eq.z6},
all the quantities in~\eqref{eq.appendix 1 4} lie in the field $\bbQ[\exp(\frac{2\pi\rmi}{55})]$,
meaning this equation can be verified via cyclotomy.
In detail, by~\eqref{eq.z1}
\begin{align*}
&(1+2a^2-2a^4)z_{1,1}\\
&\qquad=(1+2a^2-2a^4)[(b^{10}-b)+a(b^7-b^4)+a^2(b^6-b^5)+a^3(b^2-b^9)+a^4(b^8-b^3)]\\
&\qquad=(-b+2b^2+2b^4-2b^7-2b^9+b^{10})\\
&\qquad\qquad+a(-2b^3-b^4+2b^5-2b^6+b^7+2b^8)\\
&\qquad\qquad+a^2(-2b-2b^2-b^5+b^6+2b^9+2b^{10})\\
&\qquad\qquad+a^3(b^2+2b^3-2b^4+2b^7-2b^8-b^9)\\
&\qquad\qquad+a^4(2b-b^3-2b^5+2b^6+b^8-2b^{10}),
\end{align*}
while by~\eqref{eq.z6},
\begin{align*}
&[1+2(b+b^3+b^4+b^5+b^9)]z_{6,1}\\
&=[1+2(b+b^3+b^4+b^5+b^9)]
[(b^{10}+b)+a(b^7+b^4)+a^2(b^6+b^5)+a^3(b^2+b^9)+a^4(b^8+b^3)]\\
&=(2+b+4b^2+2b^3+4b^4+2b^5+2b^6+2b^8+3b^{10})\\
&\qquad+a(2+2b+2b^2+b^4+4b^5+3b^7+4b^8+2b^9+2b^{10})\\
&\qquad+a^2(2+2b^3+2b^4+b^5+3b^6+2b^7+2b^8+4b^9+4b^{10})\\
&\qquad+a^3(2+2b+3b^2+4b^3+2b^5+2b^6+4b^7+b^9+2b^{10})\\
&\qquad+a^4(2+4b+2b^2+b^3+2b^4+4b^6+2b^7+3b^8+2b^9),
\end{align*}
and subtracting the former from the latter gives exactly
\begin{equation*}
[1+2(b+b^3+b^4+b^5+b^9)]z_{6,1}-(1+2a^2-2a^4)z_{1,1}
=2\biggparen{\,\sum_{n=0}^4a^n}\biggparen{\,\sum_{n=0}^{10}b^n}
=0,
\end{equation*}
namely~\eqref{eq.appendix 1 4}.
Having~\eqref{eq.appendix 1 2} and so~\eqref{eq.pf of EITFF(11,11,3) 13},
we turn to the proof of~\eqref{eq.pf of EITFF(11,11,3) 14},
which equates to having
\begin{equation}
\label{eq.appendix 1 5}
(-1+\sqrt{5})[(3-\sqrt{5})\omega^2+(5+\sqrt{5})]z_{1,1}^2+32z_{2,1}=0.
\end{equation}
Here,
squaring~\eqref{eq.appendix 1 2} gives
$-11\omega^2
=(1+2a^2-2a^4)^2
=1-8a+4a^2+4a^3
=-3-12a-4a^4$,
while~\eqref{eq.appendix 1 3} gives $\sqrt{5}=1+2a+2a^4$,
and so the coefficient of $z_{1,1}^2$ in~\eqref{eq.appendix 1 5} can be expressed as:
\begin{multline*}
(-1+\sqrt{5})[(3-\sqrt{5})\omega^2+(5+\sqrt{5})]\\
=(2a+2a^4)[(2-2a-2a^4)\tfrac1{11}(3+12a+4a^4)+(6+2a+2a^4)]
=\tfrac{32}{11}(2-a-2a^3).
\end{multline*}
Thus, \eqref{eq.pf of EITFF(11,11,3) 14} equates to~\eqref{eq.appendix 1 5} which equates to having
\begin{equation}
\label{eq.appendix 1 6}
(2-a-2a^3)z_{1,1}^2+11z_{2,1}=0.
\end{equation}
Here, directly squaring~\eqref{eq.z1} gives
\begin{align*}
z_{1,1}^2
&=(-2-2b+b^2+2b^3+2b^8+b^9-2b^{10})\\
&\qquad+a(-2-2b^2+b^4+2b^5+2b^6+b^7-2b^9)\\
&\qquad+a^2(-2+2b+b^3-2b^4-2b^7+b^8+2b^{10})\\
&\qquad+a^3(-2+2b^2-2b^3+b^5+b^6-2b^8+2b^9)\\
&\qquad+a^4(-2+b+2b^4-2b^5-2b^6+2b^7+b^{10}).
\end{align*}
Multiplying this by $2-a-2a^3$ gives
\begin{align*}
(2-a-2a^3)z_{1,1}^2
&=(2-9b+2b^2+2b^3+2b^4+2b^5+2b^6+2b^7+2b^8+2b^9-9b^{10})\\
&\qquad+a(2+2b-9b^2+2b^3+2b^4+2b^5+2b^6+2b^7+2b^8-9b^9+2b^{10})\\
&\qquad+a^2(2+2b+2b^2+2b^3-9b^4+2b^5+2b^6-9b^7+2b^8+2b^9+2b^{10})\\
&\qquad+a^3(2+2b+2b^2-9b^3+2b^4+2b^5+2b^6+2b^7-9b^8+2b^9+2b^{10})\\
&\qquad+a^4(2+2b+2b^2+2b^3+2b^4-9b^5-9b^6+2b^7+2b^8+2b^9+2b^{10}),
\end{align*}
and comparing this to~\eqref{eq.z2} gives
\smash{$\displaystyle(2-a-2a^3)z_{1,1}^2+11z_{2,1}
=2\sum_{m=0}^4a^m\sum_{n=0}^{10}b^n
=0$},
namely~\eqref{eq.appendix 1 6} and so~\eqref{eq.pf of EITFF(11,11,3) 14}.
\end{proof}

Despite some effort, we have not yet been able to generalize the construction of Theorem~\ref{thm.EITFF(11,11,3)} in a way that yields other EITFFs.
We leave this problem for future work,
but do note one remarkable aspect of its construction that went unmentioned in the above proof:
for any $y\neq0$,
it is straightforward to explicitly compute \smash{$\tfrac{3\sqrt{5}}{11}\bfD_y^{}\bfM_y^{}\bfD_y^*$} from~\eqref{eq.pf of EITFF(11,11,3) 10},
and to then in turn show that the cube of this matrix is $-\bfI$.
Equivalently, for any $y\neq0$,
we have that $\bfM_y$ has trace zero, and has a (real) negative eigenvalue.
Whereas the former is a consequence of Theorem~\ref{thm.EITFF(11,11,3)}(d) and the fact that $\rank(\bfP_x)=1$ for all $x\in\bbF_{11}$, the latter seems truly remarkable.

\end{document}